\documentclass[11pt,letter,english]{article}
%\NeedsTeXFormat{LaTeX2e}
\usepackage{enumerate}
\usepackage[left=1in,right=1in,bottom=1in,top=1in]{geometry}
\usepackage{xspace}

\usepackage{hyperref}
\hypersetup{
    colorlinks=true,
    linkcolor=blue,
    filecolor=blue,      
    urlcolor=blue,
    citecolor=blue,
}
\usepackage[shortalphabetic]{amsrefs}
\usepackage{amsmath}
\usepackage{amsthm}
\usepackage{amssymb}
\usepackage{wasysym}  %\Leftcircle
\usepackage{xcolor}
\usepackage{tikz}
\usetikzlibrary{calc}
\usetikzlibrary{positioning}
\usepackage{tikz-cd}
\usepackage{titling,changepage}

\usepackage[all]{xy}
\newcommand{\GpI}{\ensuremath{\mathsf{GpI}}\xspace}
\newcommand{\GphI}{\ensuremath{\mathsf{GphI}}\xspace}
\newcommand{\NP}{\ensuremath{\mathsf{NP}}\xspace}
\newcommand{\coNP}{\ensuremath{\mathsf{coNP}}\xspace}

\newcommand{\ActComp}{\ensuremath{\mathsf{ActComp}}\xspace}
\newcommand{\CohIso}{\ensuremath{\mathsf{CohoIso}}\xspace}
\newcommand{\cc}[1]{\mathsf{#1}\xspace}

\usetikzlibrary{arrows}

\numberwithin{figure}{section}

\numberwithin{equation}{section}
\newtheorem{thm}[equation]{Theorem}
\newtheorem{mainthm}{Theorem}

\newtheorem*{thm-width-and-color}{Theorem~\ref{thm:width-and-color}}
\newtheorem*{thm-central}{Theorem~\ref{thm:central}}
\newtheorem*{thm*}{Theorem}
\newtheorem{lem}[equation]{Lemma}
\newtheorem{prop}[equation]{Proposition}
\newtheorem{cor}[equation]{Corollary}
\newtheorem{lemma}[equation]{Lemma}

\newtheorem{obs}[equation]{Observation}

\theoremstyle{definition}
\newtheorem{defn}[equation]{Definition}

\theoremstyle{remark}
\newtheorem{remark}[equation]{Remark}

\DeclareMathOperator{\End}{End}
\DeclareMathOperator{\Hom}{Hom}
\DeclareMathOperator{\Rad}{Rad}
\DeclareMathOperator{\Aut}{Aut}

\DeclareMathOperator{\isom}{Isom}
\DeclareMathOperator{\Isom}{Isom}
\DeclareMathOperator{\Adj}{Adj}

\DeclareMathOperator{\GL}{GL}

\DeclareMathOperator{\GammaL}{\Gamma L}

\DeclareMathOperator{\Gal}{Gal}

\DeclareMathOperator{\poly}{poly}

\DeclareMathOperator{\Norm}{Norm}

\newcommand{\smallmat}[1]{\left(\begin{array}{cc} #1 \end{array}\right)}

\DeclareMathOperator{\Soc}{Soc}

\DeclareMathOperator{\Sym}{Sym}

\newcommand{\bmto}{\rightarrowtail}
\newcommand{\F}{\mathbb{F}}

\newcommand{\Z}{\mathbb{Z}}
\newcommand{\N}{\mathbb{N}}
\newcommand{\la}{\langle}
\newcommand{\ra}{\rangle}
\newcommand{\pseudo}{\Psi\hspace*{-1mm}\isom}
\newcommand{\pAut}{\Psi\hspace*{-1mm}\Aut}

\renewcommand{\leq}{\leqslant}
\renewcommand{\geq}{\geqslant}

% % % Youming's commands % % %

\newcommand{\cG}{\mathcal{G}}
\newcommand{\cE}{\mathcal{E}}
\newcommand{\cH}{\mathcal{H}}

\newcommand{\cV}{\mathcal{V}}

\newcommand{\bA}{\mathbf{A}}

\newcommand{\bB}{\mathbf{B}}

\newcommand{\bG}{\mathbf{G}}
\newcommand{\bH}{\mathbf{H}}

% % % Youming's commands % % %

\newcommand{\width}{\text{width}}

\newcommand{\colorratio}{\text{color-ratio}}

\newcommand{\LinER}{\mathrm{LinER}}
\newcommand{\Mat}{\mathrm{M}}
\newcommand{\gbinom}[3]{{\genfrac{[}{]}{0pt}{}{#1}{#2}}_{#3}}

 \newcommand{\joshsay}[1]{} %{\textcolor{red}{Josh says: #1}}
 \newcommand{\jamessay}[1]{} %{\textcolor{blue}{James says: #1}}
 \newcommand{\petesay}[1]{} %{\textcolor{brown}{Pete says: #1}}
 
 \newcommand{\yinansay}[1]{} %{\textcolor{orange}{Yinan says: #1}}
 
 %=== Pseudocode package, it can be changed to suite, but seems nice to James.==
\usepackage{algorithm}
\usepackage[noend]{algpseudocode}

\algblock[TryCatchFinally]{try}{endtry}
\algcblock[TryCatchFinally]{TryCatchFinally}{finally}{endtry}
\algcblockdefx[TryCatchFinally]{TryCatchFinally}{catch}{endtry}
	[1]{\textbf{catch} #1}
	{\textbf{end try}}

\makeatletter
\def\BState{\State\hskip-\ALG@thistlm}
\makeatother

\begin{document}
\title{\vspace{-0.75in}Incorporating Weisfeiler--Leman into algorithms for group isomorphism}
\author{Peter A. Brooksbank\thanks{
	Department of Mathematics,
	Bucknell University,
	Lewisburg, PA 17837,
	United States. 
{\tt pbrooksb@bucknell.edu}}
\and
Joshua A. Grochow\thanks{
	Departments of Computer Science and Mathematics,
	University of Colorado---Boulder,
	Boulder, CO 80309-0430,
	United States.
{\tt jgrochow@colorado.edu}}
\and 
Yinan Li\thanks{
CWI and QuSoft, Science Park 123, 1098XG Amsterdam, 
Netherlands. {\tt Yinan.Li@cwi.nl}.
}
\and
Youming Qiao\thanks{
	Center for Quantum Software and Information,
	University of Technology Sydney,
	Ultimo NSW 2007,
	Australia. {\tt Youming.Qiao@uts.edu.au}}
\and
James B. Wilson\thanks{
	Department of Mathematics,
	Colorado State University,
	Fort Collins, CO 80523,
	United States. {\tt James.Wilson@ColoState.Edu}}
	}

\date{\today}

\begin{titlingpage}
\usethanksrule
\changetext{0.0in}{0.5in}{}{-0.25in}{}
\maketitle

    \begin{abstract}
    % !TEX root = WL.tex

In this paper we combine many of the standard and more recent algebraic techniques for testing isomorphism of finite groups (\GpI) with combinatorial techniques that have typically been applied to Graph Isomorphism.  We show how to combine several state-of-the-art \GpI algorithms for specific group classes into an algorithm for general \GpI, namely: %code equivalence (Babai--Codenotti--Qiao, ICALP '12), 
composition series isomorphism (Rosenbaum--Wagner, \emph{Theoret. Comp. Sci.}, 2015; Luks, 2015), recursively-refineable filters (Wilson, \emph{J. Group Theory}, 2013), and low-genus \GpI (Brooksbank--Maglione--Wilson, \emph{J. Algebra}, 2017). Recursively-refineable filters---a generalization of subgroup series---form the skeleton of this framework, and we refine our filter by building a hypergraph encoding low-genus quotients, to which we then apply a hypergraph variant of the $k$-dimensional Weisfeiler--Leman technique. Our technique is flexible enough to readily incorporate additional hypergraph invariants or additional characteristic subgroups as they emerge.

After introducing this general technique, we prove three main results about its complexity:
\begin{itemize}
\item Let the \emph{width} of a filter be the dimension of the largest quotient of 
two adjacent subgroups of the filter; the \emph{color-ratio} of our hypergraph 
captures how much smaller a color class is compared to the layer of the filter it 
is coloring. When we use genus-$g$ quotients and hypergraph $k$-WL, we can solve 
isomorphism for solvable groups of order $n$ in time
\[
\left(\frac{n}{\colorratio}\right)^{\text{width}} \poly(n) + n^{O(gk)}
\]
In the ``base case'', where the solvable radical is itself low-genus and the 
semisimple part acts trivially, we can get a better guaranteed running time of 
$n^{O(\log \log n)}$, by combining cohomological techniques (Grochow--Qiao, CCC 
'14, \emph{SIAM J. Comput.}, 2017), code equivalence 
(Babai--Codenotti--Grochow--Qiao, SODA '11), and low-genus isomorphism ([BMW], 
\emph{ibid.}).

\item We introduce a new random model of finite groups. Unlike previous models, we prove that our model has good coverage, in that it produces a wide variety of groups, and in particular a number of distinct isomorphism types that is logarithmically equivalent to the number of all isomorphism types. In this random model, we show that our filter-and-1-WL refinement method results in constant \emph{average} width (the above result uses max width). 

\item For $p$-groups of class 2 and exponent $p$---widely believed to be the hardest cases of \GpI, and where we also expect the above techniques to get stuck---we improve on the average-case algorithm of Li--Qiao (FOCS '17). Our new algorithm is simpler and applies to a larger fraction of random $p$-groups of class 2 and exponent $p$. The previous algorithm was based on a linear-algebraic analogue of the individualize-and-refine technique; our new algorithm combines that technique with concepts from isomorphism of low-genus groups. We also implement this algorithm in {\sc MAGMA} and show that in experiments it improves over the default (brute force) algorithm for this problem.
\end{itemize}

    \end{abstract}

\pagenumbering{gobble}
\end{titlingpage}

\newpage
\pagenumbering{arabic}

%%%%%
\section{Introduction}
\label{sec:intro}
% !TEX root = WL.tex
The problem of deciding whether two finite groups are isomorphic (\GpI) has a century-old 
history that straddles several fields, including topology, computational algebra, and 
computer science.  It also has several unusual variations in complexity.  For example, the dense
input model---where groups are specified by their multiplication \emph{``Cayley''} tables,
has quasi-polynomial time complexity and it reduces \emph{to} the better known \emph{Graph Isomorphism} problem (\GphI); cf. \cite{Luks93}*{Section~10}.    Meanwhile,
\GpI for a sparse input model for groups, such as by permutations, matrices, or black-box
groups, reduces \emph{from} \GphI; cf. \citelist{\cite{HL}\cite{Lip}}.  At present sparse \GpI is in $\cc{\Sigma_2 P}$ 
and it is not known to lie in either \NP nor \coNP; see \cite{BS84}*{Propostion~4.8, Corollary~4.9}.  
In fact in the model of groups
input by generators and relations, Adian and Rabin famously showed \GpI is 
undecideable \cites{Adi57,Rab58}.

Following L. Babai's
breakthrough proof that \GphI is in quasi-polynomial time \cite{Bab16}, dense 
(Cayley table) \GpI is 
now an essential bottleneck to improving graph isomorphism.  So while the methods
we explore here can be applied in both the dense and the sparse models of \GpI, 
we concentrate our complexity claims on the dense case. In 
particular, when we say 
polynomial-time, we mean polynomial time in the group order unless specified 
otherwise. Our contribution here is to expose from within the structure of groups, graph theoretic properties
which relate to the difficulty of solving \GpI.  We expect this to facilitate the systematic use of combinatorial 
isomorphism techniques within \GpI  that interplay with existing algebraic strategies.

We introduce a colored (hyper-)graph based on an algebraic data structure
known as a {\em recursively-refineable filter} which identifies abelian groups and vector 
spaces layered together to form
the structure of a finite group.
Filters have been useful in several isomorphism tests~\citelist{\cite{Wilson:alpha}\cite{ Maglione:filters}\cite{Maglione:filters2}\cite{BOW}}. 
As the name suggests, filters can
be refined, and with each refinement the cost $T(n)$ of isomorphism testing decreases after
refinement to a function in $O(T(n)^{1/c})$, for some $c\geq 2$.  The more rounds
of refinement we can carry out the lower the cost of isomorphism.  Existing uses of filter refinement
find characteristic structure algebraically; our principal innovation is to add a combinatorial perspective 
to refinement. We color 
(co)dimension-$g$ subspaces of the layers of the filter using local isomorphism invariants. This parameter $g$ will be 
referred to as the {\em genus parameter}. 
The layers are in turn connected to each other according to
their position within the group, and this presents further opportunities for coloring.
With so much 
nuanced local information, the graph we associate to a group is well suited to 
individualization and refinement
techniques like the dimension-$k$ Weisfeiler--Leman procedure 
\cite{WL68,Babai79,ImmermanLander,CFI}. The 
critical work is to refine 
these graphs 
compatibly with the refinement of the filter (Theorem~\ref{thm:hypergraph}).  Thus, one maintains the relationship
between the group and graph isomorphism properties as we recursively refine.
While our methods do not apply to structures as general as semigroups
and quaisgroups, they can be adapted to other problems, such as ring isomorphism 
\cite{KS06}.

To explore the implications of this technique we introduce
a model for random finite groups.
In doing so we consider pitfalls identified in previously suggested 
models for random finite groups.  We are especially concerned with \emph{coverage}---%
the idea that we are able to easily sample from groups within natural classes 
such as non-solvable, solvable, nilpotent, and abelian---and that within each subclass
the number of isomorphism types is dense on a log scale (Theorem~\ref{thm:coverage}).
Log-scale is for now the best granularity we know for the enumeration of groups,
cf. \cite{BNV:enum}.  We then prove (Theorem~\ref{thm:random-refine}) that in our random model, genus-1 1-WL-refinement on 
average refines to a series of length $\Theta(\log n)$, which thereby achieves
the expected refinement length posed in \cite{Wilson:alpha}*{p. 876}. 
Following the refinement, the \emph{average} width of the filter is thus constant, though the cost of refinement increases. (If the maximum width were constant it would result in a polynomial-time average-case isomorphism test in our model.)

Finally within our random model there are several ``base cases'' where the 
recursive refinements become less likely, or where our analysis is inadequate.  
We demonstrate that in two of these
cases, isomorphism can be solved 
either in polynomial time  in the average-case sense (Theorem~\ref{thm:average}) or in
nearly polynomial time ($n^{O(\log \log n)}$) in the 
worst-case sense (Theorem~\ref{thm:central}).
The former also solves a related problem of average complexity of tensor 
equivalence.

Our strategy harnesses critical features of a great 
variety of existing approaches to isomorphism (code equivalence, filter
refinements, adjoint-tensor methods, bidirectional collision) and uses
Weisefeiler--Leman refinement as the top-level strategy to combine the 
various implications.  That diversity was not so much 
a plan but the result of hitting barriers and looking to the literature
for solutions.  The result, however, is a framework that is rather
flexible and is well suited to accommodate future ideas, both
algebraic and combinatorial, as featured here already.  That strength 
of course comes at a cost that the mechanics and analysis are rather involved.  
We expect that in time better analysis and simplified models will 
improve our understanding.

\subsection{The context of this work}
Much recent progress in \GpI has been had by considering special classes of 
groups; the recent papers \citelist{\cite{BMW}\cite{GQ}\cite{LGR}} survey and supplement these results. 
That has created  powerful but highly varied strategies with no obvious means of
synthesis.  Within our refinement model of computing \GpI we have the opportunity
to begin merging some of the many options that have been developed to date.  
To help explain our approach we consider examples of groups
of invertible matrices over finite fields of prime order, as graphically
communicated in Figure~\ref{fig:extend-random}.  In fact,
these examples will later evolve into the aforementioned random model for finite groups.
	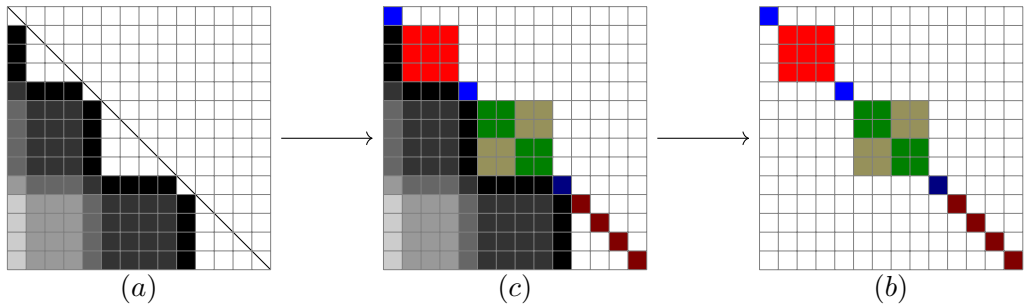
\begin{figure}[!h]
	\center
	\begin{tikzpicture}
	   	\node at (-5,-2) {$(a)$};  
		\node (a) at (-5,0) {\begin{tikzpicture}
    		% Fill the diagonal
		    \draw (0.00,3.50)  -- (3.50,0.00);
    		% gamma 1
	    	\fill[black] (0,3.25) rectangle (0.25,2.50);
		    \fill[black] (0.25,2.50) rectangle (1.00,2.25);
		    \fill[black] (1.00,2.25) rectangle (1.25,1.25);
		    \fill[black] (1.25,1.25) rectangle (2.25,1.00);
		    \fill[black] (2.25,1.00) rectangle (2.50,0);
		    % gamma 2
		    \fill[black!80!white] (0,2.50) rectangle (0.25,2.25);
		    \fill[black!80!white] (0.25,2.25) rectangle (1.00,1.25);
		    \fill[black!80!white] (1.00,1.25) rectangle (1.25,1.00);
		    \fill[black!80!white] (1.25,1.00) rectangle (2.25,0.00);
		    % gamma 3
		    \fill[black!60!white] (0,2.25) rectangle (0.25,1.25);
		    \fill[black!60!white] (0.25,1.25) rectangle (1.00,1.00);
		    \fill[black!60!white] (1.00,1.00) rectangle (1.25,0.00);
		    % gamma 4
		    \fill[black!40!white] (0,1.25) rectangle (0.25,1.00);
		    \fill[black!40!white] (0.25,1.00) rectangle (1.00,0.00);
		    % gamma 5
		    \fill[black!20!white] (0.00,1.00) rectangle (0.25,0.00);
		    \draw[step=0.25cm,gray,very thin] (0,0) grid (3.50,3.50);
    	\end{tikzpicture}};
    	\node at (0,-2) {$(c)$};  
		\node (c) at (0,0) {\begin{tikzpicture}
	    	% Fill the diagonal
	    	\draw (0.00,3.50)  -- (3.50,0.00);
		    % gamma 1
    		\fill[black] (0,3.25) rectangle (0.25,2.50);
		    \fill[black] (0.25,2.50) rectangle (1.00,2.25);
    		\fill[black] (1.00,2.25) rectangle (1.25,1.25);
		    \fill[black] (1.25,1.25) rectangle (2.25,1.00);
	    	\fill[black] (2.25,1.00) rectangle (2.50,0);
		    % gamma 2
	    	\fill[black!80!white] (0,2.50) rectangle (0.25,2.25);
    		\fill[black!80!white] (0.25,2.25) rectangle (1.00,1.25);
		    \fill[black!80!white] (1.00,1.25) rectangle (1.25,1.00);
    		\fill[black!80!white] (1.25,1.00) rectangle (2.25,0.00);
		    % gamma 3
    		\fill[black!60!white] (0,2.25) rectangle (0.25,1.25);
	    	\fill[black!60!white] (0.25,1.25) rectangle (1.00,1.00);
    		\fill[black!60!white] (1.00,1.00) rectangle (1.25,0.00);
		    % gamma 4
    		\fill[black!40!white] (0,1.25) rectangle (0.25,1.00);
		    \fill[black!40!white] (0.25,1.00) rectangle (1.00,0.00);
    		% gamma 5
		    \fill[black!20!white] (0.00,1.00) rectangle (0.25,0.00);
	    	\fill[blue] (0.00,3.25) rectangle (0.25,3.50);
    		\fill[red] (0.25,2.50) rectangle (1.00,3.25);
		    \fill[blue] (1.00,2.25) rectangle (1.25,2.50);
    		\fill[black!50!green] (1.25,1.75) rectangle (1.75,2.25);
		    \fill[black!50!yellow] (1.75,1.75) rectangle (2.25,2.25);
	    	\fill[black!50!green] (1.75,1.25) rectangle (2.25,1.75);
    		\fill[black!50!yellow] (1.25,1.25) rectangle (1.75,1.75);
		    \fill[black!50!blue] (2.25,1.00) rectangle( 2.50, 1.25);
	    	\fill[black!50!red] (2.50,0.75) rectangle( 2.75, 1.0);
    		\fill[black!50!red] (2.75,0.50) rectangle( 3.00, 0.75);
    		\fill[black!50!red] (3.00,0.25) rectangle( 3.25, 0.50);
	    	\fill[black!50!red] (3.25,0.00) rectangle( 3.50, 0.25);
    		\draw[step=0.25cm,gray,very thin] (0,0) grid (3.50,3.50);
    	\end{tikzpicture}};  
    	\node at (5,-2) {$(b)$};  
		\node (b) at (5,0) {\begin{tikzpicture}
	    	% Fill the diagonal
		    \fill[blue] (0.00,3.25) rectangle (0.25,3.50);
		    \fill[red] (0.25,2.50) rectangle (1.00,3.25);
		    \fill[blue] (1.00,2.25) rectangle (1.25,2.50);
		    \fill[black!50!green] (1.25,1.75) rectangle (1.75,2.25);
		    \fill[black!50!yellow] (1.75,1.75) rectangle (2.25,2.25);
		    \fill[black!50!green] (1.75,1.25) rectangle (2.25,1.75);
		    \fill[black!50!yellow] (1.25,1.25) rectangle (1.75,1.75);
		    \fill[black!50!blue] (2.25,1.00) rectangle( 2.50, 1.25);
		    \fill[black!50!red] (2.50,0.75) rectangle( 2.75, 1.0);
		    \fill[black!50!red] (2.75,0.50) rectangle( 3.00, 0.75);
		    \fill[black!50!red] (3.00,0.25) rectangle( 3.25, 0.50);
		    \fill[black!50!red] (3.25,0.00) rectangle( 3.50, 0.25);
    		\draw[step=0.25cm,gray,very thin] (0,0) grid (3.50,3.50);
	    \end{tikzpicture}};
	    \draw[->] (a) -- (c);
	    \draw[->] (c) -- (b);
	\end{tikzpicture}
    \caption{Diagrams of matrix groups can capture many of the well-studied examples
    of finite groups: (a) depicts a large variety of nilpotent groups; (b) depicts
    products of quasi- and almost-simple groups together with possible permutations of
    isomorphic blocks; and (c) depicts wide range of general finite groups decomposed 
    into
    smaller classes of groups.}\label{fig:extend-random}
    \end{figure}

\paragraph{First thread: connection with linear and multilinear algebras.} 
Algorithms and data structures for linear and multilinear structures are on the whole
far more evolved than counterparts for groups. This explains why 
progress for groups can be made by mapping problems into the realm of linear
and multilinear algebra.   Such a correspondence has been known for close to a century,
originating in work of Brahana \cite{Bra35} and Baer \cite{Bae}.  Consider groups 
$U$ of the following form. 
\begin{align*}
	U & \leq H(d_1,\ldots,d_{\ell};\F):=\left\{
	\begin{bmatrix} 
		I_{d_1} & a_{12} & a_{22} & \cdots \\ & I_{d_2} & a_{23} \\ 
			& & \ddots & \ddots \\ 
			& & & I_{d_{\ell}}
	\end{bmatrix} 
			~\middle|~
			a_{ij}\in M_{d_{i}\times d_j}(\F)\right\}.
\end{align*}
Figure~\ref{fig:extend-random}(a) illustrates a possible $U$.
In the creation of our random model we shall sample groups $U$ by selecting
random matrices in $H(d_1,\ldots,d_{\ell}; \F)$. A surprising necessity is that we sample
only sparse matrices. Although this might seem counter to the goals of seeding a group with lots
of entropy, we will demonstrate that groups with too much random seeding become virtually
identical (Theorem~\ref{thm:dense-random}).

As a general remark it will be helpful throughout this work to regard all groups
$U=\langle U,\cdot, ^{-1},1\rangle$ as having been enriched by the addition of a second
binary operation $[a,b]=a^{-1}b^{-1}ab$ known as \emph{commutation}.  In this way
groups behave much more like rings than they do like semigroups or quasigroups.  In
particular, $[,]$  very nearly distributes over the usual binary
operation $\cdot$ in $U$, in that $[ab,c]=b^{-1}[a,c]b[b,c]$.  That explains the link to multilinear algebra.
In the case of $U$:
\begin{align*}
	\left[
		\begin{bmatrix} 
		I_{d_1} & a_{12} & \cdots  \\ 
		& \ddots & \ddots \\ 
			& & I_{d_{\ell}} 
		\end{bmatrix},
		\begin{bmatrix} 
		I_{d_1} & b_{12} &\cdots \\
			& \ddots & \ddots \\ 
			&  & I_{d_{\ell}}
		\end{bmatrix}
	\right]	
	& = 
		\begin{bmatrix} 
		I_{d_1} & a_{12}+b_{12} & a_{22}+b_{22}+a_{12}b_{23}-b_{12}a_{23} & \cdots \\ 
			& \ddots & \ddots \\ 
			& & I_{d_{\ell}}
		\end{bmatrix}.
\end{align*}
Stripping away the addition leaves us to compare bilinear (and later multilinear)
products such as $(a_{ij},b_{jk})\mapsto a_{ij}b_{jk}$, under base changes.  We
treat these as functions $\F^a\times \F^b\bmto \F^c$, where $\bmto$ indicates the function
is multilinear.  Equivalently, we must study the orbits of groups $\GL(a,\F)\times \GL(b,\F)\times \GL(c,\F)$
acting on elements of the tensor space $\F^a\otimes\F^b\otimes \F^c$.
Such reductions of group isomorphism to multilinear equivalence, and more general \emph{tensor
equivalence} problems have been the key to the recent progress 
on some of the largest and most difficult instances of \GpI \citelist{
\cite{BMW}\cite{BW:isom}\cite{LQ}\cite{IQ}
\cite{LW12}\cite{BOW}\cite{Wilson:profile}}.  The strategies buried within those methods are nevertheless
quite distinct.  For example, several focus on $*$-algebras and properties of rings
and modules acting on tensors.  Others focus on tensors as high-dimensional arrays, 
and perform individualization and refinement techniques on slices
of this data structure.  Our model of refinement allows for both strategies.

\paragraph{Second thread: relationship to code equivalence.}
Now consider the types of groups we could place on the block diagonal of
the matrix group examples in Figure~\ref{fig:extend-random}.  These could include groups like $\GL(d_i,\F)$.
We could also use subgroups such as $\GL(1,\F_{p^{d_i}})=\F_{p^{d-1}}^{\times}$,
as well as natural families of geometrically interesting groups such as orthogonal,
unitary, and symplectic groups.  We may even embed the same group several times 
into multiple blocks on the diagonal, e.g. 
$\left\langle \left[\begin{smallmatrix} A & 0 \\ 0 & A\end{smallmatrix}\right] : A\in \GL(e,\F)\right\rangle$.
Those blocks could further be permuted producing groups of block monomial matrices
such as $\left\langle \left[\begin{smallmatrix} 0 & A \\ A & 0\end{smallmatrix}\right]: A\in \GL(e,\F)\right\rangle$. 
We can capture the spirit of such a group graphically in Figure~\ref{fig:extend-random}(b).
Indeed, our random group model builds random semi-simple and quasi-semisimple groups
in just this way.

 Isomorphism in the context of groups of this kind has been approached mostly through the
use of code equivalence.  For example, for semisimple groups---those with no 
non-trivial abelian normal subgroups---there is an algorithm that runs in 
time polynomial in the group order \cites{BCGQ,BCQ}, as well as an algorithm that is efficient in practice \cite{CH03}. The key algorithmic idea 
is 
dynamic programming, and its use follows the one by Luks in the context of 
hypergraph isomorphism and code equivalence \cite{Luks99}.  Later \cite{GQ} 
considers the further
implications when the groups centrally extend abelian groups similar to the general
family we have described in this thread.

\paragraph{Third thread: composition series and filters.}
In recent years there has been some progress on improving general isomorphism 
using subgroup chains.
Rosenbaum and Wagner~\cite{RosenbaumWagner} demonstrated that one can fix a
composition series $\mathcal{C}(G)$ for group $G$ and then, given a composition series $\mathcal{C}(H)$ for
another group $H$, efficiently 
decide if there is an isomorphism $G\to H$ sending $\mathcal{C}(G)$ to 
$\mathcal{C}(H)$. Luks gave an improvement of  that test \cite{LuksCompSeries}.
In this way, the putative cost of $n^{\log n+O(1)}$ steps to decide isomorphism by 
brute-force is reduced to the number of possible composition series, which is at
most $n^{(\frac{1}{2}+o(1))\log n}$. 

Filters can use characteristic subgroups to recursively find more characteristic subgroups,
ultimately producing a large enough collection of fixed subgroups that
an isomorphism test along
the lines of Rosenbaum--Wagner becomes efficient.  For several families of
groups such refinements have been discovered \citelist{\cite{Wilson:alpha}\cite{Maglione:filters-U}}.
Our approach here extends the filtration process by taking the methods known and
combining them into a colored hypergraph where individualization-refinement techniques
can be applied.  The goal is to make it even more likely to
reach a situation in which the Rosenbaum--Wagner and Luks algorithms can be applied 
efficiently.

%%%
\subsection{An outline of the Weisfeiler--Leman procedure for groups}
\label{subsec:WL-outline}

Our approach to \GpI uses recursively-refineable filters to build and refine a colored 
hypergraph within, and between, abelian layers of a given group. 
A filter $\phi$ on a group $G$ assigns to a $c$-tuple $s=(s_1,\ldots,s_c)$ of natural numbers (including $0$)
a normal subgroup $\phi_s$ of $G$ subject to natural compatibility requirements. 
Let $\Norm(G)$ denote the set of normal subgroups of $G$, and for $A,B\subseteq G$ let
$[A,B]=\langle [a,b]\mid a\in A, b\in b\rangle$.

\begin{defn}[Filter~~\cite{Wilson:alpha}]
A \emph{filter} on a group $G$ is a map 
 $\phi\colon \N^d \to \Norm(G)$, where 
\begin{align}
\label{eq:filters}
(\forall s,t \in \N^d) &&
s \leq_{{\rm lex}} t~~ \Longrightarrow~~ \phi_s \geq \phi_t~~ \mbox{and}~~[\phi_s, \phi_t] \leq \phi_{s+t}.
\end{align}
\end{defn}

Note that the first condition implies that the subgroups $\phi_s$ form a descending chain
of subgroups, though in general it is not a proper chain.   
Computationally we only store the lexicographically least label $s$ for each distinct subgroup
$\phi_s$ in the image of $\phi$.  Thus, a filter's image is bounded by the length of the
longest subgroup chain.  For a group of order $n$ this is at most $\log n$. 

We begin with a filter $\phi:\mathbb{N}^c\to \Norm(G)$ known 
from the structure of general finite groups, and then refine by increasing the value of $c$.
That refinements exist is proved in \cite{Wilson:alpha} and that they 
can be computed efficiently is shown in \cite{Maglione:filters}.
Our initial value for $c$ will be the number of distinct primes $p_1,\ldots,p_c$ dividing $n=|G|$.
For each prime $p_i$, we let $O_{p_i}(G)$ denote the intersection of all Sylow $p_i$-subgroups
of $G$,  the maximum normal subgroup having order a power of $p_i$. 
Let $e_i=(\ldots,0,\underset{i}{1},0,\ldots)\in\mathbb{N}^c$, sorted lexicographically (so that
$e_i< e_{i+1}$), and define
$\phi\colon \mathbb{N}^c\to \Norm(G)$ as follows:
\begin{align*}
	\phi_s & = \left\{\begin{array}{ll}
		G & s=0,\\
		\prod_{j=i}^{c} O_{p_j}(G) & s= e_i,\\~
		[\phi_{s_ie_i},G]\phi_{s_i e_i}^{p_i} & s=(s_i+1)e_i,\\
		\prod_{i=1}^c \phi_{s_i e_i}  & s= \sum_{i=1}^c s_i e_i.
	\end{array}\right.
\end{align*}
Here the product $\prod_{i} \phi_{s_i e_i}$ means the normal subgroup generated by
the terms $\phi_{s_i e_i}$. 
For example the group $S_4$ of permutations on $4$ letters would have
\begin{align*}
	\phi_{(0,0)} & = S_4 \geq \phi_{(1,0)}=O_2(S_4)O_3(S_4)=\langle (12)(34),(13)(24)\rangle \geq 
		\phi_{(2,0)}=\phi_{(0,1)}=O_3(S_4)=1.
\end{align*}

The \emph{boundary filter} $\partial\phi\colon \N^d \to \Norm(G)$ is defined by 
$\partial \phi_s = \langle \phi_{s+t} : t \in \N^d \backslash \{0\} \rangle$ (if $d=1$, then $\partial \phi_s = 
\phi_{s+1}$), and the quotients $L_s := \phi_s / \partial \phi_s$ are the {\em layers} of $\phi$.
Note that for each $s\neq 0$, $L_s$ is abelian, and in fact a $\mathbb{Z}[\phi_0/\partial\phi_0]$-module.
In the selected filter above these are in fact $\F_{p_i}$-vector spaces for some $i$.
The set
$L(\phi)=\bigoplus_{s\neq 0} L_s$,
with homogeneous bilinear products
\begin{align*}
	[,]_{st}& :L_s\times L_t\bmto L_{s+t}:
	  (x\partial\phi_s,y\partial\phi_t)\mapsto [x,y]\partial\phi_{s+t},
\end{align*}
is a graded Lie algebra whose graded components are invariant under ${\rm Aut}(G)$~\cite{Wilson:alpha}*{Theorem~3.1}.

A bilinear map (bimap) $L_s \times L_t \bmto L_{s+t}$ is said to have \emph{genus} $g$ if it is defined over a field $\F$ such that $\dim_{\F} L_{s+t} \leq g$, or (see \cite{BMW} for details) if it is built from such maps by certain elementary products (such as direct products, but even ``central'' products are allowed). We will primarily be concerned with the case where $\F = \Z_p$ and we consider bimaps 
whose codomain has dimension at most $g$, but our results extend without difficulty to the more general notion of genus.

In our setting the layers of $\phi$ are elementary abelian, and our approach is to build a hypergraph
whose vertices are the union of the points (1-spaces) in the projective geometries of the layers.
For $s\in\N^d$, let ${\rm PG}_k(L_s)$ denote
the set of $(k+1)$-dimensional subspaces of $L_s$. 
Define a family of hypergraphs $\cH^{(g)}(\phi)$,
where $1 < g \in \Z$ is a parameter, with vertices and 
hyperedges defined as follows: 
\begin{enumerate}
\item[] The vertex set of $\cH^{(g)}(\phi)$ is $\mathcal{V}=\bigcup_{s\in \N^d}{\rm PG}_0(L_s)$.
\item[] The hyperedge set of $\cH^{(g)}(\phi)$ is $\mathcal{E}=\bigcup_{s\in\N^d} {\rm PG}_g(L_s)\cup {\rm PG}_{\dim L_s - g}(L_s)\;\cup\;\bigcup_{s\neq t}\mathcal{K}_{st}$, where $\mathcal{K}_{st}$ is a hypergraph with edges and 3-edges on ${\rm PG}_0(L_s)\cup {\rm PG}_0(L_t) \cup {\rm PG}_0(L_{s+t})$.
\end{enumerate}

 Having defined the hypergraph $\cH^{(g)}(\phi)$, 
we shall apply the 
$k$-dimensional Weisfeiler-Leman (WL) procedure to it in an appropriate 
way.  Briefly, it is a hypergraph version of the WL 
procedure \cite{WL68} on graphs \cite{Babai79,ImmermanLander}. When $k=1$, such a 
WL procedure on hypergraphs was recently studied by B\"oker \cite{Boker}.

To this end we 
obtain an algorithm that, given a finite 
group $G$ and integers $g,k\geq 1$, computes a suitable
characteristic filter $\phi\colon \N^d\to \Norm(G)$, where 
$N=O_{\infty}(G)=\prod_{p} O_p(G)$ is the Fitting subgroup,  
and an associated hypergraph $\cH^{(g,k)}(\phi)$. 
Further, it colors the hyperedges
$\mathcal{E}$ of  $\cH^{(g,k)}(\phi)$ in a certain desirable way. If
$\chi\colon \mathcal{E}\to \N$ is a coloring of hyperedges, denote the corresponding colored 
hypergraph by $\cH_{\chi}^{(g,k)}(\phi)$.

%%%
\begin{mainthm}
\label{thm:hypergraph}
There is a deterministic algorithm that, given a finite group $G$ and integers $g, k \geq 1$, constructs 
the Fitting subgroup $N=O_{\infty}(G)$, a characteristic filter $\phi\colon 
\N^d\to \Norm(G)$ whose 
non-zero layers are elementary 
abelian  ${\rm Aut}(G)$-modules, the hypergraph $\cH=\cH^{(g,k)}(\phi)$, and a coloring 
$\chi\colon \mathcal{V}(\cH) \cup \mathcal{E}(\cH)\to \N$
satisfying: 
   \begin{enumerate}[(i)]
   \item $\cH^{(g,k)}_{\chi}(\phi)$ is hereditary in the following sense:
   for each $s\in\N^d-\{0\}$, the vertex-and-edge-colored hypergraph obtained by restricting 
   $\cH_{\chi}^{(g,k)}(\phi)$ to $G/\phi_s$ is a refinement
   of the colored hypergraph for $G/\phi_s$ based on the filter 
   $\phi$ truncated at $\phi_s$.
   
   \item $\cH^{(g,k)}_\chi(\phi)$ is also hereditary in $k$ in the following sense: 
   the underlying hypergraphs of $\cH^{(g,k)}(\phi)$ and $\cH^{(g,k+1)}(\phi)$ are identical, 
   and the coloring of the latter refines the coloring of the former.
   
   \item If $G\cong G'$, there is a colored hypergraph isomorphism $f\colon \cH^{(g,k)}_{\chi}(\phi)\to \cH^{(g,k)}_{\chi'}(\phi')$
   such that
   \begin{align*}
   \forall e\in\mathcal{E}(\mathcal{H}), & \qquad  \chi(e)=\chi'(f(e)), \\
   \forall v \in \mathcal{V}(\mathcal{H}), & \qquad \chi(v) = \chi'(f(v)).
   \end{align*}
   \end{enumerate}
The time complexity is $|G|^{O(gk)}$.
\end{mainthm}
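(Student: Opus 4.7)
The plan is to build the required objects in four stages---Fitting subgroup, filter, hypergraph skeleton, and canonical coloring---so that the three hereditary/invariance conditions are forced by the construction itself rather than verified afterwards, and then to bound the cost.

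First I would construct $N=O_\infty(G)=\prod_p O_p(G)$ directly by iterating normal closures of Sylow subgroups; this is well known to be polynomial time in $|G|$. Next I would build the initial characteristic filter $\phi\colon\N^c\to\Norm(G)$ exactly as in the prose preceding the statement: start with indices $e_1,\ldots,e_c$ labelling the primes dividing $|N|$ with $\phi_{e_i}=\prod_{j\geq i}O_{p_j}(G)$, then extend by $\phi_{(s_i+1)e_i}=[\phi_{s_ie_i},G]\phi_{s_ie_i}^{p_i}$ and $\phi_{\sum s_ie_i}=\prod_i\phi_{s_ie_i}$. By \cite{Wilson:alpha} the nonzero layers $L_s=\phi_s/\partial\phi_s$ are then elementary abelian $\Aut(G)$-modules, and by \cite{Maglione:filters} the filter (together with its image, stored only at lexicographically least labels) is computable in time polynomial in $|G|$. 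Because $|G|\leq n$ has at most $\log n$ distinct prime factors and every chain of subgroups has length at most $\log n$, the number of stored indices $s$ is $O(\log^2 n)$.

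Second, I would construct $\mathcal{H}^{(g)}(\phi)$ exactly as defined in the excerpt: the vertex set $\mathcal{V}$ is the disjoint union of the projective points $\mathrm{PG}_0(L_s)$, and the hyperedge set $\mathcal{E}$ consists of (a) all $g$-dimensional and codimension-$g$ subspaces of each layer, encoded as hyperedges on the points they contain, and (b) for each pair $s\neq t$ the tripartite hypergraph $\mathcal{K}_{st}$ on $\mathrm{PG}_0(L_s)\cup\mathrm{PG}_0(L_t)\cup\mathrm{PG}_0(L_{s+t})$ whose 2- and 3-edges record incidences of the graded bracket $[\,,\,]_{st}$. Enumerating these objects is polynomial in the number of $g$-subspaces of each layer, which is at most $|L_s|^{g}\leq |G|^{g}$, yielding $|\mathcal{V}|+|\mathcal{E}|=|G|^{O(g)}$.

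Third, the coloring $\chi$ should be built in two passes so that hereditariness is automatic. The initial color of a vertex $v\in\mathrm{PG}_0(L_s)$ records the grade $s$ alone; the initial color of a $g$-edge $W\subseteq L_s$ records $s$ together with the isomorphism type, as bimaps into $L_{s+t}$ (resp.\ quotient bimaps from $L_{s-t}$), of the sub-bimaps obtained by restricting each $[\,,\,]_{st}$ or $[\,,\,]_{t,s-t}$ to $W$---these isomorphism types are computable in time $|G|^{O(g)}$ using the genus-$g$ machinery of \cite{BMW}; the initial color of a hyperedge in $\mathcal{K}_{st}$ records whether the triple satisfies the bracket relation. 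I then run $k$-dimensional hypergraph Weisfeiler--Leman to stability, refining in the sense of \cite{Boker} for hypergraphs; this takes $N^{O(k)}$ time where $N=|\mathcal{V}|+|\mathcal{E}|$, giving the overall bound $|G|^{O(gk)}$.

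The three properties should then fall out. For (iii), every ingredient---$N$, the operators $O_p$, $[\,,\,]$, $(\cdot)^p$, the product of normal subgroups, the genus-$g$ bimap invariants, and $k$-WL itself---is canonical, so an isomorphism $G\cong G'$ transports $\phi,\mathcal{H},\chi$ to $\phi',\mathcal{H}',\chi'$ as a colored hypergraph isomorphism. For (i), truncating the filter at $\phi_s$ and passing to $G/\phi_s$ yields exactly the sub-hypergraph on the vertices with grades $<s$ with the identical initial colors; since WL refinement on the whole only splits color classes, its restriction refines WL on the truncation. For (ii), the underlying hypergraph does not depend on $k$ and WL at dimension $k+1$ refines WL at dimension $k$ by a standard argument. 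The main obstacle is step three: choosing canonical initial colors for the $g$-edges so that they are simultaneously computable in $|G|^{O(g)}$ time, fine enough to be useful, and stable under both restriction to quotients and increase of $k$. This is where genus-$g$ isomorphism of \cite{BMW} is essential, as it is the mechanism that makes the per-hyperedge invariant both canonical and efficiently computable.
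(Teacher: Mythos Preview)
Your overall architecture---build canonical objects so that (i)--(iii) follow by functoriality, then bound the cost by $(|\mathcal V|+|\mathcal E|)^{O(k)}=|G|^{O(gk)}$---matches the paper's. The initial filter, the vertex/hyperedge sets, the per-layer genus-$g$ invariants, and the between-layer bracket edges are all as in the paper, and your arguments for (ii) and (iii) are essentially the paper's.

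The substantive gap is that you describe a \emph{single pass}: build the initial filter, build $\mathcal H$, color, run $k$-WL to stability, stop. The paper's algorithm is an \emph{outer refinement loop}: after $k$-WL stabilizes, one reads off, for each layer $L_s$ and each color $c$, the subgroup $X_{s,c}\leq L_s$ generated by the points of color $c$, pulls it back to a characteristic subgroup $\pi_s^{-1}(X_{s,c})$ of $G$, and if any such subgroup lies strictly between some $\partial\phi_s$ and $\phi_s$, uses \cite{Maglione:filters} to refine $\phi$, rebuilds $\mathcal H$ on the new (finer) filter, and repeats. The filter $\phi$ output by the theorem is the one at which this loop stabilizes, not the initial one; the paper's running-time argument explicitly bounds the number of loop iterations by $\log_2|G|$ (each refinement properly splits some layer). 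Your single-pass version may still satisfy (i)--(iii) as stated, but it does not produce the object the paper is describing, and it is precisely this iterated refinement that makes the filter fine enough for the downstream width bound.

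A smaller point on (i): your sentence ``since WL refinement on the whole only splits color classes, its restriction refines WL on the truncation'' is the right conclusion but not automatic---restricting $k$-WL to an induced subgraph does not in general refine $k$-WL on that subgraph. The paper's argument is that the within-layer initial colors in $G/\phi_s$ coincide with those in the truncated construction (they depend only on pairs $t,u$ with $t+u$ equal to the layer index), and that the between-layer edges and the deeper layers only add information, hence refine. You should make that dependency explicit rather than invoking a general monotonicity of WL.
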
 

The algorithm to construct the colored hypergraph $\cH^{(g,k)}_{\chi}(\phi)$ is an iterative procedure that we
describe in detail in Section~\ref{sec:hypergraph}. Within a fixed iteration, we apply a Weisfeiler--Leman 
type individualization procedure to obtain a stable coloring (a hypergraph analogue of $k$-dimensional WL). We then use that stable coloring to search 
for characteristic structure in $G$ not already captured by the filter $\phi$. If we succeed, we use this structure
to refine $\phi$ and iterate. 

Given the result of our WL-algorithm and applying Luks's extension \cite{LuksCompSeries} of the
Rosenbaum--Wagner composition series comparison \cite{RosenbaumWagner}, whenever we refine we improve our 
isomorphism test, resulting in:

\begin{mainthm} \label{thm:width-and-color}
Let $\phi = \phi_{g,k}$ and $\mathcal{H}^{(g,k)}=\mathcal{H}_{\chi}^{(g,k)}(\phi)$  denote the filter and colored hypergraph from Theorem~\ref{thm:hypergraph}. Let $\width(\phi)$ denote the maximum dimension of any layer of $\phi$, and let $\colorratio(\mathcal{H})$ be the product over all layers $s$ of $|L_s| / |C_s|$, where $C_s$ is the smallest color class in layer $L_s$. Then given a nilpotent group $N$ of order $n$, isomorphism can be tested in time
\[
 \left(\frac{n}{\colorratio(\mathcal{H}^{(g,k)})} \right)^{\width(\phi_{g,k})} \poly(n) + n^{O(gk)}.
\]
\end{mainthm}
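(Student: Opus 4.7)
The plan is to combine Theorem~\ref{thm:hypergraph} with the composition-series isomorphism machinery of Rosenbaum--Wagner~\cite{RosenbaumWagner} and Luks~\cite{LuksCompSeries}, using the hypergraph coloring to bound the branching at each layer of the filter. First, invoke Theorem~\ref{thm:hypergraph} to construct the characteristic filter $\phi$ and the stably-colored hypergraph $\mathcal{H}^{(g,k)}_{\chi}(\phi)$ in time $n^{O(gk)}$, which accounts for the additive $n^{O(gk)}$ term. By item~(iii) of Theorem~\ref{thm:hypergraph}, every group isomorphism $N \to N'$ induces a color-preserving hypergraph isomorphism, and in particular restricts to a color-preserving linear isomorphism $L_s \to L_s'$ on each nonzero layer. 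So it suffices to enumerate tuples of color-preserving layer isomorphisms and, for each tuple, run the Luks/Rosenbaum--Wagner bottom-up consistency check to test whether it lifts to a genuine group isomorphism; each such per-tuple lifting test runs in $\poly(n)$ time by~\cite{RosenbaumWagner,LuksCompSeries}.

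To bound the number of tuples, apply individualization-and-refinement within each layer $L_s$, seeded with the stable coloring $\chi$ from Theorem~\ref{thm:hypergraph}. At each step we individualize one point of ${\rm PG}_0(L_s)$ drawn from the current smallest color class, contributing branching at most $|C_s|$, and then re-run the hypergraph WL stabilization to propagate the new constraint through the commutator-hyperedges $\mathcal{K}_{st}$. Since a linear isomorphism $L_s \to L_s'$ is determined (up to a $\poly(n)$ overhead from scalars inside a single $1$-space) by its action on a basis, at most $\dim L_s \leq \width(\phi)$ individualizations pin down the layer isomorphism. Hence the per-layer search has at most $|C_s|^{\dim L_s} \leq |C_s|^{\width(\phi)}$ leaves.

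Multiplying across nonzero layers, the total number of tuples to check is at most
\[
\prod_{s \neq 0} |C_s|^{\width(\phi)} \;=\; \Bigl(\prod_{s \neq 0} |C_s|\Bigr)^{\width(\phi)}.
\]
Because $N$ is nilpotent, the filter $\phi$ refines to a chief series, whose elementary abelian chief factors $L_s$ have orders multiplying to $|N|=n$; so $\prod_{s\neq 0} |L_s|=n$ and
\[
\prod_{s\neq 0} |C_s| \;=\; n \cdot \prod_{s\neq 0} \frac{|C_s|}{|L_s|} \;=\; \frac{n}{\colorratio(\mathcal{H}^{(g,k)})}.
\]
Combining this count with the $\poly(n)$ per-tuple lifting test and the $n^{O(gk)}$ hypergraph construction cost yields the stated running time $(n/\colorratio)^{\width}\poly(n) + n^{O(gk)}$.

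The main technical obstacle is justifying that the intra-layer individualization-and-refinement bound really upper-bounds the number of cross-layer-consistent tuples. Because $\chi$ is already a fixed point of $k$-dimensional hypergraph WL on $\mathcal{H}^{(g,k)}_{\chi}(\phi)$, individualizing a $1$-space in $L_s$ and re-refining is exactly equivalent to forcing the layer isomorphism to send that $1$-space to a specified same-color target in $L_s'$ and then propagating the consequences through the $\mathcal{K}_{st}$ hyperedges. The hereditary properties (i) and (ii) of Theorem~\ref{thm:hypergraph} are what permits this process to dovetail with the bottom-up Luks/Rosenbaum--Wagner lift: refinement at layer $s$ depends only on information already committed to on higher quotients $N/\phi_t$ with $t <_{\rm lex} s$, so one may process layers in lex order, enumerate layer isomorphisms independently per layer, and combine via dynamic programming without double-counting.
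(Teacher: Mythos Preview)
Your proposal is essentially correct and follows the same strategy as the paper: bound the number of composition series compatible with both the filter and the hypergraph coloring by $\prod_s |C_s|^{\width(\phi)}$, then invoke Luks's polynomial-time Composition Series Isomorphism \cite{LuksCompSeries} for each. A few points of comparison and precision are worth noting.

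First, the paper phrases the enumeration directly as a choice of composition series: within each layer $L_s$ one chooses, step by step, generators from $C_s$, giving at most $|C_s|(|C_s|-1)\dotsb(|C_s|-\dim L_s+1)\leq |C_s|^{\width(\phi)}$ series. Your framing in terms of ``tuples of layer isomorphisms pinned down by individualizing a basis in $C_s$'' is equivalent, but you should make explicit the reason a basis can be found inside $C_s$: since $\phi$ and $\chi$ are already stable under Algorithm~\ref{algo:outline}, the subgroup generated by $C_s$ is characteristic and cannot be proper, so $C_s$ spans $L_s$.

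Second, the re-running of WL after each individualization is not used in your bound (you still charge $|C_s|$ per step, with the \emph{original} $C_s$). That extra refinement is exactly the ``Advanced version'' the paper describes separately in Section~\ref{sec:iso-IR}; for Theorem~\ref{thm:width-and-color} itself it is unnecessary, and dropping it simplifies the argument.

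Third, your ``per-tuple lifting test in $\poly(n)$ by \cite{RosenbaumWagner,LuksCompSeries}'' is slightly misdescribed. What Luks actually provides is: given a composition series in each group, decide in $\poly(n)$ whether some isomorphism carries one series to the other. Your individualization in $N$ fixes a single composition series there; each leaf of your branching in $N'$ specifies a composition series in $N'$; Luks's algorithm is then applied to that pair. It is not a test for whether one \emph{specific} tuple of layer maps lifts, and the hereditary properties~(i)--(ii) of Theorem~\ref{thm:hypergraph} are not actually needed for this step---they guarantee the coloring is isomorphism-invariant, but the $\poly(n)$ cost comes straight from \cite{LuksCompSeries}.
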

We extend this with an individualize-and-refine technique in Section~\ref{sec:iso-IR}, though for that we do not have as cleanly stated an upper bound.

\begin{remark} \label{rmk:solvable-long}
The initial filter described above can be extended to solvable groups, and in particular the solvable radical $\Rad(G)$ of any group, by doing something similar to the above %Section~\ref{subsec:WL-outline} 
within each layer of the Fitting series. This would let us extend all our results from using the Fitting subgroup $O_\infty(G)$ to using the solvable radical $\Rad(G)$ instead, and would extend Theorem~\ref{thm:width-and-color} from nilpotent to solvable groups.
\end{remark}

%%%%%
\subsection{An outline of the random model}
\label{subsec:random}
Unlike sampling a random graph, where edges can freely be added or omitted,
sampling groups of a fixed order requires some delicacy.  For example, there are 15
isomorphism types of groups of order 16 but only 1 each of orders 15 and 17.  Sampling 
random groups has hitherto been approached in one of the following two ways.
\begin{description}
\item[Quotient Sampling.] Fix a free group $F[X]$ of all strings on an alphabet
$X\cup X^{-1}$, and consider quotients by normal subgroups $N=\langle S\rangle$ sampled
by choosing $S\subset F$ by some aleatory process.

\item[Subgroup Sampling.] Fix an automorphism group of a structure, such as the group
$\Sym(\Omega)$ of permutations of a set $\Omega$, or the group $\GL(V)$ of invertible 
linear transformations of a vector space $V$.  Consider subgroups $H=\langle S\rangle$ where $S$ is sampled
by some aleatory process.
\end{description} 
Evidently, both methods yield groups,  
but neither offers sufficient variability when restricted to finite groups.
For instance Gromov studied quotient sampling
as a function of the word lengths of elements in $S$, finding most quotients are $1$, $\mathbb{Z}/2$,
or infinite \cite{Gromov}.  Also, subgroup sampling in $G=\Sym(\Omega)$ (respectively $\GL(V)$) has been shown by 
Dixon, Kantor--Lubotsky~\cite{KL}, and others to essentially sample $A_n$, $S_n$ 
(respectively, subgroups ${\rm SL}(V)\leq H\leq {\rm GL}(V)$).

To escape these conditions we adopt a method of sampling that appears antithetical
to random models: we strongly bias our random selections.  We settle on a model related
to subgroup sampling in $\GL(d,p)$ since this affords us easy-to-use group operations.  
(Note, Novikov--Boone demonstrated that the word problem in the free group is
undecidable and thus working with quotients $F[X]/N$ is not in general feasible 
\cites{Nov55,Boo59}.)   

First, we sample random upper $(d\times d)$-unitriangular matrices $u_1,\ldots, u_{\ell}\in U(d,p)$ 
but we insist that they are $\epsilon$-sparse, for some constant $\epsilon$. Then
\begin{align*}
	U= \langle u_1,\ldots, u_{\ell}\rangle.
\end{align*}
samples a subgroup whose order is a power of the prime $p$ characteristic
of our fixed field $\F$.  As we shall demonstrate in Theorem~\ref{thm:dense-random}, 
\emph{without limiting our randomness to 
sparse matrices} the groups $U$ will almost always contain the following subgroup.
\begin{align*}
	\gamma_2(U(d,p)) & = \left\{\begin{bmatrix}
		1 & 0 & * & \\
		 & \ddots & \ddots & \ddots & \\
		 &  & 1 & 0 & *\\
		 & & & 1 & 0 \\
		 & & & &  1
	\end{bmatrix}\right\}.
\end{align*}
In essence, this is a $p$-group analogue of the observations we made about 
sampling in 
$S_n$ and $\GL(d,p)$.  However, sampling with sparsity gives substantial variation, 
as illustrated simply by comparing orders in Figure~\ref{fig:scatter-plot}.
An interesting recent study by
R. Gilman describes a similar situation for permutations analyzed by
Kolomogorov complexity \cite{Gilman}.

\begin{figure}
\center

% GNUPLOT: LaTeX picture
\setlength{\unitlength}{0.240900pt}
\ifx\plotpoint\undefined\newsavebox{\plotpoint}\fi
\sbox{\plotpoint}{\rule[-0.200pt]{0.400pt}{0.400pt}}%
\begin{picture}(1500,900)(0,0)
\sbox{\plotpoint}{\rule[-0.200pt]{0.400pt}{0.400pt}}%
\put(90.0,82.0){\rule[-0.200pt]{4.818pt}{0.400pt}}
\put(70,82){\makebox(0,0)[r]{$0$}}
\put(1419.0,82.0){\rule[-0.200pt]{4.818pt}{0.400pt}}
\put(90.0,211.0){\rule[-0.200pt]{4.818pt}{0.400pt}}
\put(70,211){\makebox(0,0)[r]{$10$}}
\put(1419.0,211.0){\rule[-0.200pt]{4.818pt}{0.400pt}}
\put(90.0,341.0){\rule[-0.200pt]{4.818pt}{0.400pt}}
\put(70,341){\makebox(0,0)[r]{$20$}}
\put(1419.0,341.0){\rule[-0.200pt]{4.818pt}{0.400pt}}
\put(90.0,470.0){\rule[-0.200pt]{4.818pt}{0.400pt}}
\put(70,470){\makebox(0,0)[r]{$30$}}
\put(1419.0,470.0){\rule[-0.200pt]{4.818pt}{0.400pt}}
\put(90.0,599.0){\rule[-0.200pt]{4.818pt}{0.400pt}}
\put(70,599){\makebox(0,0)[r]{$40$}}
\put(1419.0,599.0){\rule[-0.200pt]{4.818pt}{0.400pt}}
\put(90.0,729.0){\rule[-0.200pt]{4.818pt}{0.400pt}}
\put(70,729){\makebox(0,0)[r]{$50$}}
\put(1419.0,729.0){\rule[-0.200pt]{4.818pt}{0.400pt}}
\put(90.0,858.0){\rule[-0.200pt]{4.818pt}{0.400pt}}
\put(70,858){\makebox(0,0)[r]{$60$}}
\put(1419.0,858.0){\rule[-0.200pt]{4.818pt}{0.400pt}}
\put(90.0,82.0){\rule[-0.200pt]{0.400pt}{4.818pt}}
\put(90,41){\makebox(0,0){$0$}}
\put(90.0,838.0){\rule[-0.200pt]{0.400pt}{4.818pt}}
\put(259.0,82.0){\rule[-0.200pt]{0.400pt}{4.818pt}}
\put(259,41){\makebox(0,0){$5$}}
\put(259.0,838.0){\rule[-0.200pt]{0.400pt}{4.818pt}}
\put(427.0,82.0){\rule[-0.200pt]{0.400pt}{4.818pt}}
\put(427,41){\makebox(0,0){$10$}}
\put(427.0,838.0){\rule[-0.200pt]{0.400pt}{4.818pt}}
\put(596.0,82.0){\rule[-0.200pt]{0.400pt}{4.818pt}}
\put(596,41){\makebox(0,0){$15$}}
\put(596.0,838.0){\rule[-0.200pt]{0.400pt}{4.818pt}}
\put(765.0,82.0){\rule[-0.200pt]{0.400pt}{4.818pt}}
\put(765,41){\makebox(0,0){$20$}}
\put(765.0,838.0){\rule[-0.200pt]{0.400pt}{4.818pt}}
\put(933.0,82.0){\rule[-0.200pt]{0.400pt}{4.818pt}}
\put(933,41){\makebox(0,0){$25$}}
\put(933.0,838.0){\rule[-0.200pt]{0.400pt}{4.818pt}}
\put(1102.0,82.0){\rule[-0.200pt]{0.400pt}{4.818pt}}
\put(1102,41){\makebox(0,0){$30$}}
\put(1102.0,838.0){\rule[-0.200pt]{0.400pt}{4.818pt}}
\put(1270.0,82.0){\rule[-0.200pt]{0.400pt}{4.818pt}}
\put(1270,41){\makebox(0,0){$35$}}
\put(1270.0,838.0){\rule[-0.200pt]{0.400pt}{4.818pt}}
\put(1439.0,82.0){\rule[-0.200pt]{0.400pt}{4.818pt}}
\put(1439,41){\makebox(0,0){$40$}}
\put(1439.0,838.0){\rule[-0.200pt]{0.400pt}{4.818pt}}
\put(90.0,82.0){\rule[-0.200pt]{0.400pt}{186.938pt}}
\put(90.0,82.0){\rule[-0.200pt]{324.974pt}{0.400pt}}
\put(1439.0,82.0){\rule[-0.200pt]{0.400pt}{186.938pt}}
\put(90.0,858.0){\rule[-0.200pt]{324.974pt}{0.400pt}}
\put(1279,817){\makebox(0,0)[r]{Matrix density $10\%$}}
\put(1299.0,817.0){\rule[-0.200pt]{24.090pt}{0.400pt}}
\put(90,82){\usebox{\plotpoint}}
\multiput(562.00,82.58)(1.329,0.493){23}{\rule{1.146pt}{0.119pt}}
\multiput(562.00,81.17)(31.621,13.000){2}{\rule{0.573pt}{0.400pt}}
\multiput(596.00,93.92)(1.329,-0.493){23}{\rule{1.146pt}{0.119pt}}
\multiput(596.00,94.17)(31.621,-13.000){2}{\rule{0.573pt}{0.400pt}}
\multiput(630.00,82.58)(0.635,0.497){49}{\rule{0.608pt}{0.120pt}}
\multiput(630.00,81.17)(31.739,26.000){2}{\rule{0.304pt}{0.400pt}}
\multiput(663.58,108.00)(0.498,0.573){65}{\rule{0.120pt}{0.559pt}}
\multiput(662.17,108.00)(34.000,37.840){2}{\rule{0.400pt}{0.279pt}}
\multiput(697.00,145.92)(0.654,-0.497){49}{\rule{0.623pt}{0.120pt}}
\multiput(697.00,146.17)(32.707,-26.000){2}{\rule{0.312pt}{0.400pt}}
\multiput(731.00,121.58)(1.329,0.493){23}{\rule{1.146pt}{0.119pt}}
\multiput(731.00,120.17)(31.621,13.000){2}{\rule{0.573pt}{0.400pt}}
\multiput(765.00,134.58)(1.290,0.493){23}{\rule{1.115pt}{0.119pt}}
\multiput(765.00,133.17)(30.685,13.000){2}{\rule{0.558pt}{0.400pt}}
\multiput(798.00,145.92)(0.654,-0.497){49}{\rule{0.623pt}{0.120pt}}
\multiput(798.00,146.17)(32.707,-26.000){2}{\rule{0.312pt}{0.400pt}}
\put(90.0,82.0){\rule[-0.200pt]{113.705pt}{0.400pt}}
\multiput(866.58,121.00)(0.497,0.974){63}{\rule{0.120pt}{0.876pt}}
\multiput(865.17,121.00)(33.000,62.182){2}{\rule{0.400pt}{0.438pt}}
\multiput(899.58,182.73)(0.498,-0.558){65}{\rule{0.120pt}{0.547pt}}
\multiput(898.17,183.86)(34.000,-36.865){2}{\rule{0.400pt}{0.274pt}}
\multiput(933.00,145.92)(1.329,-0.493){23}{\rule{1.146pt}{0.119pt}}
\multiput(933.00,146.17)(31.621,-13.000){2}{\rule{0.573pt}{0.400pt}}
\multiput(967.58,134.00)(0.498,2.490){65}{\rule{0.120pt}{2.076pt}}
\multiput(966.17,134.00)(34.000,163.690){2}{\rule{0.400pt}{1.038pt}}
\multiput(1001.58,295.70)(0.497,-1.785){63}{\rule{0.120pt}{1.518pt}}
\multiput(1000.17,298.85)(33.000,-113.849){2}{\rule{0.400pt}{0.759pt}}
\multiput(1034.58,181.46)(0.498,-0.945){65}{\rule{0.120pt}{0.853pt}}
\multiput(1033.17,183.23)(34.000,-62.230){2}{\rule{0.400pt}{0.426pt}}
\multiput(1068.58,121.00)(0.498,0.573){65}{\rule{0.120pt}{0.559pt}}
\multiput(1067.17,121.00)(34.000,37.840){2}{\rule{0.400pt}{0.279pt}}
\put(832.0,121.0){\rule[-0.200pt]{8.191pt}{0.400pt}}
\multiput(1135.00,160.58)(1.329,0.493){23}{\rule{1.146pt}{0.119pt}}
\multiput(1135.00,159.17)(31.621,13.000){2}{\rule{0.573pt}{0.400pt}}
\multiput(1169.58,169.41)(0.498,-0.960){65}{\rule{0.120pt}{0.865pt}}
\multiput(1168.17,171.21)(34.000,-63.205){2}{\rule{0.400pt}{0.432pt}}
\put(1102.0,160.0){\rule[-0.200pt]{7.950pt}{0.400pt}}
\multiput(1270.00,106.92)(1.329,-0.493){23}{\rule{1.146pt}{0.119pt}}
\multiput(1270.00,107.17)(31.621,-13.000){2}{\rule{0.573pt}{0.400pt}}
\multiput(1304.00,93.92)(1.329,-0.493){23}{\rule{1.146pt}{0.119pt}}
\multiput(1304.00,94.17)(31.621,-13.000){2}{\rule{0.573pt}{0.400pt}}
\multiput(1338.00,82.58)(1.329,0.493){23}{\rule{1.146pt}{0.119pt}}
\multiput(1338.00,81.17)(31.621,13.000){2}{\rule{0.573pt}{0.400pt}}
\multiput(1372.00,93.92)(1.290,-0.493){23}{\rule{1.115pt}{0.119pt}}
\multiput(1372.00,94.17)(30.685,-13.000){2}{\rule{0.558pt}{0.400pt}}
\put(1203.0,108.0){\rule[-0.200pt]{16.140pt}{0.400pt}}
\put(90,82){\makebox(0,0){$+$}}
\put(124,82){\makebox(0,0){$+$}}
\put(157,82){\makebox(0,0){$+$}}
\put(191,82){\makebox(0,0){$+$}}
\put(225,82){\makebox(0,0){$+$}}
\put(259,82){\makebox(0,0){$+$}}
\put(292,82){\makebox(0,0){$+$}}
\put(326,82){\makebox(0,0){$+$}}
\put(360,82){\makebox(0,0){$+$}}
\put(394,82){\makebox(0,0){$+$}}
\put(427,82){\makebox(0,0){$+$}}
\put(461,82){\makebox(0,0){$+$}}
\put(495,82){\makebox(0,0){$+$}}
\put(528,82){\makebox(0,0){$+$}}
\put(562,82){\makebox(0,0){$+$}}
\put(596,95){\makebox(0,0){$+$}}
\put(630,82){\makebox(0,0){$+$}}
\put(663,108){\makebox(0,0){$+$}}
\put(697,147){\makebox(0,0){$+$}}
\put(731,121){\makebox(0,0){$+$}}
\put(765,134){\makebox(0,0){$+$}}
\put(798,147){\makebox(0,0){$+$}}
\put(832,121){\makebox(0,0){$+$}}
\put(866,121){\makebox(0,0){$+$}}
\put(899,185){\makebox(0,0){$+$}}
\put(933,147){\makebox(0,0){$+$}}
\put(967,134){\makebox(0,0){$+$}}
\put(1001,302){\makebox(0,0){$+$}}
\put(1034,185){\makebox(0,0){$+$}}
\put(1068,121){\makebox(0,0){$+$}}
\put(1102,160){\makebox(0,0){$+$}}
\put(1135,160){\makebox(0,0){$+$}}
\put(1169,173){\makebox(0,0){$+$}}
\put(1203,108){\makebox(0,0){$+$}}
\put(1237,108){\makebox(0,0){$+$}}
\put(1270,108){\makebox(0,0){$+$}}
\put(1304,95){\makebox(0,0){$+$}}
\put(1338,82){\makebox(0,0){$+$}}
\put(1372,95){\makebox(0,0){$+$}}
\put(1405,82){\makebox(0,0){$+$}}
\put(1439,82){\makebox(0,0){$+$}}
\put(1349,817){\makebox(0,0){$+$}}
\put(1405.0,82.0){\rule[-0.200pt]{8.191pt}{0.400pt}}
\put(1279,776){\makebox(0,0)[r]{$50\%$}}
\multiput(1299,776)(20.756,0.000){5}{\usebox{\plotpoint}}
\put(1399,776){\usebox{\plotpoint}}
\put(90,82){\usebox{\plotpoint}}
\multiput(90,82)(20.756,0.000){2}{\usebox{\plotpoint}}
\multiput(124,82)(20.756,0.000){2}{\usebox{\plotpoint}}
\put(173.02,82.00){\usebox{\plotpoint}}
\multiput(191,82)(20.756,0.000){2}{\usebox{\plotpoint}}
\multiput(225,82)(20.756,0.000){2}{\usebox{\plotpoint}}
\put(276.80,82.00){\usebox{\plotpoint}}
\multiput(292,82)(20.756,0.000){2}{\usebox{\plotpoint}}
\multiput(326,82)(20.756,0.000){2}{\usebox{\plotpoint}}
\put(380.58,82.00){\usebox{\plotpoint}}
\multiput(394,82)(20.756,0.000){2}{\usebox{\plotpoint}}
\put(442.84,82.00){\usebox{\plotpoint}}
\multiput(461,82)(20.756,0.000){2}{\usebox{\plotpoint}}
\multiput(495,82)(20.756,0.000){2}{\usebox{\plotpoint}}
\put(546.62,82.00){\usebox{\plotpoint}}
\multiput(562,82)(20.756,0.000){2}{\usebox{\plotpoint}}
\multiput(596,82)(20.756,0.000){2}{\usebox{\plotpoint}}
\put(650.40,82.00){\usebox{\plotpoint}}
\multiput(663,82)(20.756,0.000){2}{\usebox{\plotpoint}}
\put(712.66,82.00){\usebox{\plotpoint}}
\multiput(731,82)(20.756,0.000){2}{\usebox{\plotpoint}}
\multiput(765,82)(20.756,0.000){2}{\usebox{\plotpoint}}
\put(816.44,82.00){\usebox{\plotpoint}}
\multiput(832,82)(20.756,0.000){2}{\usebox{\plotpoint}}
\put(878.71,82.00){\usebox{\plotpoint}}
\multiput(899,82)(20.756,0.000){2}{\usebox{\plotpoint}}
\multiput(933,82)(20.756,0.000){2}{\usebox{\plotpoint}}
\put(982.49,82.00){\usebox{\plotpoint}}
\multiput(1001,82)(20.756,0.000){2}{\usebox{\plotpoint}}
\multiput(1034,82)(20.756,0.000){2}{\usebox{\plotpoint}}
\put(1086.26,82.00){\usebox{\plotpoint}}
\multiput(1102,82)(20.756,0.000){2}{\usebox{\plotpoint}}
\put(1148.53,82.00){\usebox{\plotpoint}}
\multiput(1169,82)(19.387,7.413){2}{\usebox{\plotpoint}}
\multiput(1203,95)(11.358,17.372){3}{\usebox{\plotpoint}}
\multiput(1237,147)(16.303,-12.845){2}{\usebox{\plotpoint}}
\multiput(1270,121)(11.358,17.372){3}{\usebox{\plotpoint}}
\multiput(1304,173)(5.838,19.918){6}{\usebox{\plotpoint}}
\multiput(1338,289)(8.294,19.026){4}{\usebox{\plotpoint}}
\multiput(1372,367)(19.311,-7.607){2}{\usebox{\plotpoint}}
\multiput(1405,354)(11.513,17.270){3}{\usebox{\plotpoint}}
\put(1439,405){\usebox{\plotpoint}}
\put(90,82){\makebox(0,0){$\times$}}
\put(124,82){\makebox(0,0){$\times$}}
\put(157,82){\makebox(0,0){$\times$}}
\put(191,82){\makebox(0,0){$\times$}}
\put(225,82){\makebox(0,0){$\times$}}
\put(259,82){\makebox(0,0){$\times$}}
\put(292,82){\makebox(0,0){$\times$}}
\put(326,82){\makebox(0,0){$\times$}}
\put(360,82){\makebox(0,0){$\times$}}
\put(394,82){\makebox(0,0){$\times$}}
\put(427,82){\makebox(0,0){$\times$}}
\put(461,82){\makebox(0,0){$\times$}}
\put(495,82){\makebox(0,0){$\times$}}
\put(528,82){\makebox(0,0){$\times$}}
\put(562,82){\makebox(0,0){$\times$}}
\put(596,82){\makebox(0,0){$\times$}}
\put(630,82){\makebox(0,0){$\times$}}
\put(663,82){\makebox(0,0){$\times$}}
\put(697,82){\makebox(0,0){$\times$}}
\put(731,82){\makebox(0,0){$\times$}}
\put(765,82){\makebox(0,0){$\times$}}
\put(798,82){\makebox(0,0){$\times$}}
\put(832,82){\makebox(0,0){$\times$}}
\put(866,82){\makebox(0,0){$\times$}}
\put(899,82){\makebox(0,0){$\times$}}
\put(933,82){\makebox(0,0){$\times$}}
\put(967,82){\makebox(0,0){$\times$}}
\put(1001,82){\makebox(0,0){$\times$}}
\put(1034,82){\makebox(0,0){$\times$}}
\put(1068,82){\makebox(0,0){$\times$}}
\put(1102,82){\makebox(0,0){$\times$}}
\put(1135,82){\makebox(0,0){$\times$}}
\put(1169,82){\makebox(0,0){$\times$}}
\put(1203,95){\makebox(0,0){$\times$}}
\put(1237,147){\makebox(0,0){$\times$}}
\put(1270,121){\makebox(0,0){$\times$}}
\put(1304,173){\makebox(0,0){$\times$}}
\put(1338,289){\makebox(0,0){$\times$}}
\put(1372,367){\makebox(0,0){$\times$}}
\put(1405,354){\makebox(0,0){$\times$}}
\put(1439,405){\makebox(0,0){$\times$}}
\put(1349,776){\makebox(0,0){$\times$}}
\sbox{\plotpoint}{\rule[-0.400pt]{0.800pt}{0.800pt}}%
\sbox{\plotpoint}{\rule[-0.200pt]{0.400pt}{0.400pt}}%
\put(1279,735){\makebox(0,0)[r]{$100\%$}}
\sbox{\plotpoint}{\rule[-0.400pt]{0.800pt}{0.800pt}}%
\put(1299.0,735.0){\rule[-0.400pt]{24.090pt}{0.800pt}}
\put(90,82){\usebox{\plotpoint}}
\multiput(1305.41,82.00)(0.503,0.768){61}{\rule{0.121pt}{1.424pt}}
\multiput(1302.34,82.00)(34.000,49.045){2}{\rule{0.800pt}{0.712pt}}
\multiput(1339.41,134.00)(0.503,1.731){61}{\rule{0.121pt}{2.929pt}}
\multiput(1336.34,134.00)(34.000,109.920){2}{\rule{0.800pt}{1.465pt}}
\multiput(1373.41,250.00)(0.503,2.188){59}{\rule{0.121pt}{3.642pt}}
\multiput(1370.34,250.00)(33.000,134.440){2}{\rule{0.800pt}{1.821pt}}
\multiput(1406.41,392.00)(0.503,6.799){61}{\rule{0.121pt}{10.859pt}}
\multiput(1403.34,392.00)(34.000,430.462){2}{\rule{0.800pt}{5.429pt}}
\put(90,82){\makebox(0,0){$\ast$}}
\put(124,82){\makebox(0,0){$\ast$}}
\put(157,82){\makebox(0,0){$\ast$}}
\put(191,82){\makebox(0,0){$\ast$}}
\put(225,82){\makebox(0,0){$\ast$}}
\put(259,82){\makebox(0,0){$\ast$}}
\put(292,82){\makebox(0,0){$\ast$}}
\put(326,82){\makebox(0,0){$\ast$}}
\put(360,82){\makebox(0,0){$\ast$}}
\put(394,82){\makebox(0,0){$\ast$}}
\put(427,82){\makebox(0,0){$\ast$}}
\put(461,82){\makebox(0,0){$\ast$}}
\put(495,82){\makebox(0,0){$\ast$}}
\put(528,82){\makebox(0,0){$\ast$}}
\put(562,82){\makebox(0,0){$\ast$}}
\put(596,82){\makebox(0,0){$\ast$}}
\put(630,82){\makebox(0,0){$\ast$}}
\put(663,82){\makebox(0,0){$\ast$}}
\put(697,82){\makebox(0,0){$\ast$}}
\put(731,82){\makebox(0,0){$\ast$}}
\put(765,82){\makebox(0,0){$\ast$}}
\put(798,82){\makebox(0,0){$\ast$}}
\put(832,82){\makebox(0,0){$\ast$}}
\put(866,82){\makebox(0,0){$\ast$}}
\put(899,82){\makebox(0,0){$\ast$}}
\put(933,82){\makebox(0,0){$\ast$}}
\put(967,82){\makebox(0,0){$\ast$}}
\put(1001,82){\makebox(0,0){$\ast$}}
\put(1034,82){\makebox(0,0){$\ast$}}
\put(1068,82){\makebox(0,0){$\ast$}}
\put(1102,82){\makebox(0,0){$\ast$}}
\put(1135,82){\makebox(0,0){$\ast$}}
\put(1169,82){\makebox(0,0){$\ast$}}
\put(1203,82){\makebox(0,0){$\ast$}}
\put(1237,82){\makebox(0,0){$\ast$}}
\put(1270,82){\makebox(0,0){$\ast$}}
\put(1304,82){\makebox(0,0){$\ast$}}
\put(1338,134){\makebox(0,0){$\ast$}}
\put(1372,250){\makebox(0,0){$\ast$}}
\put(1405,392){\makebox(0,0){$\ast$}}
\put(1439,845){\makebox(0,0){$\ast$}}
\put(1349,735){\makebox(0,0){$\ast$}}
\put(90.0,82.0){\rule[-0.400pt]{292.453pt}{0.800pt}}
\sbox{\plotpoint}{\rule[-0.200pt]{0.400pt}{0.400pt}}%
\put(90.0,82.0){\rule[-0.200pt]{0.400pt}{186.938pt}}
\put(90.0,82.0){\rule[-0.200pt]{324.974pt}{0.400pt}}
\put(1439.0,82.0){\rule[-0.200pt]{0.400pt}{186.938pt}}
\put(90.0,858.0){\rule[-0.200pt]{324.974pt}{0.400pt}}
\end{picture}

\caption{Plots of the orders of 100 subgroups sampled as $\langle u_1,\ldots,u_5\rangle\leq U(10,3)$
with three different densities $\epsilon$: $(+,\epsilon=1/10)$, $(\times,\epsilon=1/2)$,
and $(\ast,\epsilon=1)$. Greater density makes group order, and structure, less 
varied. The $X$-axis is labelled by the group order, while the $Y$-axis is 
labelled by the percentage of the sampled groups.
}\label{fig:scatter-plot}
\end{figure}
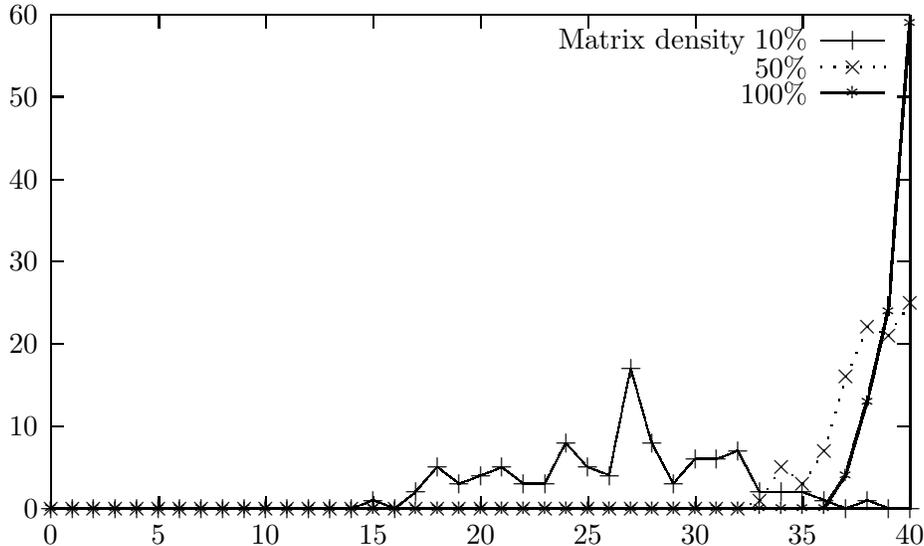

Secondly, once we have selected a suitably random upper unitriangular group $U$,
an extension to this group is selected by adding to its block-diagonal.  That process 
consists of choosing a partition of the series of common generalized eigen $1$-spaces
(the fixed point flag) of the group $U$. In each block we select a random (almost) quasisimple group
with a representation of dimension at most the size of the block.  We further allow for 
multiplicity and for permutations of isomorphic modules.  This extends $U$ 
first by a block-diagonal abelian group, then a product of simple groups, followed by
a layer of abelian groups, and a final layer of permutations.  It is well known that every finite
group has such a decomposition, often referred to as the 
\emph{Babai--Beals} filtration \cite{BB99}.  We note our own filtration descends
to the Fitting subgroup instead of to the solvable radical as in the Babai--Beals treatment;
revisit Figure~\ref{fig:extend-random} for an illustration.

Along with the proposal of such a model inevitably come questions as to its efficacy.
We address two of the more critical issues here. First, our model samples a large number of
groups:

\begin{mainthm}
\label{thm:coverage}
A random $d \times d$ group over $\mathbb{Z}/b$, as above, 
samples from each of the following classes of groups.
\begin{enumerate}[(i)]
\item finite abelian groups of exponent dividing $b$
and order at most $O(b^{d^2/2})$.
\item For each $\mathbb{Z}/b$-bilinear map $*:U\times V\bmto W$,
with ${\rm rank}~U+{\rm rank}~V+{\rm rank}~W\leq d$, the Brahana groups 
\cite{Bra35}
$Bh(*)=U\times V\times W$ with product (also denoted by $*$) as
\begin{align*}
    (u,v,w)*(u',v',w') & = (u+u',v+v',w+w'+u*v').
\end{align*} 
\item For each alternating $\mathbb{Z}/b$-bilinear map 
$*:U\times U\bmto W$,
with ${\rm rank}~U+{\rm rank}~W\leq d$, the Baer groups \cite{Bae}
$Br(*)=U\times W$ with product (also denoted by $*$) as
\begin{align*}
    (u,w)*(u',w') & = (u+u',w+w'+u*v').
\end{align*}

\item All classical groups $T(r,q)$ for rank
$r$ over $\F_q$ where $r\log q\leq d$.

\item All permutation groups of degree at most $d$.

\end{enumerate}
In particular this class of groups samples from $p^{\Theta(d^3)}$ pairwise non-isomorphic groups 
of order $p^d$ which is a logarithmically dense set of all isomorphism types of groups of order $p^d$.
Furthermore, this class of groups is closed to direct products and subdirect products.
\end{mainthm}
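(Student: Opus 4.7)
The plan is to verify each of the five class-membership claims (i)--(v) by exhibiting an explicit realization inside the random model, and then to deduce the counting and closure assertions.

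For (ii), (iii), and (i) I would work inside the unipotent layer $U \leq U(d,\mathbb{Z}/b)$. Given a bilinear map $*\colon U' \times V' \bmto W'$ with $\dim U' + \dim V' + \dim W' \leq d$, the Brahana group $Bh(*)$ realizes as the block-unitriangular subgroup of $U(d,\mathbb{Z}/b)$ placing $U'$ in the $(1,2)$-block, $V'$ in the $(2,3)$-block, and $W'$ in the $(1,3)$-block; the required product law is exactly the commutator identity displayed in Section~\ref{sec:intro} under ``First thread''. Baer groups $Br(*)$ in (iii) follow by identifying the two outer blocks via the alternation of $*$. Abelian groups of exponent dividing $b$ and order $O(b^{d^2/2})$ are the degenerate $*=0$ case, and can also be exhibited as products of commuting sparse unitriangular generators along a single superdiagonal (or as block-diagonal abelian extensions at the next layer). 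In every case each generator has a single off-diagonal nonzero entry, so $\epsilon$-sparsity is automatic for any fixed $\epsilon \geq 1/d^2$.

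Items (iv) and (v) arise from the block-diagonal extension layer of the model. A classical group $T(r,q)$ with $r\log q \leq d$ embeds into $\GL(d,\F_p)$ via its natural representation on $\F_q^r$ composed with the standard inclusion $\F_q \hookrightarrow \Mat_{\log q}(\F_p)$; this is precisely the kind of almost-quasisimple block the model samples on generalized eigenspaces. A permutation group of degree $d$ is realized at the top level of the model by permuting $d$ isomorphic one-dimensional blocks.

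For the counting and closure claims, I would leverage (iii) against the Higman--Sims enumeration. Fix $m = 2d/3$ and $n = d/3$ and consider alternating bilinear maps $\F_p^m \times \F_p^m \bmto \F_p^n$. A standard dimension count gives
\[
\dim(\Lambda^2 \F_p^m \otimes \F_p^n) - \dim(\GL(m) \times \GL(n)) = \tbinom{m}{2} n - m^2 - n^2 = \tfrac{2}{27} d^3 + O(d^2),
\]
so generic $\GL(m)\times\GL(n)$-orbits yield $p^{\Theta(d^3)}$ pairwise inequivalent Baer groups of order $p^d$, matching the Higman--Sims upper bound $p^{(2/27)d^3 + O(d^{8/3})}$ on all groups of order $p^d$ up to constants in the exponent; hence the set is log-dense. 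Closure under direct products is immediate by block-diagonal concatenation in dimension $d_1+d_2$, and subdirect products follow by forming the usual pullback inside a direct product. I expect the main technical obstacle to be verifying that inequivalent alternating bimaps yield genuinely non-isomorphic Baer groups, rather than merely inequivalent pairings; this requires a careful invocation of Baer's classical correspondence between alternating bilinear maps and class-$2$ exponent-$p$ groups to translate bimap equivalence into group isomorphism.
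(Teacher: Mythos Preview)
Your proposal takes essentially the same route as the paper: block-unitriangular embeddings for (i)--(iii), the block-diagonal layer of the model for (iv)--(v), the Higman-type orbit count for the $p^{\Theta(d^3)}$ claim, and block-diagonal concatenation for closure under (sub)direct products.

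Two small corrections are worth flagging. First, in (i) your ``single superdiagonal'' generators $I+a_iE_{i,i+1}$ do \emph{not} commute (e.g.\ $[I+aE_{12},I+bE_{23}]=I+abE_{13}$), so they do not produce abelian groups; the paper instead uses $I+a_{ij}E_{ij}$ with $i<d/2\leq j$, which pairwise commute and realize any abelian group of exponent dividing $b$ and rank up to roughly $d^2/4$. Second, your embedding of $Bh(*)$ by putting $U'$, $V'$, $W'$ directly into the $(1,2)$-, $(2,3)$-, $(1,3)$-blocks only works when the block dimensions happen to match, i.e.\ when $U'\cong\Mat(r_1\times r_2)$ etc.; the paper's embedding is the asymmetric $(1,r,s)$-block form
\[
\begin{bmatrix} 1 & u & w \\ 0 & I_r & R_v \\ 0 & 0 & I_s \end{bmatrix},
\]
with $R_v$ the matrix of $u\mapsto u*v$, and the Baer case is obtained by setting $R_v=R_u$ rather than by ``identifying the two outer blocks''. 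Your counting argument is actually more explicit than the paper's (which simply cites Higman), and your attention to Baer's correspondence for translating bimap inequivalence into group non-isomorphism is a point the paper leaves implicit.
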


Secondly, for groups selected from our model, even a genus-1, 1-WL refinement results in a filter with 
constant \emph{average} width. (Note, constant max width would result in a polynomial-time isomorphism test.)

\begin{mainthm}\label{thm:random-refine}
For a random group $G\leq U(d,b)$ sampled by our model, one of the following cases 
occurs on average when $d$ and $b$ are large enough:
\begin{enumerate}[(a)]
\item $O_b(G)$ is abelian; or
\item $G$ has characteristic WL-filter refinement of length $\Theta(\log |G|)$.
\end{enumerate}
\end{mainthm}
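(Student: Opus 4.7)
The plan is to dichotomize on whether $O_b(G)$ is abelian (case~(a)) or not (case~(b)), and then, in case~(b), build a characteristic filter whose refinement under the genus-$1$, $1$-WL procedure of Theorem~\ref{thm:hypergraph} attains length $\Theta(\log |G|)$.

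First, I would unpack the random model at the level of commutators. The generators $u_1,\ldots,u_\ell$ are $\epsilon$-sparse unitriangular matrices in $U(d,b)$ (with $b=p$ prime for this part), and $O_b(G)=U=\langle u_1,\ldots,u_\ell\rangle$. Using the bilinear commutator formula already displayed in the excerpt, the off-diagonal entries of $[u_i,u_j]$ are sums of products of entries on strictly adjacent diagonals. When $\epsilon$ is below a threshold making all these products vanish with high probability, no new off-diagonals are produced and $U$ is abelian, giving case~(a). Otherwise, case~(b) holds.

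Second, for case~(b) I would take the initial filter $\phi$ to be the one constructed in Section~\ref{subsec:WL-outline}: the prime decomposition together with the $p$-power/commutator recursion that refines each $\phi_{s_i e_i}$ to $\phi_{(s_i+1)e_i}$. On a random sparse $U$ this is a Jennings--Lazard-style filtration whose non-trivial layers $L_s$ are elementary abelian $\mathbb{F}_p$-spaces, and whose length is governed by the nilpotency class of $U$. For $\epsilon$-sparse generators above the threshold of the previous step, an induction on the weight of commutators shows the nilpotency class is $\Theta(d)$ a.a.s., which already yields an initial filter of length $\Omega(d)$.

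Third, and this is the crux, I would invoke Theorem~\ref{thm:hypergraph} at $(g,k)=(1,1)$. Within each layer $L_s$, the hypergraph $\mathcal{H}^{(1,1)}(\phi)$ records, for every point of $\mathrm{PG}_0(L_s)$, its images and preimages under the graded commutators $[\,,\,]_{st}\colon L_s\times L_t\bmto L_{s+t}$ together with the inter-layer edges in $\mathcal{K}_{st}$. The probabilistic claim is that on a random sparse $U$ the graded bimaps generically separate points: the resulting stable coloring partitions each $L_s$ into singleton classes. Each such class is characteristic, so the filter can be refined so that every new layer has dimension $1$, giving $\sum_s \dim L_s = \log_p |U| = \Theta(\log |U|)$ refined layers. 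The block-diagonal extension in the model (abelian, (quasi-)simple, abelian, permutation strata along the Babai--Beals filtration referenced in Section~\ref{subsec:random}) is characteristic and contributes $\Theta(\log(|G|/|U|))$ further layers, totalling $\Theta(\log |G|)$.

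The main obstacle will be the separation claim in the third step. I would attack it via a rank-distribution analysis of random slices of the graded commutator tensor: sparsity keeps the slices sufficiently generic to avoid the collapse phenomenon of Theorem~\ref{thm:dense-random}, while a union bound over the $O(p^d)$ points in each layer rules out degenerate coincidences that would glue two $1$-spaces into a single color class. A secondary, more routine issue is compatibility of the within-layer refinement across the inter-layer hyperedges $\mathcal{K}_{st}$; this is handled by the hereditary property of Theorem~\ref{thm:hypergraph}(i), which ensures the refined coloring remains a valid filter coloring as we recurse on quotients $G/\phi_s$.
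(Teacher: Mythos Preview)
Your proposal diverges from the paper's argument in its core mechanics, and two of the steps you rely on are not established.

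First, the claim in your second step that the nilpotency class of a sparse random $U$ is $\Theta(d)$ a.a.s.\ is precisely one of the questions the paper leaves open (see the ``convex or concave'' question in Section~\ref{sec:random}); you cannot invoke it as a lemma. Second, the ``separation into singletons'' claim in your third step is far stronger than anything proved here and would need a genuine argument: you assert that the genus-$1$ color refinement distinguishes every $1$-space in every layer, but the union bound you sketch over $O(p^d)$ points against coincidences of rank data is delicate (the rank loci are determinantal varieties whose point counts over $\F_q$ are controlled only asymptotically via Lang--Weil, and nothing prevents large monochromatic sets that nonetheless generate the whole layer). The paper never claims, and does not need, singleton color classes.

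The paper's actual route is a dichotomy on the number $\ell$ of blocks in the generalized eigen-$1$-space flag of $U$. When $\ell$ is large, the refinement is not obtained via WL at all: instead one applies a Morita condensation $U\mapsto \overline{eUe}\leq U(\ell,p)$, observes that the condensed group is dense in $U(\ell,p)$ (Theorem~\ref{thm:dense-random} and Corollary~\ref{cor:Malgione-chain}), and then imports Maglione's $\Theta(\ell^2)$ adjoint filter refinement of $U(\ell,p)$ back through the functorial condensation (Theorem~\ref{thm:longchain-refine}). When $\ell$ is bounded (and $\geq 3$), one uses the genus-$1$ rank coloring on $[x,-]\colon L_t\to L_{s+t}$: the low-rank locus is a determinantal variety, generically nonempty and proper, and the induced bipartite incidence with $\ker[x,-]$ is a random subspace arrangement that is not equitable, so WL produces a \emph{proper} refinement (Theorem~\ref{thm:shortchain-refine}). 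Iterating the few-blocks refinement eventually pushes $\ell$ into the many-blocks regime. Your outline misses the Maglione/condensation half entirely and overreaches on the WL half.
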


It was predicted in~\cite{Wilson:alpha} that most $p$-groups $P$ had
characteristic filters of length $O(\log |P|)$, owing in part to a result of 
Helleloid--Martin \cite{Helleloid-Martin}.  However,
outside of examples in \citelist{\cite{Wilson:alpha}\cite{Maglione:filters}} there
where no large classes of groups where it could be demonstrated that such a filter could
be efficiently computed.  
In a survey of 500,000,000 groups of order $2^{10}$ 
conducted by J. Maglione and the fifth author, it was discovered that 
$96\%$ of groups admitted a filter refinement by algebraic methods, with most
stabilizing at $10=\log_2 1024$ terms.
Furthermore, in a sample of 100,000 $p$-groups having orders between
$100$ and $3^{70}$, most filters refined to a factor of about $10$ times the original length.
Theorem~\ref{thm:random-refine} offers a theoretical explanation for those experimental results.
    
\subsection{Testing pseudo-isometry of alternating bilinear maps}
\label{subsec: average-case in intro}
One base case for which the application of Weisfeiler--Leman is unlikely to go much 
further is $p$-groups of class $2$ and exponent $p$. (This special case has long been
considered as difficult as the general group isomorphism problem.)   
As we have seen in Baer's correspondence \cite{Bae} (cf.~Theorem~\ref{thm:coverage} (iii)), 
when $p$ is odd
testing isomorphism of such groups is equivalent to the following 
problem: given two alternating bilinear maps $\alpha, \beta: 
U\times U\bmto V$, decide whether they are pseudo-isometric, that is, whether they 
are the same under the natural action of $\GL(U)\times \GL(V)$. 

Let $\Lambda(n, q)$ denote the linear 
space of all $n\times n$ alternating matrices over $\F_q$, namely 
the $n\times n$ matrices $G$ such that $v^{t}Gv=0$ for all 
$v\in\F^n$. 
Note, $v\mapsto v^t$ and $G\mapsto G^t$ denotes transposition on vectors and matrices, respectively.
An alternating bilinear map $\alpha\colon U\times U\bmto V$ with 
$U\cong \F_q^n$ and 
$V\cong \F_q^m$ will be represented by an 
$m$-tuple of $n\times n$ such matrices. 
Testing pseudo-isometry of alternating bilinear maps translates to the following: 
given two $m$-tuples of $n\times n$ alternating matrices over $\F_q$, $\bG=(G_1, 
\dots, G_m)$ and $\bH=(H_1, \dots, H_m)$, decide whether there exists $T\in \GL(n, 
q)$, such that the \emph{linear spans} of $T^t\bG T:=(T^tG_1T, \dots, T^tG_mT)$ and $\bH$ 
are the same. For an odd prime $p$, testing the pseudo-isometry of 
alternating bilinear maps over $\F_p$ in time $p^{O(n+m)}$ is equivalent to testing 
isomorphism of $p$-groups of 
class $2$ and exponent $p$ in time polynomial in group order.
Also note that the na\"{\i}ve brute-force algorithm---enumerating all possible 
$T\in\GL(n, q)$---takes time $q^{n^2}\cdot \poly(n, \log q)$. 

In \cite{LQ} it was shown that when $n$ and $m$ are linearly 
related, for 
all but at most $1/q^{\Omega(n)}$ fraction of $\bG\in\Lambda(n, q)^m$, there is an 
algorithm in time $q^{O(n)}$ to test isometry of $\bG$ with an arbitrary 
$\bH\in\Lambda(n, q)^m$.\footnote{The main result in \cite{LQ} is stated in a 
so-called linear algebraic Erd\H{o}s--R\'enyi model. This model is not essentially 
different from sampling random alternating matrix tuples. See also 
Remark %s~\ref{rem:random_tuple} and~
\ref{rem:LinER} for some details.} 
The technique used to derive this result merits further comment. It was inspired by, 
and can be viewed as a linear algebraic analogue of, a 
classical combinatorial idea from graph isomorphism testing, namely the 
individualization and refinement technique. More specifically, it follows the use 
and analysis of this technique by Babai, Erd\H{o}s, and 
Selkow, in the first efficient average-case algorithm for graph isomorphism 
\cite{BES}.
By incorporating the genus concept \cite{BMW} into the individualization 
and refinement scheme as used in \cite{LQ,BES} we can both extend and 
improve this result and at the same time greatly simplify the algorithm. 
Indeed, we have implemented an effective version of this new algorithm
in {\sc Magma} \cite{MAGMA}. We prove:

\begin{mainthm}
\label{thm:average}
Suppose $m$ is larger than some constant. There is an algorithm that, for all 
but at most $1/q^{\Omega(nm)}$ fraction of $\bG\in\Lambda(n, q)^m$, 
tests the pseudo-isometry of $\bG$ to an arbitrary $m$-tuple of alternating 
matrices $\bH$, in time $q^{O(n+m)}$. 
\end{mainthm}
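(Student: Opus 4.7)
The plan is to adapt the linear-algebraic individualize-and-refine framework of Li--Qiao \cite{LQ} (itself modeled after Babai--Erd\H{o}s--Selkow \cite{BES}), but replace its vector-by-vector individualization inside $U$ with a \emph{genus-based} individualization that exploits the codomain $V$ of the tensor. Concretely, I view $\bG = (G_1, \dots, G_m)$ as a bilinear map $\alpha\colon U \times U \bmto V$ with $U = \F_q^n$ and $V = \F_q^m$. Each nonzero functional $\ell \in V^*$ yields a genus-$1$ slice $\ell \circ \alpha\colon U \times U \bmto \F_q$, namely a single alternating form whose rank, radical, and coarser invariants are pseudo-isometry invariants of $\alpha$. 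The orbit data of these roughly $q^m$ slices produce a canonical coloring of the projective points of $V^*$, of exactly the sort that fits into the hypergraph-refinement framework of Theorem~\ref{thm:hypergraph}.

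With the color classes on $\mathrm{PG}_0(V^*)$ in hand, the refinement step proceeds as follows. Using the distribution of slice ranks and radicals one identifies a canonical small-dimensional subspace $V_0 \leq V^*$ on which the pseudo-isometry action can only permute a bounded number of distinguished points; this requires only a controlled amount of branching. Pulling back through $\alpha$, the chosen slices in $V_0$ pin down a canonical flag inside $U$ (their radicals and their pairwise ``co-radicals''). Together these fix enough of $U$ and $V$ that the residual pseudo-isometry problem becomes a linear-algebraic centralizer-and-coset computation of polynomial size, solvable within the $q^{O(n+m)}$ budget; to compare $\bG$ with $\bH$, the same canonical procedure is run on both tuples and the resulting normal forms are matched.

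The probabilistic heart of the argument is to show that on all but a $q^{-\Omega(nm)}$ fraction of $\bG \in \Lambda(n,q)^m$, the genus-$1$ slice data is already ``generic enough'' that only a constant number of individualization steps suffice. The improvement over \cite{LQ} from $q^{-\Omega(n)}$ to $q^{-\Omega(nm)}$ comes from exploiting invariants in both $U$ and $V$ simultaneously: the ``bad'' events (unexpected coincidences of rank profiles, degenerate radicals, large hidden pseudo-isometries) are cut out by low-degree polynomial conditions in the $\binom{n}{2}m$ free coordinates of $\bG$, so a Schwartz--Zippel union bound over these conditions should yield the claimed $q^{-\Omega(nm)}$ rate, provided $m$ exceeds a small constant.

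The main obstacle will be showing that the slice data is \emph{genuinely distinguishing}: that the colored hypergraph on $\mathrm{PG}_0(V^*)$ really does cut the $\GL(U)\times \GL(V)$-action down to a polynomial number of cosets without leaving a large residual symmetry. Mild coincidences (two slices of equal rank, for instance) are unavoidable, so the analysis must show that generically the coinciding slices cluster into boundedly many color classes, each of which admits a direct brute-force enumeration within the allotted $q^{O(n+m)}$ time. A secondary concern will be verifying that the refinement interacts cleanly with the genus-$g$ extensions of \cite{BMW} used elsewhere in the paper, so that the resulting {\sc Magma} implementation inherits the same guarantees as the theoretical algorithm.
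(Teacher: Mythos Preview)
Your proposal is not the paper's approach, and it has real gaps that would prevent it from going through as stated.

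The paper's proof is far more direct and does not use genus-$1$ slice colorings, hypergraph refinement, or canonical forms at all. Instead it does the following. Fix a constant $c$ (they use $c=20$) and set $\bA=(G_1,\dots,G_c)$. Compute the adjoint algebra $\Adj(\bA)=\{(A,D):A\bG=\bG D\}$, which is a linear space and hence easy to handle over any field. The key probabilistic lemma (Proposition~\ref{prop:average-analysis-key}) is that a random $\bA\in\Lambda(n,q)^c$ is \emph{stable} (every proper nonzero subspace strictly grows under $\bA$), which forces $|\Adj(\bA)|\le q^n$; this is proved by a direct union bound over subspaces, reducing to rank estimates for random rectangular matrices. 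Given $|\Adj(\bA)|\le q^n$, the algorithm simply enumerates all $q^{cm}$ possible images $\bB\in\cH^c$, and for each enumerates the at most $q^n$ elements of $\Adj(\bA,\bB)$, checking each candidate directly. That yields the $q^{O(n+m)}$ bound. The improvement from $q^{-\Omega(n)}$ to $q^{-\Omega(nm)}$ is obtained by a one-line independence trick: slice $\bG$ into $\lfloor m/c\rfloor$ disjoint blocks of $c$ matrices each; these blocks are independent, each fails the stability test with probability $q^{-\Omega(n)}$, so all of them fail with probability $q^{-\Omega(n)\cdot\lfloor m/c\rfloor}=q^{-\Omega(nm)}$.

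Against this, two concrete problems in your plan. First, for a generic $\bG$ almost every genus-$1$ slice $\ell\circ\alpha$ has the same (maximal) rank, so your coloring of $\mathrm{PG}_0(V^*)$ is typically monochromatic and yields no canonical small $V_0$; the refinement step you describe therefore has nothing to bite on precisely in the generic case you need. Second, your route to the $q^{-\Omega(nm)}$ failure rate via ``low-degree polynomial conditions and Schwartz--Zippel'' does not work as written: Schwartz--Zippel plus a union bound gives at best a polynomial-in-$q$ failure probability per condition, not the exponential decay $q^{-\Omega(nm)}$, and you have not identified a single event whose complement is a variety of codimension $\Omega(nm)$. The paper sidesteps both issues entirely by working with the adjoint algebra of a constant-size sub-tuple and using block independence across the $m$ coordinates.
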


We briefly outline a simplified version of the algorithm, which 
is easy to describe and straightforward to implement. A more detailed description can be 
found in Section~\ref{subsec:average-case-algo}.
The simplified version has already captured the essence 
of the strategy, but it comes with two small drawbacks. First, it does not work over fields of 
characteristic $2$.
Secondly, the average-case analysis does not achieve the level stated in Theorem~\ref{thm:average}. 
Both issues will be remedied in the algorithm 
presented in Section~\ref{subsec:average-case-algo2}, followed by a rigorous 
average-case analysis. 

Assume we are given two $m$-tuples of 
 $\bG=(G_1, \dots, G_m)$ and 
$\bH=(H_1, \dots, H_m)$ from $\Lambda(n,q)^m$ for sufficiently large $m$ and odd $q$.
Let $\cH$ be the  subspace of $\Lambda(n, q)$ spanned by $\bH$.
Take the first $c$ matrices of $\bG$ to form a tuple $\bA=(G_1,\dots,G_c)$ for some constant $c<m$. 
Note, 
\begin{quotation}
\noindent {\em every pseudo-isometry from $\bG$ to $\bH$ maps $\bA$ to a $c$-tuple 
$\bB$ of matrices in $\cH$.}
\end{quotation} 
This simple observation leads to the following algorithm. 
(We say two $c$-tuples of alternating matrices $\bA$ and $\bB$ are 
{\em isometric} if there exists
an invertible matrix $T\in\GL(n,q)$ such that $T^t\bA T=\bB$,
and the {\em autometry} group of $\bA$ is $\{T\in\GL(n,q):T^t\bA T=\bA\}$.)
First, check if the autometry group of $\bA$ is too large (larger than 
$q^{\Omega(n)}$). If so, $\bG$ does not satisfy our generic condition. Thus, 
suppose the autometry group is not too large, and
enumerate all possible $c$-tuples $\bB$ in $\cH$. Exhaustively check if 
any of them is isometric to $\bA$, and, in the case of isometry, check if 
any isometry between $\bA$ and $\bB$ extends to an pseudo-isometry between $\bG$ 
and $\bH$. The number of isometries between $\bA$ and $\bB$ is also not too large, 
because it is equal to the order of the autometry group of $\bA$.

Note that the coset of isometries between $\bA$ and $\bB$ can be computed in time 
$\poly(n,c,\log q)$ over fields of characteristic not
$2$~\cite{BW:isom,IQ}. Enumerating all possible $c$-tuples in $\bH$ incurs a 
multiplicative cost $q^{cm}$. Given an isometry between $\bA$ and $\bB$, we can 
check whether $\bG$ and $\bH$ are pseudo-isometric in $\poly(n,m,\log q)$. Thus, 
the overall time complexity is bounded above by $q^{cm}\cdot s\cdot \poly(n,m,\log 
q)$, where $s$ is the
order of the autometry group of $\bA$. As 
we shall prove in Section~\ref{sec:LiQiao}, there is an absolute constant $c$ such that
for almost all $m$-tuple of $n\times n$ alternating matrices $\bG$, 
the first $c$ matrices have autometry group of order at most $q^{O(n)}$. Thus, 
the overall time complexity of the aforementioned isometry test is $q^{O(n+m)}$ 
for almost all $\bG$ and arbitrary $\bH$.
\medskip

\noindent {\bf Performance.}~ We implemented the above algorithm in {\sc Magma} with 
some key adjustments (see Section~\ref{subsec:average-case-algo} for 
details).
The implementation is publicly available on GitHub as part of a comprehensive
collection of tools---developed and maintained by the first and last authors
and their collaborators---to compute with groups, algebras, and multilinear functions~\cite{Git}. 

Absent additional characteristic structure that can be exploited, the traditional approach to deciding 
pseudo-isometry between alternating bilinear maps $\alpha,\beta\colon V\times V\bmto W$ is as follows.
Let $\hat{\alpha},\hat{\beta}\colon V\wedge V\to W$ denote the linear maps induced by $\alpha,\beta$.
Compute the natural (diagonal) action of $\GL(V)$ on $V\wedge V$, and decide if $\ker\hat{\alpha}$
and $\ker\hat{\beta}$---each of codimension $\dim W$ in $V\wedge V$---belong to the same orbit.
An alternative version of brute force is to enumerate
$\GL(W)$ and check if one of these transformations lifts to a pseudo-isometry from $\alpha$ to $\beta$.
Which of these two brute-force options represents the best 
choice depends on the dimensions of $V$ and $W$.

Our implementation is typically an improvement over both options.
For example, in a preliminary experiment, our implementation readily decides pseudo-isometry 
between randomly selected alternating bilinear maps
$\mathbb{F}_3^5\times \mathbb{F}_3^5\bmto \mathbb{F}_3^4$, while both brute-force options
failed to complete. Note that the worst-case for all methods should be when $\alpha,\beta$ are {\em not}
isometric, since in that case one must exhaust the entire enumerated list (or orbit) to confirm non-equivalence.
However, the modifications we made tend to detect non-equivalence rather easily, since other (easily computed) invariants typically
do not align in this case. We were therefore careful to also run tests with
equivalent inputs, so as to ensure a fair comparison with default methods.

\subsection{On groups with genus-$2$ radicals}
There are examples by the fifth author of non-isomorphic $p$-groups having all 
proper nontrivial subgroups 
of a common order isomorphic, and likewise for quotients \cite{Wilson:profile}.  
No amount of local invariants will distinguish such groups, so when a WL-refinement
style algorithm such as ours encounters such a group it can go no further.  Even so, those
examples are low genus and thus isomorphism can be decided efficiently by unrelated methods \cite{BMW}.  
However, should these groups arise as $O_p(G)$ for a non-nilpotent
group $G$ it remains to contend with them as a base case.
Combining the code equivalence technique of \cite{BCGQ}, the cohomological 
techniques of \cite{GQ}, and results on the automorphism groups of low-genus 
groups \cite{BMW}, we are able to get a nearly-polynomial running time for testing 
isomorphism in an important subclass of such groups.

\begin{mainthm} 
\label{thm:central}
Let $\mathcal{G}$ be the class of groups $G$ such that $\Rad(G)$---the largest solvable normal subgroup of $G$---is 
a $p$-group of class 2, exponent $p \neq 2$, such that $G$ acts on $\Rad(G)$ by
inner automorphisms. Given groups $G_1, G_2$ of order $n$, 
it can be decided in $\poly(n)$ time if they lie in $\mathcal{G}$. If so, isomorphism can be decided, and a generating 
set for $\Aut(G_i)$ found, in time $n^{O(g + \log \log n)}$, where $g$ is the genus of $\Rad(G)$.
\end{mainthm}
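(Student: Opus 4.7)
My plan is to decompose $G\in\mathcal{G}$ into pieces, each handled by one of the three cited algorithms, then to glue the outputs cohomologically. Deciding membership in $\mathcal{G}$ is routine: compute $N:=\Rad(G)$ in $\poly(n)$, verify that $N$ is a $p$-group of class $2$ with odd exponent $p$, and check that each generator of $G$ acts on $N$ by an inner automorphism (a linear system modulo $Z(N)$ of size $\poly(n)$).

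Structurally, because $G$ acts on $N$ by inner automorphisms, setting $C:=C_G(N)$ gives $G=NC$ and $N\cap C=Z(N)$, so $G\cong (N\times C)/\{(z,z^{-1}):z\in Z(N)\}$. The quotient $C/Z(N)\cong G/N=:S$ is semisimple, and $C$ fits into a central extension
\[
1\to Z(N)\to C\to S\to 1.
\]
Hence an isomorphism $G_1\to G_2$ is the same data as a pair $(\alpha,\beta)$ of isomorphisms $\alpha\colon N_1\to N_2$ and $\beta\colon C_1\to C_2$ whose restrictions to $Z(N_1)$ agree.

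For the individual pieces I would apply BMW \cite{BMW} to $N_1,N_2$ to test $N_1\cong N_2$ and to compute generators of $\Aut(N_1)$ in time $n^{O(g)}$, and restrict these to $Z(N_1)$ to obtain a subgroup $A\leq \GL(Z(N_1))$; apply BCGQ \cite{BCGQ} to $S_1,S_2$ to compute $\Aut(S_i)$ in polynomial time; and apply Grochow--Qiao \cite{GQ} to the central extensions $C_i$, which because the base $S$ is semisimple yields, in $n^{O(\log\log n)}$ time, a coset description of all isomorphisms $C_1\to C_2$ together with their induced action on $Z(N_1)$, call this coset $B\subseteq \GL(Z(N_1))$. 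The $\log\log n$ exponent arises from the $O(\log\log n)$ bound on the number of generators needed to describe a lift of a semisimple automorphism, which controls the cohomological enumeration in \cite{GQ}.

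Gluing then amounts to computing the intersection $A\cap B$ inside $\GL(Z(N_1))$; any element of the intersection corresponds, via BMW and \cite{GQ}, to a genuine isomorphism $G_1\to G_2$, and the same analysis run with $G_1=G_2=G$ produces a generating set for $\Aut(G)$. Since $\dim Z(N_1)=O(\log n)$, $|A|\leq n^{O(g)}$, and $|B|\leq n^{O(\log\log n)}$, the intersection and its pullback cost $n^{O(g+\log\log n)}$ in total. The main obstacle I expect is precisely this gluing step: the cohomology-level parameterization of \cite{GQ} was developed for abelian normal subgroups studied in isolation, whereas here $Z(N)$ sits inside the nonabelian class-$2$ group $N$ and carries an $\Aut(N)$-action emerging from BMW. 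Showing that candidate $\beta$'s can be compared against this $\Aut(N)$-action purely on $Z(N)$---without re-examining all of $N$---and that the intersection of $A$ and $B$ can be organized compatibly with the Grochow--Qiao cohomological cosets, is the technical heart of the argument.
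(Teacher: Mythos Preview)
Your central decomposition $G=NC$ with $C=C_G(N)$ and $N\cap C=Z(N)$ is exactly the paper's $\hat Q=G|_{Z(N)}$, and reducing to a pair $(\alpha,\beta)\in\Aut(N)\times\Aut(C)$ agreeing on $Z(N)$ is the right criterion (the paper's Prop.~\ref{prop:central_amalgam}). Your bound $|A|\le n^{O(g)}$ is also correct, though it rests on a fact you do not state: by the genus hypothesis $Z(N)\cong\F^g$ over the centroid field $\F$, and every $\alpha\in\Aut(N)$ acts $\F$-\emph{semi}linearly on $Z(N)$, so $A\le\GammaL_g(\F)$ and $|\GammaL_g(\F)|\le n^{O(g)}$.

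The genuine gap is the claim $|B|\le n^{O(\log\log n)}$. This is false in general. Take the extreme case where the extension $1\to Z(N)\to C\to S\to 1$ splits, so $C\cong S\times Z(N)$; then $\Aut(C)$ surjects onto $\GL(Z(N))$, and if $\dim_{\Z_p}Z(N)=k$ then $|B|=|\GL_k(p)|\sim p^{k^2}$, which is $n^{\Theta(\log n)}$ when $|Z(N)|$ is a positive power of $n$. The $n^{O(\log\log n)}$ coming out of \cite{GQ,BCGQ} bounds $|\Aut(S)|$ (equivalently, the number of choices of the induced map on $S=C/Z(N)$), not the image of $\Aut(C)$ on $Z(N)$. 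So you cannot enumerate $B$, and intersecting $A\cap B$ as sets inside $\GL(Z(N))$ does not run in the stated time.

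The paper repairs exactly this step. It enumerates the two \emph{small} pieces, $\GammaL_g(\F)\supseteq A$ (size $n^{O(g)}$) and $\Aut(S)$ (size $n^{O(\log\log n)}$), and for each fixed pair $(\alpha,\gamma)$ uses the cohomological structure of $C$: for a fixed $\gamma\in\Aut(S)$, the set of $\beta\in\GL(Z(N))$ with $(\gamma,\beta)\in\Aut(C)$ is an \emph{affine linear} subspace of $\End(Z(N))$ (Prop.~\ref{prop:coho-iso}), and once $\alpha$ is fixed the amalgamation equation is linear in $\beta$ as well. Intersecting two affine subspaces is polynomial-time linear algebra. Thus the large set $B$ is never materialized; it is accessed only through its fibration over $\Aut(S)$ with affine fibres. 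Your outline becomes correct once you replace ``compute $B$ and intersect'' by this fibre-wise linear-algebra step.
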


\paragraph{Structure of the paper.} After presenting some preliminaries in 
Section~\ref{sec:background}, we detail the construction of the colored 
hypergraphs and prove Theorem~\ref{thm:hypergraph} in 
Section~\ref{sec:hypergraph}. We then explain the combination of filters and composition series isomorphism in \GpI, proving Theorem~\ref{thm:width-and-color} in 
Section~\ref{sec:gen-iso}. The model of random groups, and the effect of the 
refinement procedure in this model, are the subject of 
Section~\ref{sec:random}, where Theorems~\ref{thm:coverage} 
and~\ref{thm:random-refine} are proved. Finally, we provide the average-case 
algorithm for $p$-groups of class $2$ and exponent $p$ (Theorem~\ref{thm:average}) in Section~\ref{sec:LiQiao}, and the worst-case algorithm for groups with 
genus-$2$ radical (Theorem~\ref{thm:central}) in Section~\ref{sec:genus2-rad}.

%%%%%
\section{Preliminaries}
\label{sec:background}
% !TEX root = WL.tex
\paragraph{Notation.} Let $[m]=\{1,\dots,m\}$ for $m\in\N$. We use 
$\gbinom{n}{d}{q}$ to denote the Gaussian binomial coefficient with 
parameters $n$, $d$ and with base $q$. Let $\Mat(n\times n',\F)$ (resp. $\Mat(n,\F)$) 
be the linear space of all $n\times n'$ (resp. $n\times n$) matrices over $\F$. 
The general linear group of degree $n$ over $\F$ is denoted by $\GL(n,\F)$.
When $\F=\F_q$ for some prime power $q$, we write simply  
$\Mat(n,q)$ and $\GL(n,q)$ in place of $\Mat(n,\F_q)$ and $\GL(n,\F_q)$.

\paragraph{Definitions of bilinear maps.} Let $U, V, W$ be vector spaces over a field $\F$.  
A ($\F$-)bilinear map is a function $\alpha\colon U \times V \bmto W$  such that 
\begin{align*}
(\forall u\in U,\;\forall v,v'\in V,\;\forall a,b\in\F) & &
\alpha(u,av + bv') = a\alpha(u,v) + b \alpha(u,v')\\
(\forall u,u'\in U,\;\forall v\in V,\;\forall a,b\in\F) & &
\alpha(au+bu',v) = a\alpha(u,v) + b \alpha(u',v).
\end{align*} 
If  $\beta\colon U' \times V' \to W'$ 
is another $\F$-bilinear map, we regard $\beta$ as a function
on the same domain and codomain as $\alpha$ by selecting arbitrary
linear isomorphisms $U\to U'$, $V\to V'$, and $W\to W'$. We say
 $\alpha,\beta\colon U\times V\bmto W$
are \emph{isotopic} if there exists $(f,g,h)\in\GL(U)\times\GL(V)\times\GL(W)$ such 
that $\beta(f(u), g(v)) = h(\alpha(u,v))$ for all $u \in U, v 
\in V$, and \emph{principally isotopic} if there is an isotopism of the form $(f,g,1_W)$.
If $U=V$, we often require that $f=g$. We say $\alpha,\beta\colon V \times V \to W$ are 
\emph{pseudo-isometric} if there 
is an isotopism of the form $(g,g,h)$, and that they are {\em isometric}
if there is a pseudo-isometry of the form $(g,g,1_W)$. A bilinear map 
$\alpha\colon V\times V\to W$ is \emph{alternating}, if for any $v\in V$, 
$\alpha(v, v)=0$.

\paragraph{Computational models.} Suppose, after fixing bases, that $U=\F^\ell$, 
$V= \F^n$, and $W= \F^m$, which we regard as column spaces. 
A bilinear map $\alpha\colon U\times V\bmto W$ can be 
represented as a tuple of matrices $\bA=(A_1, \dots, A_m)\in \Mat(\ell\times n, \F)^m$, where 
\begin{align*}
(\forall u\in U,\,v\in V) && \alpha(u, 
v)=(u^tA_1v, \dots, u^tA_mv)^t.
\end{align*}
Suppose $\beta:U\times V\to W$ 
is represeted by $\bB=(B_1, \dots, B_m)\in\Mat(\ell\times n, \F)^m$. The concepts of isotopism and principal isotopism then have natural and straightforward interpretations in terms of these matrices. Namely, we say $\bA,\bB\in \Mat(\ell\times n, \F)^m$ are isotopic, if there exist invertible matrices 
$T\in\GL(\ell,\F)$, $S\in\GL(n,\F)$ and $R\in\GL(m,\F)$, such that 
\begin{align*}
T^t\bA S=(T^tA_1S,\dots,T^tA_mS)=\left(\sum_{i=1}^mr_{1,i}B_i,\dots,\sum_{i=1}^mr_{m,i}B_i\right)=\bB^R,
\end{align*}
where $r_{i,j}$ denotes the $(i,j)$-th entry of $R$ for $i,j\in[m]$. We say $\bA$ and $\bB$ are principal isotopic if they are isotopic with $R=I_m$.

Similarly, an alternating bilinear map $\alpha:V\times V\bmto W$ can be represented 
by a tuple of alternating matrices. Recall that an $n\times n$ matrix $G$ 
over $\F$ is alternating if for every $v\in \F^n$, $v^tGv=0$. When $\F$ is not of 
characteristic $2$, this is equivalent to the skew-symmetry condition. 
Let $\Lambda(n,\F)$ be the linear space of all $n\times n$ alternating matrices over 
$\F$ (and $\Lambda(n,q)$ when $\F=\F_q$). 
Then {\em pseudo-isometry} and {\em isometry} have analogous formulations in terms of alternating 
matrix tuples.

Given two tuples of alternating matrices $\bG,\bH\in\Lambda(n,q)^m$, the set of isometries between $\bG$ and $\bH$ is 
denoted as 
\[
\Isom(\bG,\bH)=\{T\in\GL(n,\F) : T^t\bG T=\bH\}; 
\]
the group of autometries (or self-isometries) of $\bG$ is 
denoted as $\Aut(\bG)=\Isom(\bG,\bG)$. 
The set of pseudo-isometries between $\bG$ and $\bH$ is defined as
\[
\pseudo(\bG,\bH)=\{T\in\GL(n,\F) : \exists~T'\in\GL(m,q),~T^t\bG T=H^{T'}\};
\] 
the group of pseudo-autometries (or self-pseudo-isometries) of $\bG$ is 
denoted as $\pAut(\bG)=\pseudo(\bG,\bG)$. It is straightforward to see that 
$\Isom(\bG,\bH)$ is a (possibly empty) coset of $\Aut(\bG)$, and 
$\pseudo(\bG,\bH)$ is a (possibly empty) coset of $\pAut(\bG)$.

\paragraph{Some algorithms for bilinear maps.}

 We note several of the algorithms we cite are described as
Las Vegas randomized algorithm in that they depend on factoring polynomials over finite
fields.  That is known to be deterministic if the characteristic of the field is bounded.
In our input model we are given a list of the group elements, so all primes are bounded and so we cite
these as deterministic algorithms.

\begin{thm}
\label{thm:isometry-algorithms}
Let $\alpha,\beta \colon U \times V \bmto W$  be bilinear maps
of vector spaces over a finite field $\F$.
\begin{enumerate}
\item In time $\poly(\dim U, \dim V, 
|\F|)$ 
one can decide if $\alpha,\beta$ are principally isotopic~\cite{BOW}*{Theorem~3.7}.
\item If $U=V$ and the characteristic of $\F$ is not $2$, in 
time $\poly(\dim U, |\F|)$ one can decide if $\alpha,\beta$ 
are isometric~\cite{IQ}. 
\end{enumerate}
In each case an affirmative answer is accompanied by a principal isotopism (or isometry).
\end{thm}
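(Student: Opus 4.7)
The statement summarizes results established in \cite{BOW} and \cite{IQ}, so the proof I would write is a pointer to the two underlying algorithmic frameworks together with a sketch of their key structural ingredients. Both parts share a common philosophy: encode the isotopism/isometry question as an equivalence problem on modules over an associative $*$-algebra built from $\alpha$, and then solve it using the structure theory of finite-dimensional $*$-algebras.

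For part (1), the plan is to follow the \emph{adjoint-tensor method}. Represent $\alpha$ by the tuple $\bA = (A_1,\ldots,A_m)$ as in the paper. The \emph{adjoint algebra} $\Adj(\alpha) = \{(f,g) : u^t A_i g v = (f u)^t A_i v \text{ for all } u,v,i\}$ is a finite-dimensional associative algebra with natural involution swapping $f$ and $g$, and can be computed in polynomial time by solving a linear system. A principal isotopism from $\alpha$ to $\beta$ is precisely an isomorphism of $\Adj(\alpha)$-modules intertwining the two bimap structures. The existence of such an isomorphism reduces to two subtasks: a module-isomorphism test over an associative algebra with involution (solvable in polynomial time by Wedderburn-type decomposition over finite fields) and a check that the intertwiner is unital; this is exactly the content of \cite{BOW}*{Theorem~3.7}. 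The time bound $\poly(\dim U, \dim V, |\F|)$ is then inherited from the module-isomorphism procedure.

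For part (2), when $U=V$ and $\mathrm{char}~\F \neq 2$, isometries correspond to elements of the unitary group $\cU(\Adj(\alpha),*)$ that additionally satisfy the simultaneous equation $T^t A_i T = B_i$ for all $i$. The plan is to invoke the Ivanyos--Qiao algorithm \cite{IQ}: compute $\Adj(\alpha)$ and its $*$-invariant Jacobson radical $J$; solve the isometry problem on the semisimple quotient $\Adj(\alpha)/J$, where by the classification of simple $*$-algebras over finite fields one is reduced to a bounded list of classical types (orthogonal, symplectic, unitary, and exchange), each of which admits a polynomial-time isometry test by classical linear-algebraic means; and then lift back to $\Adj(\alpha)$ via the first cohomology with coefficients in $J$, which vanishes in the relevant setting once $\mathrm{char}~\F \neq 2$ ensures that the averaging $x \mapsto \tfrac{1}{2}(x + x^*)$ is available. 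A generator of $\Isom(\alpha,\beta)$ is produced along the way, giving an affirmative answer constructively.

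The main obstacle in both cases is the $*$-algebra structure work: correctly computing the $*$-invariant radical and handling each simple $*$-algebra type without introducing factorization of high-degree polynomials. This is precisely why the hypothesis $\mathrm{char}~\F \neq 2$ enters in part (2) (the exchange/orthogonal cases behave badly in characteristic $2$) and why the dependence on $|\F|$ rather than $\log|\F|$ appears: the original results are Las Vegas with dependence on polynomial factorization, but since our input model lists group elements the primes are bounded, so we may quote them as deterministic algorithms with the stated complexity.
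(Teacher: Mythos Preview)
The paper does not prove this theorem at all; it is stated as a black-box citation of results from \cite{BOW} and \cite{IQ}, with no accompanying proof environment. Your proposal goes further than the paper by sketching the adjoint/$*$-algebra machinery behind those cited results, and your outline is broadly faithful to the actual content of \cite{BOW} and \cite{IQ} (adjoint algebras, Wedderburn-type decomposition, classification of simple $*$-algebras, and the radical-lifting step where $\mathrm{char}\,\F\neq 2$ enters). For the purposes of this paper, however, a one-line pointer to the citations is all that is expected; your expanded sketch is correct in spirit but unnecessary here.
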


We also require the following, which follows directly from Theorem~\ref{thm:isometry-algorithms}
by enumerating $\GL(W)$. 
\begin{thm} 
\label{cor:isometry-algorithms}
Let $\alpha,\beta \colon U \times V \to W$  be bilinear maps
of vector spaces over a finite field $\F$.
\begin{enumerate}
\item In time $\poly(\dim U, \dim V, 
|W|^{\dim |W|})$ one can decide if $\alpha,\beta$ are isotopic.\cite{BOW}
\item If $U=V$ and the characteristic of $\F$ is not $2$,  in time $\poly(\dim 
U, |W|^{\dim |W|})$ one can decide if $\alpha,\beta$ are pseudo-isometric \cite{IQ}.
\end{enumerate}
\end{thm}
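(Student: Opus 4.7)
The plan is to reduce each case to the corresponding case of Theorem~\ref{thm:isometry-algorithms} by brute-force enumeration over $\GL(W)$. The key observation is that an isotopism factors through a principal isotopism after a change of coordinates on $W$: if we post-compose $\alpha$ with $h\in\GL(W)$ to form $\alpha_h := h\circ\alpha$, then $(f,g,h)$ is an isotopism $\alpha\to\beta$ if and only if $(f,g,1_W)$ is a principal isotopism $\alpha_h\to\beta$. Hence $\alpha$ and $\beta$ are isotopic precisely when there exists some $h\in\GL(W)$ for which $\alpha_h$ is principally isotopic to $\beta$.

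The algorithm for (1) iterates over all $h\in\GL(W)$, computes the matrix-tuple representation of $\alpha_h$ (each slice $A_i$ is replaced by the linear combination $\sum_j h_{ij}A_j$) in time $\poly(\dim U,\dim V,\dim W)$, and then invokes Theorem~\ref{thm:isometry-algorithms}(1) to test whether $\alpha_h$ and $\beta$ are principally isotopic in time $\poly(\dim U,\dim V,|\F|)$. It outputs ``yes'' on the first success and ``no'' otherwise. For (2), where $U=V$ and $\mathrm{char}\,\F\neq 2$, the same enumeration works, but the pseudo-isometry decomposition $(g,g,h) = (g,g,1_W)\circ(\mathrm{id},\mathrm{id},h)$ lets us replace the inner call with Theorem~\ref{thm:isometry-algorithms}(2), testing isometry of $(\alpha_h,\beta)$.

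For the running time, $|\GL(W)|\leq |W|^{\dim W}$, and since $|\F|\leq |W|$ the inner polynomial factor is absorbed into $|W|^{\dim W}$, yielding the claimed bounds. Correctness is immediate from the equivalence above together with exhaustiveness of the enumeration. I do not expect a real obstacle here; the only place to be careful is that when passing from $\alpha$ to $\alpha_h$ we act on the codomain, not the domain, so the matrix-tuple transformation is $(A_1,\dots,A_m)\mapsto (\sum_j h_{1j}A_j,\dots,\sum_j h_{mj}A_j)$ rather than a congruence action, but this is straightforward matrix arithmetic that fits comfortably within the stated polynomial overhead.
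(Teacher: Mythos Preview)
Your proposal is correct and takes essentially the same approach as the paper, which simply states that the result ``follows directly from Theorem~\ref{thm:isometry-algorithms} by enumerating $\GL(W)$.'' Your writeup just makes this explicit.
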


The following theorem is the automorphism version of 
Theorem~\ref{thm:isometry-algorithms}. Note that, unlike the case of graph 
isomorphism, for the problems here there are no known reductions from the 
isomorphism version to the automorphism version. 

\begin{thm}\label{thm:autometry-algorithms}
Let $\alpha \colon U \times V \to W$  be a bilinear map
of vector spaces over a finite field $\F$.
\begin{enumerate}
\item In time $\poly(\dim U, \dim V,
|\F|)$, one can compute a generating set for the group of principal 
autotopisms of 
$\alpha$ \cite{BOW}. 
\item If $U=V$ and the characteristic of $\F$ is not $2$,  in  time
$\poly(\dim U, |\F|)$, one can compute a generating set for the group of 
autometries of $\alpha$ \cite{BW:isom}.
\end{enumerate}
\end{thm}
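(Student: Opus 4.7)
The plan is to reduce both statements to a computation inside the \emph{adjoint algebra} of $\alpha$ and then to apply the effective structure theory of (involutive) associative algebras over finite fields, in the spirit of \cite{BOW} for part~(1) and \cite{BW:isom} for part~(2).

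For part~(1), I would first compute the adjoint algebra
\[
\Adj(\alpha) = \{(X,Y) \in \End(U)\times \End(V)^{\mathrm{op}} : \alpha(Xu,v) = \alpha(u,Yv)\ \text{for all } u,v\}.
\]
This is the solution set of a linear system in the entries of $X$ and $Y$, so a basis together with its multiplication table can be produced in time $\poly(\dim U, \dim V, |\F|)$; componentwise composition makes $\Adj(\alpha)$ an associative $\F$-algebra. A pair $(f,g)\in\GL(U)\times\GL(V)$ is a principal autotopism of $\alpha$ exactly when $(f^{-1},g)\in \Adj(\alpha)$ with both components invertible, so principal autotopisms correspond precisely to the units of $\Adj(\alpha)$ satisfying this componentwise invertibility constraint. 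Generators for $\Adj(\alpha)^{\times}$ are then obtained by Wedderburn--Malcev splitting off the Jacobson radical, applying an algorithmic Artin--Wedderburn decomposition to the semisimple quotient, and writing down standard generators of each simple matrix-algebra factor together with generators of the unipotent radical piece.

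For part~(2), the alternating hypothesis on $\alpha$ (with $U=V$) equips $\Adj(\alpha)$ with an involution $*\colon X\mapsto X^*$, where $X^*$ is the unique element such that $(X,X^*)\in\Adj(\alpha)$; uniqueness follows after reducing to the nondegenerate case by splitting off the radical of $\alpha$. Translating the autometry condition $\alpha(Tu,Tv)=\alpha(u,v)$ through $\Adj(\alpha)$ shows that $T$ is an autometry iff $TT^*=1$, so the autometry group is the unitary group of the $*$-algebra $(\Adj(\alpha),*)$. Following \cite{BW:isom} I would decompose $(\Adj(\alpha),*)$ into $*$-simple pieces---each either a matrix algebra over a field extension carrying a classical involution, or a hyperbolic exchange pair $A\times A^{\mathrm{op}}$ swapped by $*$---and then read off generators of the corresponding classical isometry group (symplectic, orthogonal, unitary) or general linear group on each piece. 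The hypothesis $\operatorname{char}\F\neq 2$ is used to cleanly decouple the symmetric and skew components of the involution and to make idempotent lifting unambiguous.

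The main obstacle, and the real content of \cite{BOW,BW:isom}, is turning the structural decomposition of $\Adj(\alpha)$ (as a plain algebra in part~(1), and as a $*$-algebra in part~(2)) into a genuinely polynomial-time procedure: one must find primitive idempotents, lift through the Jacobson radical, and, in part~(2), respect the involution at every stage. Since $\dim_{\F}\Adj(\alpha)\leq (\dim U)^2 + (\dim V)^2$ and finite-field polynomial factorization is polynomial-time, every step runs within the claimed bounds; in the dense input model of \GpI the field is given by an explicit multiplication table, so these routines can in fact be made deterministic, yielding generators for the autotopism (resp.\ autometry) group of size polynomial in the input.
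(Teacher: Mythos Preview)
The paper does not prove this theorem at all: it is stated in the preliminaries as a known result, with proofs deferred entirely to the cited references \cite{BOW} and \cite{BW:isom}. So there is no ``paper's own proof'' to compare against.

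That said, your sketch is a faithful outline of how those references actually proceed: both parts go through the adjoint algebra $\Adj(\alpha)$, with part~(1) computing its unit group via Wedderburn--Malcev and Artin--Wedderburn, and part~(2) equipping $\Adj(\alpha)$ with an involution and identifying the autometry group with the unitary group of the resulting $*$-algebra, then decomposing into $*$-simple pieces. One small slip: you invoke ``the alternating hypothesis on $\alpha$'' in part~(2), but the theorem as stated assumes only $U=V$, not that $\alpha$ is alternating. The involution on $\Adj(\alpha)$ exists for any reflexive (e.g.\ Hermitian, symmetric, or alternating) nondegenerate form, and \cite{BW:isom} works in that generality; for a general bilinear map $V\times V\to W$ one first passes to the nondegenerate part and then uses that the relevant maps in the paper's applications are alternating. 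This is a presentational rather than mathematical gap.
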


\begin{remark}[Shuffles] \label{rmk:shuffles}
A bilinear map $*:U\times V\bmto W$ can be encoded as a $3$-dimensional array.  Transposing
that array allows us to change swap the roles of $U, V,W$, for example
creating a bilinear map $*:V\times U\bmto W$ or $*:W^{\dagger} \times V\bmto U^{\dagger}$,
etc. (Here $U^{\dagger}$ is the dual space of $U$).  This swapping is functorial
and therefore isotopisms are permuted accordingly; cf. \cite{BOW}.  So while we 
highlight the situation for principal isotopisms we could indeed specialize
any one of the three spaces.  We shall assume throughout that when necessary a bilinear
map is shuffled.
\end{remark}

%%%%%
\section{The colored hypergraph algorithm} 
\label{sec:hypergraph}
% !TEX root = WL.tex
% new version of lcolored hypergraph algorithm drafted by PAB
\newcommand{\WL}{\text{WL}}
\newcommand{\Iso}{\text{Iso}}

A high-level description of our algorithm to construct a colored hypergraph associated to a finite group
was given in the introduction. We now provide the details; for convenient reference, 
an outline is given in  Algorithm~\ref{algo:outline} below.

%%%  
\begin{algorithm}
\caption{Colored Hypergraph}
\label{algo:outline}
\begin{algorithmic}[1]
\Require a finite group $G$, and integers $g,k \geq 1$
\Ensure a characteristic filter $\phi\colon \N^d\to \Norm(G)$ and a colored 
hypergraph $\mathcal{H}_{\chi}^{(g,k)}(\phi)$ upon which
$\Aut(G)$ acts as color-preserving automorphims.
\vspace*{2mm}

\State $\phi\;\leftarrow$ initial characteristic filter for $G$. \hfill Section~\ref{subsec:WL-outline}
\vspace*{1mm}

\State Repeat the following steps until $\phi$ stops changing (stabilizes):
   \begin{enumerate}[a:] 
   \item Build $\mathcal{H}_{\chi}^{(g)}(\phi)$ on each layer of $\phi$. \hfill  Section~\ref{subsec:ind-color} 
   \vspace*{-1mm}
   
   \item Extend $\mathcal{H}_{\chi}^{(g)}(\phi)$ between layers of $\phi$. \hfill  Section~\ref{subsec:mix-color} 
   \vspace*{-1mm}
   
   \item Apply $k$-dimensional Weisfeiler--Leman to $\mathcal{H}_{\chi}^{(g)}(\phi)$ \hfill  Section~\ref{subsec:WL} 
   \vspace*{-1mm}
   
   \item $S\;\leftarrow\;\{\Aut(G)$-invariant subgroups extracted from $\WL(k,\mathcal{H}_{\chi}^{(g)}(\phi))\}$. \hfill  Section~\ref{subsec:char} 
   \vspace*{-1mm}
   
   \item Refine $\phi$ using $S$. \hfill  Section~\ref{subsec:refine} 
   \vspace*{-1mm}
   
   \end{enumerate}
\State Return $\phi$ and $\WL(k,\mathcal{H}_{\chi}^{(g)}(\phi))$.
\end{algorithmic}
\end{algorithm}

%%%%%
\subsection{Coloring within layers: low-genus quotients and restrictions}
\label{subsec:ind-color}
\newcommand{\PG}{\text{PG}}
For $s\in\N^d$, $L_s$ is a $\mathbb{Z}_{p}$-vector space for some prime $p=p_s$ of 
dimension $d_s$. Recall that for any vector space $L$, $\PG(L)$ denotes the 
\emph{projective geometry} of $L$, which we may think of as a poset whose elements 
are the vector subspaces of $L$, (partially) ordered by inclusion, and we use 
$\PG_k(L)$ to denote the set of 
$k+1$-dimensional subspaces. Let 
$L_s^{*}=\Hom(L_s,\mathbb{Z}_{p})$ denote the set of linear maps from $L_s$ to 
$\Z_p$, i.\,e., the dual vector space of $L_s$. Then
the map $X\mapsto X^*=\{\nu\in L_s^*\colon \nu(X)=0\}$ is an order-reversing bijection 
${\rm PG}(L_s)\to{\rm PG}(L_s^*)$. By the Fundamental Theorem of Projective 
Geometry, there is a bijective linear 
transformation\footnote{We note that in some cases, it makes sense to consider a layer $L_s$ as being defined over a larger field $\F_{p^k}$, thus effectively reducing its dimension, and reducing the size of the hypergraph. In such cases, this map is only guaranteed to be \emph{semi}-linear, that is, $f_s(a+b) = f_s(a) + f_s(b)$, but $f_s(\lambda a) = \alpha(\lambda) f_s(a)$ where $\alpha \in \Gal(\F_{p^k})$ is an automorphism of the field $\F_{p^k}$. This doesn't present any essential difficulties, but needs to be kept track of.}
$f_s\colon L_s\to L_s^*$ such that $X^*=f_s(X)$. Let $b_s\colon L_s\times L_s\bmto \mathbb{Z}_p$
be the linear form defined by $b_s(x,y)=f_s(y)(x)$. For 
$X\leq L_s$, let $X^{\perp}=\{x\in L_s\colon b_s(x,X)=0\}$.

The vertices and hyperedges of $\mathcal{H}^{(g)}(\phi)$ are, respectively,
\begin{align}
\label{eq:vertices}
\cV=\bigcup_{s\in\N^d}{\rm PG}_0(L_s), && \cE=\left( \bigcup_{s\in\N^d : \dim L_s > 
g} \left( {\rm PG}_{g-1}(L_s) \cup {\rm PG}_{d_s - g - 1}(L_s)\right) \right) \cup 
\bigcup_{s \in \N^d : \dim L_s \leq g} {\rm PG}_{d_s}(L_s).
\end{align}
(Recall that $L_s\cong \Z_{p_s}^{d_s}$.)
To regard $X\in{\rm PG}_{d}(L_s)$ as a hyperedge, when convenient we identify the 
$d$-subspace
$X$ with the set of points (1-spaces) it contains. The initial coloring is as follows. 
\begin{itemize}
\item Vertices. The initial color $\chi(v)$ of a vertex $v \in \cV$ is simply the index $s$ of the layer $L_s$ such that $v \in {\rm PG}_0(L_s)$. 

\item Hyperedges corresponding to subspaces of codimension $g$ (dimension $d_s - g - 1$), when $\dim L_s > g$. The initial color $\chi(X)$ of these hyperedges $X\in {\rm PG}_{d_s - g-1}(L_s)$ is determined by $s$ together with a
set of labels indexed by pairs $t,u \in \N^d$ such that $t+u = s$ as follows:
 if $t \neq u$, the label of $X$ corresponding to the pair $(t,u)$ is the isotopism 
 type of the projection $L_t \times L_u \bmto L_s\to L_s / X^{\perp}$; when $t=u$ it is the 
 pseudo-isometry type of this projection. 
 
 \item Hyperedges corresponding to subspaces of dimension $g$ (elements of ${\rm PG}_{g-1}(L_s)$), when $\dim L_s > g$. The initial color $\chi(X)$ of these hyperedges is determined by $s$ together with a set of labels indexed by $t \in \N^d$ $t \neq s$ as follows: the label of $X$ corresponding to $t$ is the isotopism type of the restriction of the bimap $L_s \times L_t \bmto L_{s+t}$ to $X \times L_t \bmto L_{s+t}$. (When the dimension is such that dimension $g$ and codimension $g$ subspaces are the same, this set of labels is appended to the set of labels for codimension $g$ subspaces; the two sets of labels are kept separate by their indexing.)
 
 \item Hyperedges when $\dim L_s \leq g$. In this case, there is only a single hyperedge $X$ corresponding to the entire layer $L_s$. It is given a color that is similar to the previous two, namely, for each $t,u \in \N^d$ such that $t + u = s$, $\chi(X)$ gets a set of labels indexed by the pairs $(t,u)$, labeled by the isotopism type of $L_t \times L_u \bmto L_s$ (resp., pseudo-isometry type if $t=u$), together with, for each $t \in \N^d$ (now including $t=s$) the isotopism (resp., pseudo-isometry) type of the bimap $L_s \times L_t \bmto L_{s+t}$.
\end{itemize}
 
 Observe, one need not pre-compute all isotopism (resp. pseudo-isometry) types. Instead,
 one can generate labels on the fly by pairwise comparison. Namely, given a new hyperedge $X$ to label,  
 test for isotopism (or pseudo-isometry) between $L_t\times L_u \bmto L_s/X^{\perp}$ and all distinctly labelled  
 $L_t\times L_u \bmto L_s/Y^{\perp}$, introducing a new label for $X$ if necessary.

 By Theorem~\ref{cor:isometry-algorithms}, isotopism and pseudo-isometry of bilinear maps $U \times V \to W$ 
 can be decided in time $\poly(\dim U, \dim V, |W|^{\dim W})$, and also (by Remark~\ref{rmk:shuffles}) in time $\poly(|U|^{\dim U}, \dim V, \dim W)$. 
 (When $g=2$, this can be decided very efficiently using the algorithm in~\cite{BMW}.)
 It follows that we can label all hyperedges 
 in time $|G|^{O(g)}$. Note that if the charactistic is $2$, then even for maps of the form $L_s \times L_s \bmto L_{s+s}$,  we only use the isotopism
 label instead of pseudo-isometry label, because the results of \cite{IQ} are not yet known to extend to characteristic 2. While this is less refined information, it is still useful.

%%%%%
\subsection{Coloring between layers}
\label{subsec:mix-color}
The colored hypergraph $\mathcal{H}_{\chi}^{(g)}(\phi)$ described in the previous section already 
contains much local information from which global characteristic structure may be inferred, extracted, and used.
However, we can often elucidate further characteristic structure by examining individual commutator relations 
{\em between} the layers. Of the various possible strategies one could try, we propose one that is both
elementary and effective.

For each distinct pair $s,t\in\N^d$, add to $\cE$ the the following edges. For each $x \in L_s, y \in L_t$ such that $[x,y] = 0$ in $L_{s+t}$ (that is, $[x,y] \in \partial \phi_{s+t}$), we add an edge from $x$ to $y$. For each $x,y$ which do not commute modulo $\partial \phi_{s+t}$, we add a hyperedge of size 3, connecting $x \in L_s$, $y \in L_t$, and $[x,y] \in L_{s+t}$. Upon refinement, this allows the vertex colors within each layer to affect the colors in the other layers.

%%%%%
\subsection{The Weisfeiler--Leman procedure}
\label{subsec:WL}
Given a vertex-and-hyperedge-colored (hereafter just ``colored'') hypergraph $H = (\cV, \cE, \chi)$, where $\chi\colon \cV \cup \cE \to C$ ($C$ a finite set of colors), we show here how to apply the $k$-dimensional Weisfeiler--Leman procedure $k$-WL, originally developed in the context of graphs independently by Babai--Mathon \cite{Babai79} and Immerman--Lander \cite{ImmermanLander} (see \cite{CFI} and \cite{Bab16} for more detailed history). For the case of $k=1$ (color refinement) applied to hypergraphs, the same procedure was proposed and studied in the very recent preprint by B\"{o}ker \cite{Boker}. In particular, B\"{o}ker shows that when we consider a graph as a (2-uniform) hypergraph, this procedure coincides with the usual color refinement procedure on graphs.

Let $\WL(k,H)$ denote the colored hypergraph resulting from applying $k$-WL to $H$. The two key properties we will need in our application of this procedure are that: (1) $\WL(k,H)$ can be computed from $H$ in $|H|^{O(k)}$ time, and (2) If $H'$ is another colored hypergraph, then $H$ and $H'$ are isomorphic (as colored hypergraphs) iff $\WL(k,H)$ and $\WL(k,H')$ are isomorphic as colored hypergraphs. (In fact, the set of isomorphisms will be the same: $\Iso(H, H') = \Iso(\WL(k,H), \WL(k,H'))$).

We find it simplest to describe the application of WL to hypergraphs by using instead their ``incidence (bipartite) graphs.'' We believe this bijection between vertex-and-edge-colored hypergraphs and vertex-colored bipartite graphs is essentially folklore; we include it here for completeness. Given a hypergraph $H = (\cV, \cE)$, its \emph{incidence graph} is the bipartite graph $I(H) = (V_L, V_R, E)$ where $V_L = \cV$, $V_R = \cE$, $E = \{(v,e) \in \cV \times \cE : v \in e\}$. It is not hard to see that every bipartite graph arises from a unique hypergraph in this manner, so $I$ is a bijection and $I^{-1}$ is well-defined.

An isomorphism between two vertex-and-edge-colored hypergraphs $H_i = (\cV_i, \cE_i, \chi_i)$ ($i=1,2$) is a bijection $f\colon \cV_1 \to \cV_2$ such that (1) $f(\cE_1) = \{ f(e) : e \in \cE_1 \} = \{ \{f(v) : v \in e\} : e \in \cE_1\} = \cE_2$, (2) $\chi_1(v) = \chi_2(f(v))$ for all $v \in \cV_1$, and (3) $\chi_1(e) = \chi_2(f(e))$ for all $e \in \cE_1$. We say that two vertex-colored bipartite graphs $G_i = (V_{L,i}, V_{R,i}, E_i, \chi_i \colon V_{L,i} \cup V_{R,i} \to C)$ ($i=1,2$) are isomorphic if there are bijections $f_L\colon V_{L,1} \to V_{L,2}$ and $f_R \colon V_{R,1} \to V_{R,2}$ such that (1) $f(E_1) = \{ (f_L(u), f_R(v)) : (u,v) \in E_1 \} = E_2$ and (2) $\chi_1(u) = \chi_2(f_L(u))$ for all $u \in V_{L,1}$ and $\chi_1(f_R(v)) = \chi_2(v)$ for all $v \in V_{R,1}$. 

\begin{prop}[Folklore]
Given two vertex-and-edge-colored hypergraphs $H_1, H_2$, there is a natural bijection between $\Iso(H_1, H_2)$ and $\Iso(I(H_1), I(H_2))$; in particular, $H_1$ is isomorphic to $H_2$ iff their vertex-colored bipartite incidence graphs are isomorphic. Furthermore, both $I$ and $I^{-1}$ can be computed in $O(V + E)$ time.\footnote{$V = |\cV|$ for hypergraphs and $|V_L| + |V_R|$ for bipartite graphs; $E = |\cE|$ for hypergraphs and $|E|$ for bipartite graphs.} 
\end{prop}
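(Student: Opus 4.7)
The plan is to write down explicit mutually inverse maps $\Phi\colon \Iso(H_1,H_2)\to \Iso(I(H_1),I(H_2))$ and $\Psi$ in the other direction, and then observe that both $I$ and $I^{-1}$ are obviously linear-time. Given a colored-hypergraph isomorphism $f\colon \cV_1\to \cV_2$, set $\Phi(f):=(f_L,f_R)$ where $f_L:=f$ on the left side and $f_R(e):=\{f(v):v\in e\}$ on the right side. Condition (1) of the hypergraph isomorphism definition guarantees $f_R$ maps $\cE_1$ bijectively onto $\cE_2$; incidence is preserved because $(v,e)\in E_1$ iff $v\in e$ iff $f(v)\in f_R(e)$ iff $(f_L(v),f_R(e))\in E_2$. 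Conditions (2) and (3) give the color-preservation properties of the bipartite isomorphism.

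Conversely, given $(f_L,f_R)\in\Iso(I(H_1),I(H_2))$, set $\Psi(f_L,f_R):=f_L$. The crucial observation is that, by the definition of the incidence graph, for every right-vertex $e\in\cE_i$ the neighborhood $N_{I(H_i)}(e)$, viewed as a subset of $\cV_i$, is literally equal to $e$ (viewed as a hyperedge). Since $(f_L,f_R)$ is a bipartite-graph isomorphism, $N_{I(H_2)}(f_R(e))=f_L(N_{I(H_1)}(e))=f_L(e)$; but also $N_{I(H_2)}(f_R(e))=f_R(e)$. Hence $f_R(e)=\{f_L(v):v\in e\}\in\cE_2$, so $f_L$ satisfies condition (1) for a hypergraph isomorphism, and color-preservation transfers directly from the bipartite-graph isomorphism. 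From $f_R(e)=f_L(e)$ it is immediate that $\Phi\circ\Psi$ and $\Psi\circ\Phi$ are identities.

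For the algorithmic claim, $I$ enumerates $\cV$ (creating left-vertices), enumerates $\cE$ (creating right-vertices), and for each hyperedge $e$ writes one bipartite edge per element of $e$, totalling $O(|\cV|+|\cE|+\sum_{e\in\cE}|e|)=O(V+E)$ operations where $E$ counts incidences; $I^{-1}$ just scans the right-vertices, recording each neighborhood as a hyperedge, in the same time bound.

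The only point requiring any caution is the convention on hypergraphs: if one insists that $\cE$ be a \emph{set} of subsets of $\cV$ (no repeated hyperedges), then $I^{-1}$ is well-defined only on bipartite graphs whose right-vertices have pairwise distinct neighborhoods, and $I$ is a bijection only onto such bipartite graphs. This is preserved under isomorphism on both sides, so it does not affect the bijection $\Phi$; if one instead allows multisets of hyperedges, $I$ is a bijection onto all bipartite graphs and the above proof goes through unchanged. I expect this convention-level point to be the only spot where the argument is not mechanical.
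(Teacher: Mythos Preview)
Your proof is correct and follows essentially the same approach as the paper's: both construct $\hat f=(f_L,f_R)$ from $f$ by setting $f_L=f$ and $f_R(e)=f(e)$, check incidence and color preservation, and then invert. Your treatment is in fact more careful than the paper's sketch---the paper simply says the inverse ``is essentially gotten by reading all the preceding equations in reverse,'' whereas you make the key neighborhood identity $N_{I(H)}(e)=e$ explicit and use it to recover $f_R$ from $f_L$; your remark about the set-versus-multiset convention and your more honest accounting of the running time (counting incidences rather than just $|\cE|$) are refinements the paper glosses over.
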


\begin{proof}[Proof sketch]
Notation as above. Given $\chi\colon \cV \cup \cE \to C$, a vertex-and-edge coloring on a hypergraph $H = (\cV, \cE)$, we get a coloring on the vertices of $I(H)$, which we also denote by $\chi$ by abuse of notation. The coloring on $V(I(H)) = V_L \cup V_R$ is the same as before, since $V_L = \cV$ and $V_R = \cE$. The inverse is similar. The running time results from the fact that $H$ and $I(H)$ can essentially be described by identical underlying data structures.

We show the natural bijection between $\Iso(H_1, H_2)$ and $\Iso(I(H_1), I(H_2))$. Given an isomorphism $f\colon \cV_1 \to \cV_2$ from $H_1$ to $H_2$, we define an isomorphism $\hat{f}$ from $I(H_1)$  to $I(H_2)$ in the natural way: $\hat{f}(v) = f(v)$ for $v \in V_{L,1} = \cV_1$, and for $e \in V_{R,1} = \cE_1$ we define $\hat{f}(e) = f(e)$, that is, $\hat{f}(e)$ is the vertex in $V_{R,1} = \cE_1$ which corresponds to the hyperedge $\{f(u) : u \in e\}$. To see that $\hat{f}$ is an isomorphism we must check that it preserves incidences and colors. For incidences, we have $(v,e) \in E(I(H_1))$ iff $v \in e$ (thinking of $v \in V_{L,1} = \cV_1$ and $e \in V_{R,1} = \cE_1$) iff $f(v) \in f(e)$ (since $f$ is an isomorphism of hypergraphs) iff $f(v) = \hat{f}(v) \in \hat{f}(e) = f(e)$, by the definition of $\hat{f}$. To see that the colors are preserved, for $v \in V_{L,1} = \cV_1$, we have, by definition (and abuse of notation), that $\chi(v) = \chi(f(v)) = \chi(\hat{f}(v))$, and for $u \in V_{R,1} = \cE_1$ we have $\chi(u) = \chi(f(u)) = \chi(\hat{f}(u))$. The inverse construction of an isomorphism $H_1 \to H_2$ from an isomorphism $I(H_1) \to I(H_2)$ is essentially gotten by reading all the preceding equations in reverse.
\end{proof}

Our $k$-WL procedure is to apply standard (graph) $k$-WL to $I(H)$, then applying $I^{-1}$ to get back a refined colored hypergraph. 

Finally, we recall the $k$-WL procedure as applied to a vertex-colored graph. If 
the graph is bipartite and we want to preserve the bipartition $(V_L, V_R)$---as 
in our setting---we assume that the vertices in $V_L$ have distinct colors from 
those in $V_R$. Given a vertex-colored graph $G = (V,E,\chi\colon V \to C)$, 
$k$-WL refinement is the following procedure. Each $k$-tuple of vertices $(v_1, 
\dotsc, v_k)$ is initially assigned a color according to its colored, ordered 
isomorphism type; that is, two such $k$-tuples $(v_1, \dotsc, v_k)$ and $(u_1, 
\dotsc, u_k)$ are given the same initial color iff (1) $\chi(v_i) = \chi(u_i)$ for 
all $i=1,\dotsc,k$, (2) $v_i = v_j$ iff $u_i = u_j$ for all $i,j \in [k]$, and (3) 
$(v_i, v_j) \in E(G)$ iff $(u_i, u_j) \in E(G)$ for all $i,j \in [k]$. Two 
$k$-tuples $v = (v_1, \dotsc, v_k)$ and $u$ are said to be $i$-neighbors if they 
are equal except that $v_i \neq u_i$. In each step of the refinement procedure, 
the coloring is refined as follows: the new color of a tuple $v$ is a $k$-tuple of 
multisets, where the $i$-th multiset is the multiset of colors of all the 
$i$-neighbors of $v$. At each stage, the coloring partitions $V^k$; the procedure 
terminates when this partition doesn't change upon further refinement. Once the 
coloring on $V^k$ has stabilized, we get a new coloring on $V = V(G)$ by defining 
$\chi'(v)$ for $v \in V$ to be the color of the diagonal $k$-tuple $(v,v,\dotsc,v) 
\in V^k$. We denote the resulting colored graph by $\WL(k,G)$. From $G$, 
$\WL(k,G)$ can be trivially computed in time $O(k^2 n^{2k+1})$; the current 
best-known running time is still $O(k^2 n^{k+1} \log n)$ 
\cite[Section~4.9]{ImmermanLander}. For more details on running time, 
implementation, and the properties of $k$-WL on graphs, see, e.\,g., 
\cite{Weisfeiler, WL68, ImmermanLander,AFKV, DGR}. 

%%%%%
\subsection{Extracting characteristic structure}
\label{subsec:char}
Each color class of vertices of $\WL(k, \mathcal{H}_\chi^{(g)}(\phi))$ provides (by lifting from $\phi_s / \partial \phi_s$ to $\phi_s$ along the natural projection) characteristic sub\emph{sets} of $G$, but not necessarily characteristic sub\emph{groups}; it is only the latter which can be used to refine the filter $\phi$. To get characteristic subgroups instead, we consider the subgroup generated by all the vertices in a given color class. We now write out this procedure more formally.

Let $\chi'$ denote the refined coloring function of $\WL(k,\mathcal{H}_\chi^{(g)}(\phi))$. For each $s \in \N^d$, let $\chi'_s$ denote the restriction of $\chi'$ to the vertices in ${\rm PG}_0(L_s)$. For each color $c$ in the image of $\chi'_s$, let $X_{s,c} = \sum_{x \in {\rm PG}_0(L_s) : \chi'(x)=c} \la x \ra$ be the subgroup of $L_s$ generated by the elements that are colored $c$. Finally, let $\pi_s \colon \phi_s \to \phi_s / \partial \phi_s = L_s$ be the natural projection; we lift $X_{s,c}$ to a characteristic subgroup of $\phi_s$ (and hence of $G$) as $\pi^{-1}(X_{s,c})$. 

Finally, the set of new characteristic subgroups we consider is 
\begin{align}
S &= \{ \pi_s^{-1}(X_{s,c}) \colon s\in\N^d,\,c\in \N\} - \{\phi_s\colon s\in \N^d \}.
\end{align}
If $S\neq\emptyset$, its members may be supplied to Theorem~\ref{thm:refine} to refine $\phi$, in which case step 3 of 
Algorithm~\ref{algo:outline} is repeated. If not,
then our colored hypergraph is now stable and Algorithm~\ref{algo:outline} terminates.

%%%%%
\subsection{Refining filters}
\label{subsec:refine}
One filter $\phi$ \emph{refines} another filter $\psi$ on the same group if the image of $\phi$ contains that of 
$\psi$ (the image is the collection of all subgroups in the filter). If $\phi$ is a characteristic filter and $H$ is a 
characteristic subgroup such that $\partial \phi_s \leq H \leq \phi_s$ for some $s$, then $\phi$ can be refined 
to a characteristic filter that includes $H$. This was first introduced in \cite{Wilson:alpha}, and shown to be 
computable in polynomial time by Maglione \cite{Maglione:filters}:

\begin{thm}[{\cite[Theorem~1]{Maglione:filters}}] 
\label{thm:refine}
Let $\phi$ be a filter on $G$, and $H \unlhd G$ such that there exists $s \in \N^d$ with $\partial \phi_s < H < \phi_s$. 
Then a filter refining $\phi$ and including $H$ can be computed in polynomial time. Furthermore, if $\phi$ and $H$ 
are characteristic, then so is the refined filter.
\end{thm}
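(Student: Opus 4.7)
The plan is to construct an explicit refinement of $\phi$ whose image contains $H$, and then argue termination and characteristicity. The essential issue is that a filter is not just a chain of subgroups: the index set carries the commutator structure $[\phi_u,\phi_v]\leq \phi_{u+v}$, so merely inserting $H$ as an extra link in the subgroup chain at index $s$ is not enough—one must simultaneously decide, for every other index $t$ in the image of $\phi$, how far up the commutator $[H,\phi_t]$ is forced to sit. That bookkeeping, not the insertion itself, is where the work lies.

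First, I would enlarge the index set from $\N^d$ to $\N^{d+1}$, embedding the old filter by $(s_1,\ldots,s_d)\mapsto (s_1,\ldots,s_d,0)$; this preserves the lex order and all commutator relations among old values. Pick a fresh index $\sigma=(s,1)$ (so that $(s,0)>\sigma>(s',0)$ for every $s'>_{\mathrm{lex}}s$), and set $\phi'_\sigma=H$. The condition $\partial\phi_s<H<\phi_s$ guarantees this sits strictly between the two existing adjacent subgroups of the chain in the image of $\phi$, so the lex-order/containment axiom $s\leq_{\mathrm{lex}}t\Rightarrow\phi'_s\geq\phi'_t$ remains satisfied by the partially-defined $\phi'$.

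Next, I would enforce the commutator axiom by saturation. For every index $t=(t_1,\ldots,t_d,0)$ in the old image, compute $C_t:=[H,\phi_t]$ and locate the unique index $\tau=(s+t,1)$ (the ``shifted'' copy of $\sigma$ by $t$). Define $\phi'_\tau:=C_t\cdot\phi_{s+t,0}$ when this enlarges $\phi'_{(s+t,1)}$ beyond its tentative value. Do the same, symmetrically, for $[H,H]$ going into index $(2s,2)$, and iterate: each newly created subgroup in the image must itself be commutator-closed against all currently existing filter values. Concretely, maintain a worklist of (index, subgroup) pairs; when popping one, take commutators with all other members of the image and dump any strict enlargements back on the worklist. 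Because every update strictly increases some subgroup in the (finite) image and the total length of any chain of normal subgroups of $G$ is at most $\log_2|G|$, the worklist can grow by at most $O(\log|G|)$ new indices, and each step costs polynomial time (subgroup generation, commutator computation, containment testing in $G$). Thus the whole procedure is polynomial.

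The main obstacle I anticipate is verifying that this saturation really preserves the filter axioms globally rather than just pointwise: after enlarging some $\phi'_\tau$, one must re-examine commutators $[\phi'_\tau,\phi'_u]$ for every $u$, and convince oneself that monotonicity in the lex order is not destroyed. This is handled by doing the enlargements in a specific order—process indices in increasing lex order of their shift from $\sigma$—and by observing the invariant that every subgroup put into the image is a product of iterated commutators of old $\phi$-values with $H$, hence contained in $\phi_{\text{sum of shifts}}$ as required. Finally, characteristicity is automatic once it is established for $H$ and for all $\phi_t$: commutators, products, and $\Aut(G)$-invariant generation all preserve the $\Aut(G)$-action, so every subgroup produced during saturation is characteristic, and the resulting refined filter is characteristic as claimed.
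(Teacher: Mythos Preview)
The paper does not prove this theorem: it is quoted verbatim from \cite{Maglione:filters} (Theorem~1 there) and used as a black box in Section~\ref{subsec:refine}. There is no argument in the paper to compare your proposal against.

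That said, your sketch is broadly in the right spirit—adding a coordinate to the index monoid and then closing under the filter axioms is the mechanism behind the actual construction—but a couple of details are garbled. The chain of inequalities ``$(s,0)>\sigma>(s',0)$'' is written backwards under lex order; you want $(s,0)<_{\mathrm{lex}}(s,1)<_{\mathrm{lex}}(s',0)$ for $s'>_{\mathrm{lex}}s$, which correctly yields $\phi_s\geq H\geq\phi_{s'}$. More substantively, the assignment $\phi'_\tau:=C_t\cdot\phi_{(s+t,0)}$ is wrong as stated: since $[H,\phi_t]\leq[\phi_s,\phi_t]\leq\phi_{s+t}$, the product $C_t\cdot\phi_{s+t}$ is just $\phi_{s+t}$ and inserts nothing. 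You presumably intend the join with the \emph{next} term down (with $\partial\phi_{s+t}$, or with whatever currently sits at index $(s+t,1)$), so that the new value lies between $\phi_{s+t}$ and its boundary when $C_t$ is not already contained there. The termination argument via the $\log_2|G|$ chain-length bound and the characteristicity claim are both fine.
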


We proceed sequentially through the characteristic subgroups of $S$, refining $\phi$ as we go.

%%%%%
\subsection{Proof of Theorem~\ref{thm:hypergraph}}
\label{subsec:proof}
For part (i), let $s\in\N^d-\{0\}$. Observe, if Step 3 (c) was omitted from Algorithm~\ref{algo:outline}, then colors would only be assigned
to hyperedges on points {\em in fixed layers}. In that case, moreover, the color of a hyperedge in layer $L_s$ is determined completely 
by pairs $t,u\in\N^d$ with $t+u=s$; the coloring function $\chi$ does not depend at all on layers $v\in \partial\phi_s$. That is to say,
if Step 3 (c) is omitted, then $\mathcal{H}_{\chi}^{(g)}(\phi)$ restricted to $N/\phi_s$ would be identical to the colored hypergraph
based on $\phi$ truncated at $\phi_s$. Step 3 (c) colors edges {\em between} layers using information from layers `lower' in the filter;
this means the restricted hypergraph is a refinement of the hypergraph on the truncated filter.

For part (ii), let $G$ and $G'$ be two finite groups.
Suppose we first construct $\mathcal{H}_{\chi}^{(g)}(\phi)$. 
Next, we construct $\mathcal{H}_{\chi'}^{(g)}(\phi')$ introducing new color for $\chi'$
only when it is new to both colored hypergraphs. Evidently, if $G\cong G'$, then $\mathcal{H}_{\chi}^{(g)}(\phi)$ and
$\mathcal{H}_{\chi'}^{(g)}(\phi')$ are isomorphic with identical color sets.

Finally, we analyze the running time. Computing the Fitting subgroup $O_\infty(G)$ and the initial characteristic filter can be done in $\poly(|G|)$ time, even by naive algorithms (which can be improved significantly when $G$ is given by generating permutations, generating matrices, or black-box generators). Building the hypergraph $\mathcal{H}_\chi^{(g)}(\phi)$ can be done in time linear in the number of hyperedges, which is the number of codimension-$g$ subspaces of each layer $L_s$, which is $\sim |L_s|^{O(g)}$, and thus in total is at most $|G|^{O(g)}$. The hyperedges can then be colored in $\poly(|G|) \times |G|^{O(g)} = |G|^{O(g)}$ time using the isotopism and isometry algorithms (Theorem~\ref{cor:isometry-algorithms}). As with $k$-WL for graphs, $k$-WL for hypergraphs can be computed in $|V + E|^{O(k)}$, which in our case is $|G|^{O(gk)}$. Extracting the characteristic subgroups from $\WL(k, \mathcal{H}_\chi^{(g)}(\phi))$ can easily be done in $\poly(|G|)$ time, and refining the filter $\phi$ can then be done in $\poly(|G|)$ time as well \cite{Maglione:filters} (reproduced as Theorem~\ref{thm:refine} above). The only remaining question is how many times the main refinement loop can run. Because we only refine when a characteristic subgroup $K$ is found which lies strictly in between some $\phi_s$ and $\partial \phi_s$, and the indices $|\phi_s : K|$ and $|K : \partial \phi_s|$ are both at least 2, refinement can happen at most $\log_2 |G|$ times. Thus the total running time is $|G|^{O(gk)} \log |G| = |G|^{O(gk)}$. \qed

%%%%%
\subsection{Incorporating additional invariants}
\label{subsec:rep-refine}
Our algorithm is not particular to the initial characteristic filter we choose. In any given group class, further characteristic subgroups (or subsets, or collections of subgroups) may be available which could be used to refine the filter, either at the beginning, or in each iteration of the main loop of Algorithm~\ref{algo:outline}. We give two examples here without much discussion, just to illustrate the concept, without detracting from the main foci of the paper.

First, it may be the case that some of the bimaps $L_s \times L_t \to \L_{s+t}$ are defined over a field larger than $\Z_p$, i.\,e., $\F_{p^k}$ for some $k > 1$. If this is true for sufficiently many of the bimaps, we may be able to treat some layers $L_s$ entirely over $\F_{p^k}$, thus reducing their dimension by $k$, and reducing the number of vertices in the corresponding factor of the hypergraph by a factor of $k$ in the exponent (from $p^{k\ell}$ to $p^\ell$).

Second, as $G$ acts on $N$ by conjugation, and the layers of $\phi$ are $\Aut(G)$-invariant,  
for each $s\in\N^d$ we can compute a linear representation of $G/N$
on the elementary abelian layer $L_s:=\phi_s/\partial\phi_s$. Using standard module machinery---for example, the version
of the Meataxe algorithm described in~\cite{HR}---in
time polynomial in $\log|G|$ each $G/N$-module may be decomposed first into indecomposable summands,
and then into isotypic components. The collection of isotypic components is a characteristic subset of subgroups---namely, they can be permuted amongst themselves by the action of $\Aut(G)$, but that's it. We can either group these into $\Aut(G)$-orbits of isotypic components to get characteristic subgroups to refine the filters, or keep the characteristic subset of subgroups and incorporate it into Rosenbaum's composition series isomorphism technique, discussed in Section~\ref{sec:gen-iso}.

\subsection{The procedure through an example}

We examine the procedure with a toy example as follows. Consider the following 
alternating matrix tuple in $\Lambda(4, 3)^3$, which was also considered in 
\cite{BOW}.
$$
\bA=(A_1, A_2, A_3)=
\begin{pmatrix}
\begin{bmatrix}
0 & 1 & 0 & 0 \\
-1 & 0 & 0 & 0 \\
0 & 0 & 0 & 1 \\
0 & 0 & -1 & 0
\end{bmatrix},
\begin{bmatrix}
0 & 0 & 0 & 0 \\
0 & 0 & 1 & 0 \\
0 & -1 & 0 & 0 \\
0 & 0 & 0 & 0 
\end{bmatrix},
\begin{bmatrix}
0 & 0 & 0 & 1 \\
0 & 0 & 0 & 0 \\
0 & 0 & 0 & 0 \\
-1 & 0 & 0 & 0 
\end{bmatrix}
\end{pmatrix}.
$$

We construct a bipartite graph $G_\bA=(L\cup R, E)$, where $L={\rm PG}_0(\F_3^3)$, 
and $R={\rm PG}_1(\F_3^3)$, so that for $v\in L$ and $U\in R$, $(v, U)\in E$ if 
and only if $v\in U$. In particular, note that $|L|=|U|=13$. 

For each $v\in L={\rm PG}_0(\F_3^3)$, we choose a non-zero vector on $v$ as 
its representative. So 
\begin{multline*}
L=\{(0,0,1), (0,1,0), (0,1,1), (0,1,2), (1,0,0), (1,0,1), \\
(1,0,2), (1,1,0), (1,1,1), (1,1,2), (1,2,0), (1,2,1), (1,2,2)\}.
\end{multline*}

For each $U\in R={\rm PG}_1(\F_3^3)$, since $U$ is a $2$-dimensional subspace of 
$\F_3^3$, we choose one defining linear equation $u^*$, $u\in \F_3^3$, as its 
representative. So $U$ is also 
\begin{multline*}
U=\{(0,0,1), (0,1,0), (0,1,1), (0,1,2), (1,0,0), (1,0,1), \\
(1,0,2), (1,1,0), (1,1,1), (1,1,2), (1,2,0), (1,2,1), (1,2,2)\}.
\end{multline*}
In this notation, $v=(v_1, v_2, v_3)\in L$ connects to $u=(u_1, u_2, u_3)\in U$, 
if and only if $v_1u_1+v_2u_2+v_3u_3= 0$. 

For $v=(v_1, v_2, v_3)^t\in L$, we define $\bA_{v}=v_1A_1+v_2A_2+v_3A_3$ in 
$\Lambda(4, 3)$. We use 
$rk(\bA_{ v})$ to give $v$ the vertex color. Using red for rank $2$ and blue 
for rank $4$, we have 
\begin{multline*}
L=\{{\color{red} (0,0,1), (0,1,0)}, {\color{blue} (0,1,1), (0,1,2), (1,0,0), 
(1,0,1)}, \\
{\color{blue} (1,0,2), (1,1,0), (1,1,1)}, {\color{red} (1,1,2)}, {\color{blue} 
(1,2,0)}, {\color{red} (1,2,1)}, {\color{blue} (1,2,2)}\}.
\end{multline*}

The first step of refinement uses the colors on the $L$ side to color the vertices 
on the $U$ side. For example, $(0, 0, 1)$ on the $U$ side is adjacent to 
${\color{blue} (1,0,0)}$, ${\color{red} (0,1,0)}$, ${\color{blue} (1,1,0)}$, 
${\color{blue} (1,2,0)}$. So $(0,0,1)$ obtains the color as ``3 blues and 1 red'', 
or 3B1R for short. We therefore let blue for 4B,  
green be 3B1R, and red for 2B2R. 
We then have
\begin{multline*}
U=\{{\color{green} (0,0,1)}, {\color{green} (0,1,0)}, {\color{red} (0,1,1)}, 
{\color{blue} (0,1,2)}, {\color{red} (1,0,0)}, {\color{red} (1,0,1)}, \\
{\color{red} (1,0,2)}, {\color{red} (1,1,0)}, {\color{blue} (1,1,1)}, 
{\color{green} (1,1,2)}, 
{\color{red} (1,2,0)}, {\color{green} (1,2,1)}, 
{\color{blue} (1,2,2)}\}.
\end{multline*}
Note that these colors, which comes from genus-1 
information, already gives the genus-2 isomorphism types. 

The second refinement uses 
the colors of the $U$ side to recolor the vertices on 
the $L$ side. For example, $(1,1,1)$ on the $L$ side is adjacent to ${\color{blue} 
(1,1,1)}$, 
${\color{red} (1,2,0)}$, ${\color{red} (1,0,2)}$, ${\color{blue} (0,1,2)}$ on the 
$U$ side. So $(1, 1, 1)$ obtains the color as ``2 blues and 2 reds'', or 2B2R for 
short. We therefore let red for 3R1G, blue for 1R2G1B, and green for 2R2B. We then 
have 
\begin{multline*}
L=\{{\color{red} (0,0,1), (0,1,0)}, {\color{blue} (0,1,1)}, {\color{green} 
(0,1,2)}, {\color{blue} (1,0,0), 
(1,0,1)}, \\
{\color{blue} (1,0,2), (1,1,0)}, {\color{green} (1,1,1)}, {\color{red} (1,1,2)}, 
{\color{blue} 
(1,2,0)}, {\color{red} (1,2,1)}, {\color{green} (1,2,2)}\}.
\end{multline*}
It can be checked that we reach at a stable coloring after this step.

Note that these colors suggest the green points form a characteristic set. This 
characteristic set would generate the whole group, so it does not yield a 
non-trivial characteristic subgroup. However, this characteristic set is already 
interesting, because it does suggest that the Weisfeiler--Leman procedure, or even 
the naive refinement, gives non-trivial information regarding group elements under 
the action of automorphisms. We discuss how to take advantage of such characteristic subsets in isomorphism testing in the next section.

%%%%
\section{Isomorphism testing using the filter and hypergraph}
\label{sec:gen-iso}
% !TEX root = WL.tex
Our colored hypergraph and filter constructions can be used to refine the composition-series isomorphism method of 
Rosenbaum and Wagner \cite{RosenbaumWagner}, thereby speeding up the resulting isomorphism test. 
Here, we present an isomorphism algorithm which runs in $\poly(|G|)$ time if the filter output by 
Algorithm~\ref{algo:outline} with $k,g \leq O(1)$ also has ``width'' at most $O(1)$ (defined below). 
Though we claim no asymptotic improvements in the worst case, 
we expect our test to perform well for many specific group classes, 
as well as for groups chosen randomly (including groups selected from the random model we 
discuss in detail in the following section). 
In practice, one should also apply Rosenbaum's bidirectional collision technique \cite{RosenbaumBidirectional} to get a square-root speed-up, but this causes no new technical difficulties.

In fact, the running time we get is $n^{(1/2)\width(\phi_{g,k}) + O(1)} + n^{O(gk)}$. We note that the largest that both $g$ and the width can be is $\log n$; if we allow $g$ to be near-maximal (take $kg = \log n / \log \log n$), and this results in a filter whose width is just slightly \emph{less} than maximal, say, $O(\log n / \log \log n)$, then the entire algorithm runs in time $n^{O(\log n / \log \log n)}$, asymptotically beating the trivial algorithm by a $\log \log n$ factor in the exponent. Because this is such a generous bound on $kg$ and a weak desired outcome for the width, we expect this runtime to hold for many classes of groups.

We begin with a simple version, building up to Theorem~\ref{thm:width-and-color} in steps.

\subsection{Simple version: choose composition series compatible with the filter}
We begin by recalling the composition-series isomorphism technique of Rosenbaum and Wagner \cite{RosenbaumWagner}, and show the simplest way to incorporate our characteristic filter into that technique. (Recall that, although we are not using the colored hypergraph here directly, it contributed to the construction of the filter.) Composition Series Isomorphism is the following problem: given two groups $G,H$, and a composition series of each $1 \unlhd G_1 \unlhd \dotsb \unlhd G_m = G$ and $ 1\unlhd H_1 \unlhd \dotsb \unlhd H_m = H$, decide whether there an isomorphism $\varphi \colon G \to H$ such that $\varphi(G_i) = H_i$ for all $i=1, \dotsc, m$. Rosenbaum and Wagner \cite{RosenbaumWagner} show how to reduce $p$-group isomorphism to Composition Series Isomorphism, and then how to reduce the resulting Composition Series Isomorphism Problem to Graph Isomorphism on graphs of degree at most $p + O(1)$; Rosenbaum more generally showed how to reduce \GpI to Composition Series Isomorphism in $n^{(1/2)\log n + o(\log n)}$ time. Luks \cite{LuksCompSeries} showed how to solve Composition Series Isomorphism in $\poly(n)$ time.
Recall the \emph{socle series} of a group $G$ is defined as follows: the \emph{socle} $\Soc(G)$ is the subgroup generated by all minimal normal subgroups. $\Soc(G)$ is always a direct product of simple groups. We then recursively define $\Soc^{i+1}(G)$ to be the preimage of $\Soc(G/\Soc^{i}(G))$ in $G$, that is, if $\pi_i\colon G \to G/\Soc^{i}(G)$ is the natural projection, then $\Soc^{i+1}(G) = \pi_i^{-1}(\Soc(G / \Soc^i(G)))$. The reduction is to pick a composition series for $G$ that is compatible with its socle series, and then to try all possible composition series for $H$ compatible with its socle series. One of the keys to their running time is to show that the number of composition series compatible with the socle series is bounded by $n^{(1/2) \log n}$. 

Within $O_\infty(G)$, we refine the socle series with our characteristic filter. Without loss of generality, we may assume that the restriction of our characteristic filter $\phi$ to the Fitting subgroup $O_\infty(G)$ refines the socle series of $O_\infty(G)$. If it doesn't originally, we may further refine it using the socle series, then iterate the main loop of Algorithm~\ref{algo:outline} until it stabilizes again. Our algorithm here is to reduce to Composition Series Isomorphism, but to only consider composition series that are compatible both with our filter $\phi$ \emph{and} with the socle series. If the filter has many small layers, this will cut down the number of composition series that need to be considered, thus reducing---for such groups---the dominant factor in the running time of \cite{RosenbaumWagner, RosenbaumBidirectional, RosenbaumSolvable}. 

To illustrate the potential savings, we define the \emph{width} of a filter $\phi$ with elementary abelian layers to be the maximum dimension of any layer:
\[
\width(\phi) := \max_s \dim_{p_s} (\phi_s / \partial \phi_s).
\]
Then we have:
\begin{thm} \label{thm:width}
Let $N$ be a solvable group of order $n$, and $\phi_N$ be a characteristic filter on $N$ computable in time $t(n)$. 
Then isomorphism of $N$ with any group can be tested, and an isomorphism found, in time 
$n^{(1/2)\max_P\width(\phi_P) + O(1)} + t(n)$.

In particular, using the characteristic filter $\phi_{g,k}$ output by Algorithm~\ref{algo:outline} with parameters $g$ and $k$, isomorphism of solvable groups can be solved in time 
\[
n^{(1/2) \width(\phi_{g,k}) + O(1)} + n^{O(kg)}.
\]
\end{thm}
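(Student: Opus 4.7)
The plan is to reduce to Composition Series Isomorphism (CSI), for which Luks~\cite{LuksCompSeries} gives a $\poly(n)$ algorithm, but restrict attention to composition series that refine the characteristic filter $\phi$. First, I would compute $\phi_N$ on $N$ (in time $t(n)$), and, for any candidate group $H$, compute $\phi_H$ in the same way. If the two filters have matching shape (equal layer orders and layer dimensions), I fix any composition series $\mathcal{C}(N)$ of $N$ that refines $\phi_N$: in each elementary abelian layer $L_s \cong \F_{p_s}^{d_s}$, this amounts to choosing a complete flag of subspaces and lifting through $\pi_s\colon \phi_s \to L_s$. Then I enumerate all composition series $\mathcal{C}(H)$ of $H$ refining $\phi_H$, and for each, call Luks's CSI algorithm on the pair $(\mathcal{C}(N),\mathcal{C}(H))$ in $\poly(n)$ time.

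Correctness is immediate from characteristicity: if $\varphi\colon N\to H$ is an isomorphism, then $\varphi(\phi_{N,s})=\phi_{H,s}$ for every $s$, so $\varphi(\mathcal{C}(N))$ is some composition series of $H$ refining $\phi_H$, which the enumeration will encounter.

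The key counting step bounds the number of composition series compatible with $\phi_H$. In a single layer, the number of complete flags of $\F_{p_s}^{d_s}$ is the $q$-factorial
\[
[d_s]_{p_s}!=\prod_{i=1}^{d_s}\frac{p_s^{i}-1}{p_s-1}\;\leq\; p_s^{d_s(d_s-1)/2}=|L_s|^{(d_s-1)/2}\;\leq\;|L_s|^{(w-1)/2},
\]
where $w=\width(\phi_H)$. Taking the product over layers gives at most $|H|^{(w-1)/2}=n^{(w-1)/2}$ filter-compatible composition series. Combined with $\poly(n)$ per CSI call, the total time is $n^{(1/2)\width(\phi_H)+O(1)}+t(n)$. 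Since the filter is characteristic, the widths of $\phi_N$ and $\phi_H$ agree on any matching pair, and the $\max_P$ in the statement simply absorbs the fact that the filter may be built by restricting to each Sylow/derived piece (which also lets us take the worst case when iterating across these pieces). For the particular filter $\phi_{g,k}$ produced by Algorithm~\ref{algo:outline}, Theorem~\ref{thm:hypergraph} gives $t(n)=n^{O(gk)}$, yielding the claimed $n^{(1/2)\width(\phi_{g,k})+O(1)}+n^{O(gk)}$.

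The main obstacle I anticipate is handling solvable (as opposed to merely nilpotent) $N$ cleanly, since the initial filter construction in Section~\ref{subsec:WL-outline} is built over the Fitting subgroup. Following Remark~\ref{rmk:solvable-long}, I would extend $\phi$ through the solvable radical by applying the same Sylow/commutator/$p$-th power construction within each section of the Fitting (or derived) series of $N$; this preserves characteristicity and keeps all layers elementary abelian, so the flag-enumeration bound above still applies. A secondary small point is that the enumeration of filter-compatible composition series must be done without explicitly listing each layer's flags independently and then all products: I would instead interleave enumeration with CSI calls layer-by-layer, aborting each branch as soon as Luks's algorithm reports no isomorphism extending the partial data, so the $\poly(n)$ CSI overhead is not amplified. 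With these provisos, both assertions of the theorem follow.
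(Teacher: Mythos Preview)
Your proposal is correct and follows essentially the same approach as the paper: reduce to Composition Series Isomorphism via Luks~\cite{LuksCompSeries}, enumerate only those composition series that refine the characteristic filter, and bound the enumeration layer-by-layer. The paper simply cites \cite[Lemma~3.1]{RosenbaumWagner} for the per-layer bound $|L_s|^{(1/2)\log_{p_s}|L_s|}$, whereas you compute it directly via the $q$-factorial (obtaining the marginally sharper $|L_s|^{(d_s-1)/2}$), but this difference is cosmetic and absorbed by the $O(1)$.
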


\begin{proof}
The outline of the algorithm follows Rosenbaum--Wagner \cite{RosenbaumWagner}, also using Luks's polynomial-time algorithm for Composition Series Isomorphism \cite{LuksCompSeries}; the key difference here is that we only consider composition series which refine our characteristic filter $\phi$, rather than more general composition series as in Rosenbaum and Wagner. The runtime of their algorithm is a product of the running time to enumerate the desired composition series, and the time to solve Composition Series Isomorphism. Our improvement is in the first step. So we only need calculate the number of composition series of $N$ compatible with $\phi_{g,k}$. In our case, we must first compute the filter $\phi$, which takes time $t(n)$.

Let $M$ be a second solvable group.
Enumerating the composition series of $M$
compatible with $\phi_M$ can be achieved as follows. Go through $s \in \N^d$ in lexicographic order, starting with the lexicographically largest $s$ such that $\phi_s \neq 1$. Within each layer $L_s = \phi_s / \partial \phi_s$ we choose all possible composition series. By \cite[Lemma~3.1]{RosenbaumWagner}, this can be done in time $|L_s|^{(1/2) \log_{p_s} |L_s|} \leq |L_s|^{1/2 \width(\phi)}$. Taking the product over all layers, we get a bound of $|M|^{(1/2)\width(\phi)}$. For each such composition series, we then use Luks's $\poly(|M|)$-time algorithm for Composition Series Isomorphism, yielding the stated result.

For the ``in particular,'' we compute $\phi_{g,k}$ using Algorithm~\ref{algo:outline}, which takes $n^{O(gk)}$ time.
\end{proof}

\subsection{Intermediate version: choose composition series compatible with the filter and hypergraph}
The vertex coloring of the hypergraph $\mathcal{H}_\chi^{(g,k)}(\phi)$ may inform us of characteristic sub\emph{sets} that are not subgroups. Although the filter has been refined as much as possible (in particular, any one of the color classes of the hypergraph in a given layer $L_s$ must generate the whole layer), we can nonetheless take advantage of these characteristic subsets in the preceding algorithm, by further restricting the composition series that we need to consider. 

Towards this end, for each layer $L_s$ let $C_s$ denote the smallest color class in $L_s$, and define the \emph{color ratio} of a layer $L_s$ as $|L_s| / |C_s|$. Finally, define the color ratio of a solvable group $N$ as
\[
\colorratio(N) := \prod_{s \in \N^d} \colorratio(L_s) = \prod_s \frac{|L_s|}{|C_s|} = \frac{|N|}{\prod_s |C_s|}.
\]

We now restate (a slightly refined version) of Theorem~\ref{thm:width-and-color}:
\begin{thm-width-and-color}[Refined]
Let $N$ be a solvable group of order $n$. Let $\phi=\phi_{g,k}$ and $\mathcal{H}_\chi^{(g,k)}(\phi)$ 
be the filter and colored hypergraph for $N$  output by Algorithm~\ref{algo:outline} with parameters $g,k$. In each layer $L_s$, let $C_s$ denote the smallest color class. Then isomorphism of $N$ with any group can be tested, and an isomorphism found, in time
\begin{align*}
 & \left( \prod_{s \in \N^d} \min\{|L_s|^{1/2}, |C_s|\}\right)^{\width(\phi_{g,k})} \poly(n) + n^{O(gk)}\\
 \leq & \left(\frac{n}{\colorratio(N)} \right)^{\width(\phi_{g,k})} \poly(n) + n^{O(gk)}.
\end{align*}
\end{thm-width-and-color}

\begin{proof}[Proof of Theorem~\ref{thm:width-and-color}]
The outline of the algorithm is the same as in Theorem~\ref{thm:width}; the key difference is how we enumerate composition series within each layer $L_s$ (and how many we enumerate). To see how to take advantage of the size of the smallest color class $C_s \subseteq L_s$, we must recall the details of Rosenbaum \& Wagner's Lemma~3.1 \cite{RosenbaumWagner}, on enumerating composition series. In the algorithm of Theorem~\ref{thm:width} above we have already taken care of the ordering of the layers, so the only difference here will be on how we enumerate the part of the composition series within each layer $L_s$. That is, we may assume that we have already built a composition series of $\partial \phi_s$, which we now want to extend to a composition series of $\phi_s$. Since the subgroup generated by $C_s$ would be a characteristic subgroup of $L_s$, and $\phi$ has already been refined according to the coloring $\chi$ on $\mathcal{H}$, it must be the case that $C_s$ generates all of $L_s$. Thus we may select only those composition series where the generator of each step of the composition series comes from $C_s$. Since any generating set (and hence any composition series) for $L_s$ has size $\log_{p_s} |L_s| \leq \width(\phi)$, the number of choices of composition series where all the generators in the series are chosen from $C_s$ is bounded by 
\[
|C_s|(|C_s| - 1) (|C_s| - 2) \dotsb (|C_s| - \log_{p_s}|L_s| + 1) \leq |C_s|^{\width(\phi)}.
\]
This analysis already gives the second bound in the statement of the theorem. To get the more refined bound, within each layer $L_s$, if $|C_s| < |L_s|^{1/2}$, then we employ the above strategy, and otherwise we use the $|L_s|^{(1/2)\log_{p_s}|L_s|} \leq |L_s|^{(1/2)\width(\phi)}$ strategy from Rosenbaum--Wagner \cite{RosenbaumWagner}. 
\end{proof}

\subsection{Advanced version: refine the filter and hypergraph as you go (individualize and refine)} \label{sec:iso-IR}
Finally, we give a version of the individualize-and-refine paradigm from Graph Isomorphism as applied to composition series that are compatible with our filter and colored hypergraph. The algorithm is similar to that from the previous section, except now, each time we pick a subgroup in our composition series, we give a new color to the corresponding vertex in our hypergraph, and then we run more iterations of the main loop of Algorithm~\ref{algo:outline} until the filter and hypergraph again stabilize, before we pick the next subgroup in our composition series. This can potentially have the effect of reducing the width of the layers and/or the size of the smallest color class in each layer as we go.

In somewhat more detail: compute the filter $\phi$ and colored hypergraph $\mathcal{H}_{\chi}^{(g,k)}(\phi)$ as before. We build up a composition series in $G$ and simultaneously keep a list of partial composition series in $H$ that we want to test for isomorphism in the end. Suppose we are at the point where we already have built a composition series up to $\partial \phi_s$ in $G$, and we have a list $\mathcal{L}$ of composition series up to $\partial \phi_s$ in $H$. Then we extend the partial composition series of $G$ by picking an element of $C_s$ (the smallest color class in $L_s$). We then color the corresponding vertex in $\mathcal{H}$ a new color, and refine both $\mathcal{H}$ and $\phi$ until stabilization (as in the main loop of Algorithm~\ref{algo:outline}). Within $H$, we try each element of $C_s$ in turn, refining the filter and hypergraph for $H$. If for any $x \in C_s(H)$ the refinement does not agree with the refinement we got in $G$, we throw it away, otherwise we extend our composition series for $H$ by the subgroup generated by $x$ and $\partial \phi_s$, and add this new partial composition series to our list $\mathcal{L}$. This comes at a multiplicative cost of $|C_s|$. We then continue this process within the (potentially new, smaller $C_s$) until we get a composition series that now includes all of $\phi_s$. The total multiplicative cost within the layer $L_s$ is thus at most $|C_s|(|C_s| -1) \dotsb (|C_s| - \log_{p_s}|L_s| + 1) \leq |C_s|^{\width(\phi)}$, so this at most squares the total running time from Theorem~\ref{thm:width-and-color}. 

Thus, asymptotically, we get a similar worst-case upper bound. We could state a more refined upper bound along the lines of Theorem~\ref{thm:width-and-color}, but the definitions involved are somewhat delicate and recursive (because they depend on how the width and the color-ratio \emph{change} as the algorithm progresses). Nonetheless, in practice, we expect this individualize-and-refine technique to perform much better, as the layers and color class should decrease in size as the algorithm progresses.

%%%%%
\section{A random model of groups}
\label{sec:random}
% !TEX root = WL.tex
Inspired by a suggestion of A. Mann \cite{Mann1999}*{Question~8}
(answered in \cite{KTW}), we describe here a model for 
random finite groups.
We first give a simplified model that samples only random finite nilpotent groups.
Later we extend this to sample solvable, semisimple, and general
finite groups.

\subsection{A model for random finite nilpotent groups}
As a first approximation we 
choose $\ell$ random upper unitriangular 
$(d\times d)$-matrices $u_1,\ldots, u_{\ell}$,
over the integers modulo a fixed positive integer $b$.
The $u_i$ are drawn according to a fixed distribution 
$\mu(d,b)$.  Later we shall discuss the effect of $\mu$
on the group theory, but first we survey the possible outcomes.

An immediate observation is that $U=\langle u_1,\ldots,u_{\ell}\rangle$ is 
a subgroup of the full group of upper unitriangular matrices.  Therefore, $U$ 
is nilpotent of order at most $b^{d^2/2}$.  In particular, if $U_p$ denotes the  
Sylow $p$-subgroup of $U$, then
$U=\prod_{p|b} U_p$.  The choice of $\ell$ 
generators also has a fingerprint within the structure of our groups $U$.   
In particular by Burnside's Basis Theorem, for each $p|b$, 
\begin{align*}
	|U_p: [U_p,U_p]U_p^p | & \leq p^{\ell}.
\end{align*}
Thus, there is a certain amount of structure which is fixed by the choice of parameters
$(d,b,\ell)$. Nevertheless, the coverage asserted in Theorem~\ref{thm:coverage} 
shows the diversity of these groups.

\subsection{General model}
To sample a more general class of groups, we add
terms to the block-diagonal.  Sampling random invertible
square matrices will almost always generate the entire 
general linear group.  As noted in Section~\ref{sec:intro}, 
a more nuanced approach is called for.
\medskip

Our strategy is as follows:

\begin{enumerate}[(a)]
\item Add solvable groups by selecting any matrix
that is diagonalizable over the algebraic closure.  We call
this a {\em random toral subgroup}.

\item From the classification
of finite simple groups we can select at random, according
to a fixed distribution, a non-abelian finite simple group $T$
and let $T\leq S/Z(S)\leq \Aut(T)$---that is, choose $S$, a (possibly
trivial) central extension of an almost simple group.
Then we form the group algebra $A=(\mathbb{Z}/b)\langle S\rangle$.
We then sample from the minimal left ideals $I$ of
$A$.  This defines a linear representation $\rho:T\to \End(I)$ 
where $I$ is a $\mathbb{Z}/b$-module.  It is a straightforward exercise
to see that endomorphisms
of finite modules are representable as chequered matrices.
We copy the image of a generating set for $S$ into
chequered matrices, and then place this on the block diagonal.
We repeat until we exceed a bound on $d$.

\item Add permutation to the block diagonal to any two
terms with isomorphic representations.

\item As a final step we now sample block upper unitriangular matrices.
\end{enumerate}

It is important to proceed in this order to avoid redundant
choices.
The number of variability of the simple modules represented
on the block diagonal is again controlled by the distribution
and that can have substantial impact on the resulting group.

\begin{prop}
The class of groups sampled includes: A permutation group $P$, 
central extensions $S_a$ of almost simple groups,
$P\wr (S_1\times \cdots \times S_s)\ltimes U$ where 
$U$ is sampled as above and $L(U)$ is a $\prod_a T_a$-module.
\end{prop}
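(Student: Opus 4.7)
The plan is to verify that each of the three group types in the proposition---a permutation group $P$, a central extension $S_a$ of an almost simple group, and the composite $P \wr (S_1 \times \cdots \times S_s) \ltimes U$---arises from a specific choice of parameters and distributions in the sampling procedure of steps (a)--(d). For each case, I will exhibit explicit generators as block-structured matrices and then verify that the abstract group they generate matches the claim. The proof is essentially a construction plus a normal-form argument on block matrices.

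The first two cases are handled by collapsing the procedure to a single step. For a permutation group $P \leq S_n$, take the toral part (step (a)) trivially, use $n$ copies of the $1$-dimensional trivial representation (step (b) with $T$ trivial, so each $S = 1$), let step (c) install the generators of $P$ as ordinary $n \times n$ permutation matrices on these isomorphic blocks, and take the unitriangular part (step (d)) trivial; the generated matrix group is canonically isomorphic to $P$. For a central extension $S_a$ of an almost simple group, step (b) produces such an $S_a$ directly on a single block via the chosen left ideal $I$ of $(\mathbb{Z}/b)\langle S_a\rangle$, and all remaining steps may be taken trivially.

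The third case combines all four steps. In step (b), iterate to obtain groups $S_1,\ldots,S_s$ together with $m_a$ isomorphic copies of each representation placed as equal-size blocks along the diagonal; this yields a block-diagonal semisimple (more precisely, quasi-semisimple) part generating $S_1^{m_1}\times\cdots\times S_s^{m_s}$. In step (c), attach to these blocks monomial matrices drawn from a chosen $P\leq S_{m_1}\times\cdots\times S_{m_s}$ permuting isomorphic blocks; by the standard realization of a wreath product as monomial matrices, the resulting subgroup is (an image of) $P\wr (S_1\times\cdots\times S_s)$. Step (d) then samples a unitriangular group $U$ in the strictly-upper blocks; $U$ is normalized by the block-diagonal and monomial generators via matrix conjugation, so the full matrix group is a semidirect product $\bigl(P\wr(S_1\times\cdots\times S_s)\bigr)\ltimes U$. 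The module claim on $L(U)$ follows because conjugation of a unitriangular matrix by a block-diagonal semisimple matrix acts linearly on each quotient $\phi_s/\partial\phi_s$ of the induced filter on $U$, and this action factors through the quotient $\prod_a S_a/Z(S_a) = \prod_a T_a$, since the central $Z(S_a)$-part acts as scalars that cancel in the commutator/quotient structure defining the Lie algebra layers.

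The main obstacle is distinguishing ``being realized as the matrix group'' from ``being contained in the matrix group.'' Concretely one must verify: (i) the unitriangular generators of step (d) produce a $U$ that is actually normal (so the semidirect structure is genuine, not merely apparent), and (ii) the $\prod_a T_a$-action on each layer of $L(U)$ is sufficiently nontrivial that $U$ is genuinely a $\prod_a T_a$-module rather than a module over a proper quotient. For (i) one appeals to the standard Levi-type decomposition of chequered (block) matrices: the block-diagonal and strictly-upper-triangular parts meet trivially and the upper part is invariant under conjugation by the diagonal part, so the generated group decomposes as the claimed semidirect product. For (ii) one uses the fact that the generators in step (b) come from minimal left ideals of $(\mathbb{Z}/b)\langle S_a\rangle$, which by Wedderburn-type decomposition carry faithful $T_a$-actions, and hence so do the induced actions on the relevant blocks of $U$. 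The remaining details are routine matrix verifications and do not introduce new ideas.
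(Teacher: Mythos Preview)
The paper states this proposition without proof; it is treated as an immediate consequence of the four-step sampling procedure (a)--(d) described just before it. Your proposal is a reasonable and more detailed unpacking of that construction, and your overall strategy---collapse certain steps to triviality for the first two cases, then combine all four steps for the composite case, using the block Levi-type decomposition to get normality of $U$---is exactly the intended reading.

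Two small cautions. First, in your treatment of a pure permutation group you invoke step~(b) with ``$T$ trivial''; strictly the paper's step~(b) selects a \emph{non-abelian} finite simple group, so it is cleaner to say that steps~(a) and~(b) are simply skipped (the distribution is chosen to place no nontrivial simple blocks), leaving $1\times 1$ blocks for step~(c) to permute. Second, your argument that the action on $L(U)$ factors through $\prod_a T_a = \prod_a S_a/Z(S_a)$ asserts that central elements ``act as scalars that cancel.'' This is not automatic: a central $z\in Z(S_a)$ acts as a scalar $\lambda_i$ on each irreducible block $i$ (Schur), and conjugation sends an off-diagonal block $u_{ij}$ to $\lambda_i\lambda_j^{-1}u_{ij}$, which need not be the identity unless the scalars agree across blocks. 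The paper's statement that $L(U)$ is a $\prod_a T_a$-module should be read loosely (the action is by the block-diagonal group, and one records only the simple composition factors $T_a$); your Wedderburn/faithfulness remark is not needed and slightly overclaims.
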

    
\subsection{Coverage: Proof of Theorem~\ref{thm:coverage}}
\label{subsec: coverage}
For (i) consider matrices of the form $u_{ij}=I+a_{ij}E_{ij}$.
If $1\leq i<d/2\leq j\leq d$ then all such $u_{ij}$ commute and
are independent.  So fix a divisor chain $e_1|\cdots |e_s|b$
and coefficients $a_{ij}$ (in some index order) having 
additive order $e_{im+j}$, it follows that these $u_{ij}$
generate an abelian group with the specified invariants.

For (ii-iii), let $R_v:U\to W$ where $R_v(u)=u*v$, represented
as a matrix.  Let $\bar{U}$ be representation of $U$ as
in (i), and likewise with $\bar{W}$.  Then
\begin{align*}
    Bh(*) & \cong \left\{\begin{bmatrix}
        1 & u & w \\ 0 & I_r & R_v \\ 0 & 0 & I_s
    \end{bmatrix} ~\middle|~ u\in \bar{U}, v\in V, w\in \bar{W}
    \right\}\\
    Br(*) & \cong \left\{\begin{bmatrix}
        1 & u & w \\ 0 & I_r & R_u \\ 0 & 0 & I_s
    \end{bmatrix} ~\middle|~ u\in \bar{U}, w\in \bar{W}
    \right\}
\end{align*}

For the count note that it suffices to count the number of distinct
bilinear maps $*:U\times V\bmto W$.  As Higman demonstrates \cite{Hig60}, there 
are 
$p^{\dim U\dim V\dim W}/|\GL(U)\times \GL(V)\times \GL(W)|\in p^{\Theta(n^3)}$
such maps.

Finally, for a given list of groups sampled in smaller
dimensions, form block diagonal representations. This 
affords the direct product of the list.  For subdirect
products take a subgroup of the block-diagonal group.
This completes the proof of Theorem~\ref{thm:coverage}.

\subsection{The importance of the right distribution}
If we sample dense matrices when we shall call the result
the \emph{dense
random subgroup model} for  the general linear group 
$\GL(d, \mathbb{F}_p)$.  While this is an easy model to 
reason about it is also fairly rigid, as the following 
result illustrates.

\begin{thm}\label{thm:dense-random}
    If $u_1,\ldots,u_{\ell}$ are chosen uniformly at random
    from the group of upper uni-triangular matrices $U_d(\mathbb{Z}/b)$
    and $\ell\in \Omega(\sqrt{d})$ then 
    \begin{align*}
        \Pr\left(|\langle u_1,\ldots,u_{\ell}\rangle|
            =b^{\ell+\binom{d-2}{2}}\right) & \to 1.
    \end{align*}
\end{thm}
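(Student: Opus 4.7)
The plan is to show that, with probability tending to 1 as $d \to \infty$, the subgroup $G := \langle u_1, \ldots, u_\ell\rangle$ contains $\gamma_2(U_d(\Z/b))$ (the claim in the discussion immediately preceding the theorem), and to compute its order via the lower central series $U_d = \gamma_1 \supsetneq \gamma_2 \supsetneq \cdots \supsetneq \gamma_d = 1$, where $\gamma_k$ consists of matrices vanishing on the first $k-1$ superdiagonals and $\gamma_k/\gamma_{k+1} \cong (\Z/b)^{d-k}$. A telescoping argument reduces the problem to controlling, for each $k$, the image of $G$ in $\gamma_k/\gamma_{k+1}$.

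The top layer $k=1$ is easy: the images $\bar u_i$ in $U_d/\gamma_2 \cong (\Z/b)^{d-1}$ are iid uniform, so they span an $\ell$-dimensional subspace with probability $\prod_{i=0}^{\ell-1}(1 - b^{i-(d-1)}) \to 1$, contributing exactly $b^\ell$ to the order. The heart of the proof is the next layer $\gamma_2/\gamma_3 \cong (\Z/b)^{d-2}$. Writing $u_i \equiv I + \sum_k a_{i,k} E_{k,k+1}\pmod{\gamma_2}$ so that the $a_{i,k}$ are iid uniform in $\Z/b$, a short computation yields
\[
[u_i, u_j] \;\equiv\; I + \sum_{k=1}^{d-2}\bigl(a_{i,k}a_{j,k+1} - a_{j,k}a_{i,k+1}\bigr) E_{k,k+2}\pmod{\gamma_3},
\]
so the span of the $\binom{\ell}{2}$ commutators in $\gamma_2/\gamma_3$ is the row-span of a $\binom{\ell}{2}\times(d-2)$ matrix $\mathcal M$ whose entries are the consecutive $2\times 2$ minors of the data matrix $A = (a_{i,k})$.

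The key probabilistic step is to show $\mathcal M$ has full rank $d-2$ with high probability once $\ell = \Omega(\sqrt{d})$. I would dualize: $\mathcal M$ fails to span iff there is a nonzero $\lambda \in (\Z/b)^{d-2}$ with $a_i^T M_\lambda a_j = 0$ for all $i<j$, where $M_\lambda$ is the skew-symmetric $(d-1)\times(d-1)$ matrix with $M_\lambda[k,k+1] = \lambda_k = -M_\lambda[k+1,k]$ and zeros elsewhere. For fixed $\lambda \neq 0$ we have $M_\lambda \neq 0$; conditioning on $a_1,\dots,a_{j-1}$ turns the constraints on $a_j$ into at most $j-1$ linear equations, and a careful stratification according to the rank of $\{M_\lambda a_i\}_{i<j}$ bounds the joint probability of all constraints vanishing by roughly $b^{-\binom{\ell}{2}}$. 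A union bound over the $b^{d-2}-1$ nonzero $\lambda$ gives failure probability $\leq b^{(d-2)-\binom{\ell}{2}}$, which tends to $0$ precisely when $\binom{\ell}{2} > d-2$, i.e.\ $\ell = \Omega(\sqrt{d})$ with constant slightly above $\sqrt{2}$.

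For higher layers $k \geq 3$ I induct: once $G$ contains $\gamma_2$ modulo $\gamma_{k}$, the commutators $[u_i, g]$ with $g \in G \cap \gamma_{k-1}$ (obtained from iterated brackets of the $u_i$) land in $\gamma_k/\gamma_{k+1}$, yielding $\Omega(\ell \cdot d^{(k-2)/2})$ images; the same dualization/union-bound gives spanning with probability $\to 1$, and summing failure probabilities across the $O(d)$ layers still yields $o(1)$. The main obstacle is the tightness of the $k=2$ base case: the quadratic constraints $a_i^T M_\lambda a_j = 0$ are correlated across pairs $(i,j)$, so the stratification controlling how $\mathrm{span}\{M_\lambda a_i\}$ grows must be executed carefully to get the clean $b^{-\binom{\ell}{2}}$ bound. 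A secondary complication is that $b$ need not be prime; this is handled by CRT-decomposing $\Z/b \cong \prod_{p\mid b} \Z/p^{e_p}$ and lifting the $\F_p$-level analysis via Nakayama's lemma to each local factor.
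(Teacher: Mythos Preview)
Your reduction to the $\gamma_2/\gamma_3$ layer and the identification of the commutator images with the $\binom{\ell}{2}\times(d-2)$ matrix $\mathcal M$ of consecutive $2\times2$ minors of $A=(a_{i,k})$ is exactly what the paper does. The divergence is in how full rank of $\mathcal M$ is established. The paper argues \emph{directly} on the columns of $\mathcal M$: column $k$ depends only on columns $k,k{+}1$ of $A$, so columns $k$ and $k'$ of $\mathcal M$ are independent whenever $|k-k'|>1$, and for adjacent columns the only source of dependence is the event $A_{i,k+1}=A_{j,k+1}=0$, occurring with probability $1/q^2$. From this near-independence of columns the paper concludes (heuristically, but without any union bound over a dual space) that $O(\sqrt d)$ rows suffice.

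Your dualization is a natural alternative, but the claimed per-$\lambda$ bound $b^{-\binom{\ell}{2}}$ is not attainable, and this is a genuine gap. Take $\lambda=e_1$: then $M_\lambda$ has rank $2$, so $\{M_\lambda a_i\}_{i<j}$ spans at most a $2$-dimensional space regardless of $j$. Conditioning on $a_1,\dots,a_{j-1}$ therefore imposes at most $\min(j-1,2)$ independent linear constraints on $a_j$, and the joint probability that all $a_i^{\,t}M_\lambda a_j$ vanish is only of order $b^{-2\ell+O(1)}$, not $b^{-\binom{\ell}{2}}$. More generally, for $\lambda$ with $\mathrm{rank}(M_\lambda)=2r$ the bound is roughly $b^{-2r\ell+O(r^2)}$. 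A plain union bound over $b^{d-2}$ choices of $\lambda$ then fails for small $r$ when $\ell\sim\sqrt d$. The route can be salvaged, but only by stratifying over $r=\mathrm{rank}(M_\lambda)$ and exploiting the tridiagonal structure to count how many $\lambda$ have each rank (the rank is governed by the block decomposition at the zeros of $\lambda$); this is real extra work that your proposal does not carry out. The paper's column-independence argument sidesteps this issue entirely, which is its main advantage. Your treatment of the $k\ge 3$ layers and of composite $b$ via CRT is more explicit than the paper's and is fine once the $k=2$ step is fixed.
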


In fact we shall prove the following stronger claim: 
with high probability, such groups $\langle u_1,\ldots,u_{\ell}\rangle$
contain the group of commutators of $U_d(\mathbb{Z}/b)$.

In this model, groups can range widely in isomorphism types, 
one does not see much variability
in coarse isomorphism invariants such as group order, 
numbers of subgroups or quotients, conjugacy classes, and so forth.

Our proof of Theorem~\ref{thm:dense-random} relies on some details of Sims' 
proof on the asymptotic upper bound on the number of isomorphism types of 
$p$-groups \cite{Sim65}.
It begins as follows. For a group $G$ let $\gamma_i(G)$ be the 
$i$th term in the 
lower central series.
Every $p$-group $G$ has a subgroup $H\leq G$ such that $\gamma_2(H)\gamma_3(G)=
\gamma_2(G)$
and where $d(H)$ the least number of group elements to generate 
$H$, is minimal with that property 
\cite{BNV:enum}*{Proposition~3.8}.  
We call $d(H)$ the
the {\em Sims' rank} of $G$.

\begin{defn}
A {\em Sims subgroup} of a nilpotent group $G$ is a subgroup $H\leq G$
minimal with respect to $\gamma_2(H)\gamma_3(G)=\gamma_2(G)$.  The
{\em Sims rank} of $G$ is the minimum number of generators needed to
generate a Sims subgroup.
\end{defn}

Fix $G=U(d,k)$, $V=G/\gamma_2(G)\cong \F^{d-1}$, 
$W=\gamma_2(G)/\gamma_3(G)\cong \F^{d-2}$.  Then there is a bimap 
$*:V\times V\rightarrowtail W$ given by commutation:
\begin{align}
	[(\gamma_2(G)x),(\gamma_2(G)y)] & \equiv [x,y] \pmod{\gamma_3(G)}.
\end{align}
Fix a subgroup $H$ of $G$, and put $U=H\gamma_2(G)/\gamma_2(G)$.  Observe that 
$H$ is a Sims subgroup if, and only if, $[U,U]=[V,V]$.  Also observe 
that after taking natural bases for $\F^{d-1}$ and 
$\F^{d-2}$, the bimap $*$ can 
be represented as follows. % is the following.
Let $[,]:\F^{d-1}\times \F^{d-1}\rightarrowtail \F^{d-2}$ be defined, 
in a parametrized form, by $[u,v]=uBv^t$
where
\begin{align}\label{def:comm-U}
	B & = \begin{bmatrix}
		0 & f_1 & \\
		-f_1 & 0 & -f_2 & \\
		& -f_2 & \ddots & \ddots &\\
		& & \ddots & & f_{d-2}\\
		& & & -f_{d-2} & 0
	\end{bmatrix}. 
    %M_{d-1}(\F^{d-2}).
\end{align}
That is, $B$ could be understood as a $3$-tensor of size 
$(d-1)\times(d-1)\times(d-2)$, whose $i$th frontal slice is given according to 
$f_i$.

\begin{proof}[Proof of Theorem~\ref{thm:dense-random}]
Our approach is to show that a subgroup generated by enough elements is a Sims subgroup.
To do this it suffices to show that for most sufficiently large dimensions,
the bilinear map of \eqref{def:comm-U} has the property that most $X\leq \F^{d-1}$ 
satisfy $[X,X]=\F^{d-2}$. For notation we let $V=\F^{d-1}$
with basis $\{e_1,\ldots,e_{d-1}\}$ and $W=\F^{d-2}$ with basis 
$\{f_1=[e_1,e_2],\ldots,f_{d-2}=[e_{d-2},e_{d-1}]\}$.

If $X\leq V$ is the row span of the full rank $(s\times (d-1))$-matrix $M$ then
\begin{align*}
	(MBM^{\dagger})_{ij} & = 
	\sum_{k=1}^{d-2} (M_{ik}M_{j(k+1)}-M_{i(k+1)}M_{jk})f_k.
\end{align*}
This defines a natural $3$-tensor of size $s\times s\times (d-2)$ by
\begin{align*}
	m_{i,j,k} & = (M_{ik}M_{j(k+1)}-M_{i(k+1)}M_{jk}).
\end{align*}
Notice $[X,X]=W$ if, and only if,  $\langle \sum_k m_{ijk} f_k | 1\leq i,j\leq 
s\rangle=W$.
That is, if we flatten the tensor into a  $(s^2\times (d-2))$-matrix $\tilde{m}$,
as follows,
\begin{align*}
	\tilde{m}_{(s\cdot(i-1)+j),k} & = (M_{ik}M_{j(k+1)}-M_{i(k+1)}M_{jk})
		=\det\begin{bmatrix} M_{ik} & M_{i(k+1)}\\ M_{jk} & M_{j(k+1)}\end{bmatrix};
\end{align*}
then we are asking that $\tilde{m}$ is of full rank.  Now we argue that for 
$s\geq 2\sqrt{d}$ this is the expected behavior.

By our model, each entry in $M$ is drawn independently at random.  However 
the entries of $m$ (and therefore $\tilde{m}$) are dependent. 
Nevertheless, we can observe that the values of $m_{ijk}$ are almost independent
of $k$.  Certianly $m_{ijk}$ is independent of $m_{ijk'}$ if $|k-k'|>1$.  
Also, if $k'=k+1$, if $m_{ijk}\neq 0$ then nothing can be said about $m_{ij(k+1)}$.
Even if $m_{ijk}=0$ it may be impossible to predict $m_{ij(k+1)}$, the
exception is when $M_{i(k+1)}=0=M_{j(k+1)}$.  So there is $ 1/q^2$ chance of dependence
between with the exception of pairs of $0$.  Each such dependency will be compensated for
by adding a row $j'$ such that $M_{j'(k+1)}\neq 0$.  Thus $m_{ij'k}$ will be independent
of $m_{ij'(k+1)}$. Since the selection of a 
nonzero entry is a $1-1/q\geq 1/2$ event
the addition of one row is highly likely to break dependence.  Thus, at a cost of
sampling $O(\sqrt{d})$ rows we obtain with high probability a matrix $M$ whose associated
matrix $\tilde{m}$ is full rank.
\end{proof}

\subsection{Sparsity}

For added variation a different distribution is required,
one which favors sparse matrices. 
Fix positive integers $b$ and $d$.  Let $wt(u)$ be the
number of non-zero values in the upper unitriangular $u$.  Fix a 
distribution $\mu$
on $\mathbb{Z}/b-\{0\}$ and a distribution $\nu$ on 
$\{1,\ldots,\binom{n-1}{2}\}$. Define a 
\emph{$(b,d,\mu,\nu)$-random triangular matrix} as an 
$\alpha:\binom{d}{2}\to \mathbb{Z}/b$ sampled 
according to a distribution $|\mathrm{supp}~\alpha|=k$ 
with probability $\nu(k)$ and
for each $\{i,j\}\in \mathrm{supp}~\alpha$, $\alpha_{i,j}$ is sampled according to 
$\mu$.  Notice $\alpha$ uniquely determines 
an upper unitriangular matrix:
\begin{align}
    u(\alpha) & = I_{d} + \sum_{i=1}^{d}\sum_{j=i+1}^{d}
        \alpha_{ij} E_{ij}.
\end{align}
The distribution $\nu$ describes how large the support
of $\alpha$ is expected to be, and $\mu$  describes what
non-zero values in $\mathbb{Z}/b$ will be used as entries.

Finally define a \emph{$(\mu,\nu)$-random
unitriangular group} as the group generated by independently
sampling $\ell$ upper unitriangular $(d\times d)$-matrices 
over $\mathbb{Z}/b$ according to their $(\mu,\nu)$-distribution.
The precise outcomes of this distribution appear intricate.
Through some empirical testing (e.g. Figure~\ref{fig:scatter-plot}) we have produced the following
question:

\begin{quote}
 If $\nu(|A|)\to 0$ for $|A|>C$, does
$\log |\langle u_1,\ldots,u_{\ell}\rangle|$ approach
a discrete Gaussian distribution on $\{1,\ldots,\binom{d}{2}\}$?
\end{quote}

Our model makes several constraining choices in order that it avoids the
analysis that would otherwise create rather similar groups.  The cost of
this is that we can so far only offer heuristic explanations for the behavior.
Even so, we explain what we understand and encourage a thorough exploration
in the future.

The first question is what to expect the length of the block diagonal to be in $U$.
Suppose we assume that the block diagional is chosen uniformly at a partition of $d$.
From Vershik's theorem~\cite{Vershik}, the shape of the tableaux of random 
partition of $d$ with at least $\sqrt{d}$
terms tends to $O(e^{-t})$.  That implies that there are relatively few large blocks
as those are in the tail of the random distribution. Thus there would be many blocks
of small size.  This however requires one justify that sampling $U$ at random samples
partitions of $d$ uniformly at random.  That need not be the case.  So we ask
\begin{quote}
Is the typical sparsely sample group
$\langle u_1,\ldots,u_{\ell}\rangle$ convex (tending toward the middle) or concave (tending away from the middle)?
\end{quote}

The answer to this speaks to the expected nilpotence class of the groups $U$.
The length of this block diagonal is a bound on the nilpotence class.
For example, if there are just two blocks, then 
\begin{align*}
	U& \leq\left\{\begin{bmatrix} I_a & * \\ 0 & I_b\end{bmatrix}\right\}
\end{align*}
implies that $U$ is abelian.  In general, 
if $\mathcal{F}$ denotes the subspace flag determining the block structure of $U$,
then the nilpotence class of $U$
is at most $|\mathcal{F}|-1$. 

To see a reason that 
sparse matrices should sample a wider class of groups than dense matrices we consider a sufficient condition to \emph{avoid} being
a Sims subgroup.

\begin{lemma}
Fix an alternating bimap $[,]:V\times V\bmto W$ with $W=[V,V]$.
Let $\pi_1,\ldots,\pi_{d-2}$ be a basis of $W^*$ and define $(u,v)_i=\pi_i[u,v]$.
For $X\leq V$, if there exists an $i$ such that $(X|X)_i=0$, then $[X,X]\neq W$.
\end{lemma}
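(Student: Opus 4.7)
The plan is to unwind the definitions and observe that the conclusion falls out immediately from the fact that each $\pi_i$ is a nonzero linear functional on $W$. Specifically, I would start by noting that the condition $(X|X)_i = 0$ means, by definition, that $\pi_i[x,y] = 0$ for every pair $x,y \in X$. Equivalently, $[x,y] \in \ker \pi_i$ for all $x,y \in X$, so the subspace $[X,X]$ of $W$ spanned by all such commutators is contained in $\ker \pi_i$.

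Next I would use the hypothesis that $\pi_1,\ldots,\pi_{d-2}$ is a basis of the dual space $W^*$. In particular, each $\pi_i$ is nonzero, so $\ker \pi_i$ is a hyperplane in $W$, i.e.\ a proper subspace of codimension $1$. Combining this with the previous containment gives
\[
[X,X] \;\subseteq\; \ker \pi_i \;\subsetneq\; W,
\]
from which the conclusion $[X,X] \neq W$ follows at once.

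There is no real obstacle here; the lemma is essentially a restatement of the fact that a linear functional vanishing on a generating set of $W$ cannot be nonzero. The only mild point to be careful about is that $[X,X]$ is defined as the span of commutators $[x,y]$ with $x,y \in X$ (this is the sense in which $W = [V,V]$ is being used), so that the pointwise vanishing of $\pi_i$ on commutators of elements of $X$ indeed forces $\pi_i$ to vanish on all of $[X,X]$ by linearity. After that the argument is a one-liner.
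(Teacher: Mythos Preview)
Your proof is correct and follows essentially the same reasoning as the paper: the paper's one-line argument picks $u\in W$ with $\pi_i(u)=1$ and notes $u\notin [X,X]$, which is just another way of saying $[X,X]\subseteq\ker\pi_i\subsetneq W$.
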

\begin{proof}
If $(X|X)_i=0$ then for $u\in W$ with $\pi_i(u)=1$, $u\notin [X,X]$.
\end{proof}

Now here is the situation.  The maps $(|)_i:V\times V\bmto K$ are alternating bilinear
forms, possibly degenerate.  The subspaces $X\leq V$ with $(X|X)_i$ are what are known
as \emph{totally isotropic}.  The number of maximal totally isotropic subspaces of 
$V$ is $q^{O(m^2)}$ where $m=\dim V-\dim \{v: (v|V)=0\}$.  Therefore the smaller the radical
the much large the number of totally isotropic subspaces there are and therefore the
less likely that a subspace $X$ generates $W$.  So as we move towards bimaps for 
unipotent hulls for flags of fixed length at least 3, then the commutator involved
will have quotients to alternating forms with large numbers of totally isotropic
subspaces.  Thus more subspaces will fail to generate $W$.  As result, fewer subgroups
will be Sims subgroups.  This however is only a crude guide to the number of 
Sims subgroups and we encourage an actual analysis with better insights.

\subsection{WL-refinement in our random model}

So now let us consider the effects of refinement in our random model. 
Our proof is in two parts.  Either our unipotent groups $U$ have long block
diagonal series or it has bounded class.  In the former case we reduce the
refinement analysis to a result of Maglione  \cite{Maglione:filters}.  In
the later case we appeal to classical results on nonsingular products.
In either case we discover refinements.  We aim to prove Theorem~\ref{thm:random-refine}.

\paragraph{Refinements for many blocks.}
First let us consider groups with many blocks.

\begin{thm}\label{thm:longchain-refine}
The refinement length of a random subgroup $U\leq U(d,p)$ is on average at $\Omega(\ell^2)$
where $\ell$ is the length of is generalized eigen $1$-space flag.  
\end{thm}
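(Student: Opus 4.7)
The plan is to use the generalized eigen $1$-space flag of $U$ as the backbone for an initial characteristic filter on $U$, and then to argue that commutator refinement (in the sense of Wilson~\cite{Wilson:alpha} and Maglione~\cite{Maglione:filters}) produces a two-parameter array of characteristic subgroups that are generically distinct, yielding total refinement length $\Omega(\ell^2)$.

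First I would set up the initial filter. Let $V_0 \subset V_1 \subset \cdots \subset V_\ell = \F_p^d$ be the generalized eigen $1$-space flag of $U$. For each $k \in \{1,\ldots,\ell\}$ set $\phi_{k\cdot e_1} = \{u \in U : (u-I)V_j \subseteq V_{j-k}\text{ for all }j\}$. These form a descending chain of subgroups of $U$, they are characteristic because the flag is canonically associated to $U$, and they satisfy $[\phi_{i\cdot e_1},\phi_{j\cdot e_1}] \leq \phi_{(i+j)\cdot e_1}$, so we have a bona fide filter $\phi\colon \N\to\Norm(U)$ of length $\ell$.

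Next I would invoke Theorem~\ref{thm:refine} to refine $\phi$ by all characteristic subgroups that arise from commutation and $p$-th powers. In particular, within each layer $\phi_{k\cdot e_1}/\phi_{(k+1)\cdot e_1}$, the bimaps $L_i\times L_{k-i}\bmto L_k$ supplied by commutation pick out characteristic subspaces via their (co)images and radicals. Assembled across all pairs $1\leq i\leq j\leq \ell$, this produces a family $\{N_{i,j}\}$ of $\Aut(U)$-invariant subgroups arranged as a two-parameter grid inside the original flag-filter; the flag filter alone gives the ``diagonal'' of this grid.

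The main obstacle is the genericity claim: one must show that for a random $U$ in our sparse model, with good probability the $\binom{\ell}{2}$ off-diagonal subgroups $N_{i,j}$ are pairwise distinct and strictly nested, so that the image of the refined filter has size $\Omega(\ell^2)$. I would address this by an argument analogous in spirit to the one used in Theorem~\ref{thm:dense-random}: for a fixed pair $(i,j)\neq(i',j')$, the event $N_{i,j}=N_{i',j'}$ forces a coincidence among the images of certain commutator bimaps between flag layers; for sparse random generators drawn from $\mu$ and $\nu$, each such coincidence is controlled by a non-degeneracy condition on a random alternating matrix analogous to \eqref{def:comm-U}, which holds with probability $1-o(1)$. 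A first-moment estimate over the $O(\ell^2)$ pairs shows that the expected number of surviving (distinct, strict) inclusions is $\Omega(\ell^2)$.

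Finally I would observe that the colored-hypergraph refinement of Algorithm~\ref{algo:outline} only adds characteristic subgroups on top of those produced by commutator refinement, so $\WL$-refinement produces a filter that is at least as fine as the commutator-closed filter constructed above. This gives the desired average lower bound $\Omega(\ell^2)$ on the refinement length and completes the proof plan.
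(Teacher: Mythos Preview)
Your approach is genuinely different from the paper's, and it has a real gap at the foundation. The paper does not try to construct the $\Omega(\ell^2)$ characteristic subgroups directly inside $U$. Instead it leverages Maglione's theorem that the full group $U(d,p)$ admits an \emph{adjoint} characteristic filter of length $\Theta(d^2)$, and then reduces the random subgroup case to this. For dense random generators, Theorem~\ref{thm:dense-random} shows $U$ is almost surely a Sims subgroup of $U(d,p)$, so $[U,U]=[U(d,p),U(d,p)]$ and the adjoint refinement (which lives in layers below the commutator) is inherited by $U$. For sparse generators the paper applies a Morita condensation: picking one row and column per block yields an $\ell\times\ell$ matrix group $\overline{eUe}\leq U(\ell,p)$ which is now dense, hence a Sims subgroup of $U(\ell,p)$, hence has $\Omega(\ell^2)$ adjoint refinement; functoriality of the condensation on the commutation bimaps transports this refinement back to $U$.

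The concrete gap in your plan is the claim that the flag filter $\phi_{k\cdot e_1}$ is characteristic ``because the flag is canonically associated to $U$.'' The generalized eigen $1$-space flag is an invariant of the pair $(U,\F_p^d)$, i.e.\ of the given matrix representation, not of the abstract group $U$. An abstract automorphism of $U$ need not arise from $\GL(d,p)$-conjugation and so need not respect that flag; for a proper random subgroup the flag filter typically does not coincide with the lower central series or any other a priori characteristic series. Without characteristicity at the base, nothing built on top (your $N_{i,j}$, which in any case you leave undefined) is known to be $\Aut(U)$-invariant, and the whole refinement count is moot. Your genericity step is also only gestured at: the first-moment argument over $O(\ell^2)$ pairs would need an actual bound on each coincidence probability, and for sparse generators this is exactly where the paper found it necessary to pass through the condensation rather than argue directly.
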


Primarily we want to appeal to the following. Note that in this case 
we do not apply the Weisfeiler--Leman procedure developed in this paper; instead, 
it will be used 
in the next setting. 

\begin{thm}[Maglione \cite{Maglione:filters-U}]
The group $U(d,p)$ has an (adjoint) characteristic filter refinement of length $\Theta(d^2)$.
\end{thm}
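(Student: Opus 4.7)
The plan is to exhibit a characteristic filter on $U := U(d,p)$ with $\Theta(d^2)$ distinct terms. Since any filter has length at most $\log_p |U| = \binom{d}{2}$, the upper bound is automatic; the substance is the matching lower bound.

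The starting point is the lower central series $U = \gamma_1 > \gamma_2 > \cdots > \gamma_{d-1} > 1$, a characteristic filter of length $d-1$ whose $k$-th layer $L_k := \gamma_k/\gamma_{k+1}$ is elementary abelian of dimension $d-k$, with basis indexed by the superdiagonal positions $\{(i, i+k) : 1 \leq i \leq d-k\}$. The associated graded Lie ring $L := \bigoplus_k L_k$ is canonically isomorphic to the strict upper triangular Lie algebra $\mathfrak{n}(d, \F_p)$ with bracket $[E_{i,j}, E_{j,k}] = E_{i,k}$. To refine within each layer, I would exploit the iterated adjoint action of $L$ on itself: for each $a$, the $a$-fold left-annihilator $\{x \in L_k : [x, L_1, \ldots, L_1] = 0\}$ (where $L_1$ appears $a$ times) and its right-annihilator counterpart are canonically defined from the graded bimap structure, hence invariant under $\Aut(U)$ up to the outer graph automorphism which swaps them. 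Their joint intersection cuts each $L_k$ into $\Theta(d-k)$ characteristic one-dimensional (or graph-automorphism orbit) pieces; pulling back along $\gamma_k \twoheadrightarrow L_k$ gives characteristic subgroups of $U$, and summing over $k$ yields a filter of length $\sum_{k=1}^{d-1}(d-k) = \binom{d}{2} = \Theta(d^2)$.

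The main obstacle is verifying $\Aut(U)$-invariance in the presence of the graph (transpose-inverse) automorphism, which for $d \geq 3$ is an outer automorphism of $U(d, p)$; together with inner, diagonal, and field automorphisms it generates essentially all of $\Aut(U)$, by the classical description due to Pavlov, Weir, and others. The graph automorphism preserves each LCS layer $L_k$ but flips the coordinate basis $E_{i, i+k} \mapsto E_{d-k-i+1, d-i+1}$, swapping the left- and right-annihilator flags. The remedy is to work with symmetric combinations: intersections and orbit-unions of the two flags. This at worst halves the refinement length within each layer but preserves the $\Theta(d^2)$ total. A secondary nuisance is the action of the diagonal torus, which rescales each basis vector $E_{i, i+k}$ by an independent scalar; this does not move subgroups, only their internal generators, so it does not obstruct the filter construction.
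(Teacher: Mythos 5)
The paper does not prove this statement---it is cited from Maglione's work---so there is no in-paper argument to compare against; judged on its own, your proposal has a genuine gap at its central step. The within-layer flags you propose are built from iterated annihilators $\{x\in L_k : [x,L_1,\ldots,L_1]=0\}$ of the \emph{graded Lie bracket}, but these annihilators are trivial: writing $x=\sum_i c_i E_{i,i+k}$, one computes $[x,E_{j,j+1}]=c_{j-k}E_{j-k,j+1}-c_{j+1}E_{j,j+k+1}$, and since the two terms are distinct basis vectors of $L_{k+1}$, the single condition $[x,L_1]=0$ already forces every $c_i=0$ whenever $k<d-1$. Because the bracket is antisymmetric, there is also no left/right asymmetry to exploit. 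The flags you actually want, $\langle E_{i,i+k}: i\le a\rangle$ and $\langle E_{i,i+k}: i\ge d-k-a+1\rangle$, are kernels of left and right multiplication by powers of $\mathfrak{n}_1$ in the \emph{associative} algebra of strictly upper triangular matrices---and that associative structure is not part of the data of the group. Recovering it canonically is exactly what the word ``adjoint'' in the statement refers to: one must compute the adjoint ring $\Adj$ of the commutation bimaps $L_s\times L_t\bmto L_{s+t}$, show that for $U(d,p)$ it contains (enough of) the triangular matrix algebra, and then take the flags of its canonical left and right ideals. That computation is the substance of Maglione's theorem and is entirely absent from your proposal.

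A secondary problem is your appeal to a generation theorem for $\Aut(U(d,p))$ by inner, diagonal, graph, and field automorphisms. Over the prime field there are no field automorphisms, and the actual generation statement (Gibbs) also requires central and extremal automorphisms, so your invariance check is incomplete as written. More importantly, this route is unnecessary: subgroups defined functorially from the commutation bimaps (via $\Adj$ and its ideals) are characteristic by construction, which is precisely the point of the adjoint-filter machinery and avoids any dependence on a classification of $\Aut(U(d,p))$. Your upper bound $\binom{d}{2}$ and the count $\sum_{k}\Theta(d-k)=\Theta(d^2)$ are fine once correct, genuinely characteristic flags are in hand.
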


However we do not have the group $U(d,p)$. Instead, we have a subgroup sampled at 
random
either with dense or sparse matrices.  First we dispense with the dense case.

\begin{cor}\label{cor:Malgione-chain}
A subgroup $H\leq U(d,p)$ generated by dense matrices $u_1,\ldots,u_{\ell}$ with 
$\ell\geq 2\sqrt{d}$ has on average a characteristic filter refinement of length $\Theta(d^2)$.
\end{cor}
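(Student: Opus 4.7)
The plan is to combine Theorem~\ref{thm:dense-random} with Maglione's theorem on $U = U(d,p)$. By Theorem~\ref{thm:dense-random}, for $\ell \geq 2\sqrt{d}$ we have $H \supseteq \gamma_2(U)$ with probability tending to $1$; moreover, the proof of that theorem establishes the stronger Sims condition $\gamma_2(H)\gamma_3(U) = \gamma_2(U)$. Combined with $H \supseteq \gamma_2(U)$, this upgrades to $\gamma_i(H) = \gamma_i(U)$ for all $i \geq 2$ by a short nilpotency induction: for $a \in \gamma_2(U) \subseteq H$ and any $b = b'z \in U$ with $b' \in H$ and $z \in \gamma_2(U)$, one has $[a,b] \equiv [a,b'] \pmod{\gamma_4(U)}$, giving $\gamma_3(U) \subseteq \gamma_2(H)\gamma_4(U)$; iterating pushes $\gamma_2(H)\gamma_k(U) = \gamma_2(U)$ for every $k$, and at $k = c+1$ (the nilpotency class) this collapses to $\gamma_2(H) = \gamma_2(U)$. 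The same argument then gives $\gamma_i(H) = \gamma_i(U)$ for all $i \geq 2$, so $\gamma_2(H) = \gamma_2(U)$ is a characteristic subgroup of $H$.

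Next, I would apply Maglione's theorem. Maglione's construction refines the lower central series of $U$ via adjoint-algebra invariants on each abelian layer $\gamma_i(U)/\gamma_{i+1}(U)$, producing a characteristic filter of length $\Theta(d^2)$. Applied intrinsically to the abstract group $\gamma_2(H) \cong \gamma_2(U)$, the same construction yields a characteristic filter of $\gamma_2(H)$ of length $\Theta(d^2)$. Since $\gamma_2(H)$ is characteristic in $H$, every $\alpha \in \Aut(H)$ restricts to an automorphism of $\gamma_2(H)$ and therefore preserves this intrinsic filter, so all of its terms are characteristic in $H$. Together with the one-term extension $H \supseteq \gamma_2(H)$, this gives a characteristic filter of $H$ of length $\Theta(d^2)$, which is optimal up to constants since $\log_p |H| = O(d^2)$.

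The hard part is justifying that Maglione's construction, originally tailored to the full unitriangular group $U$, yields a filter of length $\Theta(d^2)$ when applied to $\gamma_2(U)$ viewed as a standalone group. The subgroup $\gamma_2(U)$ has the same banded, nearly-unitriangular commutator structure as $U$, just shifted down one super-diagonal, so the adjoint-algebra dimension estimates in Maglione's proof should apply essentially verbatim; one only needs to trace through his argument to confirm that no step uses features specific to $U$ beyond what is already present in $\gamma_2(U)$. A cleaner alternative route is to observe that all but $O(d)$ terms of Maglione's original filter on $U$ already lie strictly below $\gamma_2(U)$, and each such term is defined by invariants intrinsic to the commutator structure of $\gamma_2(U)$ rather than invariants that reference $U$ directly; hence each such term is fixed by any automorphism of $\gamma_2(H) = \gamma_2(U)$, in particular by the restriction of any automorphism of $H$, and is thus characteristic in $H$.
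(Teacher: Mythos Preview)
Your overall strategy mirrors the paper's: establish $\gamma_2(H)=\gamma_2(U)$ from the Sims property, then invoke Maglione's refinement (your ``cleaner alternative route'' is essentially the paper's scholium that the adjoint refinement of $U(d,p)$ lives in terms below $\gamma_2$, hence transfers to $H$). However, your nilpotency induction has a genuine gap. You write ``any $b=b'z\in U$ with $b'\in H$ and $z\in\gamma_2(U)$'', which presumes $U=H\gamma_2(U)$. But $U/\gamma_2(U)\cong\F_p^{\,d-1}$, while $H$ is generated by only $\ell\approx 2\sqrt{d}$ elements, so the image of $H$ in $U/\gamma_2(U)$ has dimension at most $\ell\ll d-1$. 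For large $d$ we therefore have $H\gamma_2(U)\neq U$, and most $b\in U$ admit no such decomposition; the step ``$\gamma_3(U)\subseteq\gamma_2(H)\gamma_4(U)$'' does not follow.

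The conclusion $\gamma_2(H)=\gamma_2(U)$ is nonetheless correct; the right route is through the associated graded Lie algebra $L(U)=\bigoplus_k \gamma_k(U)/\gamma_{k+1}(U)$. With $X\subseteq L_1$ the image of $H$, the Sims condition gives $[X,X]=L_2$, and then Jacobi yields $L_3=[L_1,L_2]=[L_1,[X,X]]\subseteq [X,L_2]$; inductively $L_k$ lies in the Lie subalgebra generated by $X$ for every $k\geq 2$. This says $\gamma_k(H)\gamma_{k+1}(U)=\gamma_k(U)$ for all $k\geq 2$, and now a genuine downward induction through the nilpotency class gives $\gamma_k(H)=\gamma_k(U)$. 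The paper simply asserts this step (``therefore $[H,H]=[U(d,p),U(d,p)]$''); your attempt to justify it is the right instinct, but the mechanism you propose does not work as written.
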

\begin{proof}
By Theorem~\ref{thm:dense-random}, $H$ is almost certainly a Sims subgroups of
$U(d,p)$ and therefore $[H,H]=[U(d,p),U(d,p)]$.  As a scholium to Maglione's
theorem we observe that the adjoint filter refinement of $U(d,p)$ can be defined
as refinement through terms $L_s$ for $s<(2,0,\ldots,0)$ in the filter.  As a
result these same terms appear in the filter of $H$ and so $H$ refines to a length
of $\Omega(d^2-d)=\Omega(d^2)$.  Since $\log_p |H|\in O(d^2)$ the result follows.
\end{proof}

Next we need to consider the sparse case as this is where our model presides.  What
we do is demonstrate a form of \emph{Morita condensation theory}
that transports our
sparse problem into a dense problem~\cite{Wilson:Skolem-Noether}.
What we observe is that each right-hand edge $j$ of block on the block diagonal of $U$ is 
defined by the presence of an element $u\in U$ with a non-zero value $u_{ij}$, 
otherwise the block would be wider.  We select one such row $i_s$ for each block 
$s$,
and one such column $j_s$.  Thus out of the $(\ell\times \ell)$-block matrix $u\in U$, 
we create an $(\ell\times \ell)$-matrix  by copying the entire in $u_{i_s j_s}$.
For example
\begin{align*}
	\left[
	\begin{array}{cc|c|ccc}
	1 & 0 & a_{13} & a_{14} & a_{15} & a_{16}\\
	0  & 1 & a_{23} & a_{24} & a_{25} & a_{26}\\
	\hline
	  &   & 1      & a_{34} & a_{35} & a_{36}\\
	 \hline
	  &   &        & 1      & 0 & 0\\
	  &   &        & 0     & 1 & 0\\
	  &   &        & 0      & 0 & 1\\	  
	\end{array}\right]
	\mapsto 
	\begin{bmatrix}
	1 & a_{13} & a_{15}\\
	& 1 & a_{35} \\
	&  & 1
	\end{bmatrix}.
\end{align*}

This may seem a bit unnatural but in fact it is applying a functorial property
not on the level of groups but on the level of the enveloping algebra of the matrices
and more importantly on the level of bilinear maps.
While this function has no relationship in the context of groups, it is by 
considering the
associated ring context that we see that we have simply performed a condensation 
of modules, that is we have changed to an equivalent category.
So for each of the blocks $B_1,\ldots B_d$ we
let $e_s$ be the $(d_s\times d_s)$ matrix with zero in every position except $jj$.
Set $e=e_1\oplus\cdots \oplus e_{\ell}$.  Then $eue$ is matrix with at most $\ell\times \ell$
nonzero entries.  Removing the all zero rows and columns produces an $(\ell\times \ell)$-matrix.
In the example above,
\begin{align*}
	e & =
	\left[
	\begin{array}{cc|c|ccc}
	1 & 0 &  &  &  & \\
	0  & 0 &  &  &  & \\
	\hline
	  &   & 1      &  &  & \\
	 \hline
	  &   &        & 0      & 0 & 0\\
	  &   &        & 0     & 1 & 0\\
	  &   &        & 0      & 0 & 0\\	  
	\end{array}\right]	
\end{align*}
In particular this induces a functorial Morita condensation of each bilinear map 
$L_s\times L_t\bmto L_{s+t}$, see \cite{Wilson:Skolem-Noether}.  We therefore denote
this group $\overline{eUe}$ to remind us of the natural process to create this 
smaller matrix group.

Having applied this transform, notice $\overline{eUe}$ is now a dense subgroup of $U(\ell,p)$.
Therefore we arrive at the following.

\begin{proof}[Proof of Theorem~\ref{thm:longchain-refine}]
Suppose $U\leq U(d,p)$ generate by random matrices $u_1,\ldots,u_t$.  If
the $u_i$ are dense then by Corollary~\ref{cor:Malgione-chain} there is a computable
filter refinement of length $O(\ell^2)$ where $\ell$ is the number of blocks of $U$.
If on the other hand the $u_i$ are sparse, then $\overline{eUe}$ has a refinement 
of 
length $\Omega(\ell^2)$.  As the map $U\mapsto \overline{eUe}$ is functorial in the 
bilinear maps used to select refinement, it follows that $U$ also has a refinement
of length $\Omega(\ell^2)$.
\end{proof}

\paragraph{Refinements for few blocks.}
The last case to concern us is when $U$ has a bounded number of blocks on the diagonal, 
but that the number of blocks is at least 3. (Otherwise the group $U$ is abelian which
is the first case of Theorem~\ref{thm:random-refine}.)    Because the number
of blocks is bounded at least one block has dimension proportional to $d$ as $d\to \infty$.

Let us consider coloring with $g=1$. This means that with a selected layer
$L_s\times L_t\bmto L_{s+t}$ we consider labels on $1$-dimensional subspaces
$\langle x\rangle\leq L_s$ by labeling the restriction $\langle x\rangle\times L_t\bmto L_{s+t}$.
One observes this structure is nothing more than a linear transformation $L_t\to L_{s+t}$
and is thus defined up to change of basis solely by the rank of the transformation.
Therefore to each element of ${\rm PG}_0(L_{s})$ we record the rank of the 
associated matrix.
We do likewise with ${\rm PG}(L_t)$.  Finally we label the edges between ${\rm 
PG}(L_{s})$ and
${\rm PG}(L_t)$ by whether or not the pair of points commutes.  

In order to model this behavior in colors we make the following 
observation.  Treating
$x=(x_1:\cdots:x_{d})$ as homogeneous point in $d$ variables, the evaluation
$[x,-]:L_t\to L_{s+t}$ produces a matrix $M(x)$ with entries in $\F[x_1,\ldots,x_d]$.
The rank of this matrix changes as we evaluate $x$ but certainly there are two natural
states: either $M(x)$ has rank at most $r$ or it does not.  If $M(x)$ has rank
at most $r$ then all $(r\times r)$-minors must vanish, and this produces a polynomial
number degree $r$-polynomials that must all vanish on $x$.  That is to say, the 
condition of the rank of $M(x)$ is a variety (or more generally a scheme).  It is
in fact a \emph{determinantal variety} and the subject of considerable study in the
algebraic community as well as the computer science community 
\citelist{\cite{Harris92}\cite{Minrank}}.  It is important to observe 
that many results in the field are only known over algebraically closed fields.
However it is known that these varieties are reduced and irreducible \cite{Harris92}.
Therefore to count points we can use Lang--Weil theorem \cite{LW54}, but that 
requires
that we allow for a large field.  So this portion of our estimate assume $b\to\infty$
and $d\to\infty$. 

Let us assume for now that $M(x)$ has points, i.e. that for some $x\in {\rm 
PG}_0(L_s)$,
$[x,-]$ does not have full rank, and for other points it does.  Thus our vertex set
has (at least) 2 colors, say white if $[x,-]$ has full rank and black otherwise.  We do the same 
for ${\rm PG}(L_t)$. 
Recall that we are including a hyperedge $(x,y,[x,y])$ only if $[x,y]\neq 0$.

Now consider the situation. The number of black points is in general a solution to
a system of random nonlinear homogeneous polynomials of degree $r$.  That this is nonlinear
means we can expect that the number of black points is not a subspace.  Now the points
in ${\rm PG}(L_t)$ not connected to black points $x$ are the points $y\in 
x^{\bot}:=\ker[x,-]$.
In particular we have a nonlinear set parameterizing a subspace arrangement within 
${\rm PG}(L_t)$.
If we write the generator matrix of each subspace $\ker[x,-]$ it will be the dual
of a linear combination of the matrices used to define $[,]$, which we sampled at random.
Therefore we have a random subspace arrangement.  In general this incidence relation
is not equitable, so proper refinements will be discovered in the WL-refinement 
process.

With that we have proved the following.

\begin{thm}\label{thm:shortchain-refine}
If $U\leq U(d,p)$ has a bounded number of blocks and $d,p$ are large, then 
there exists a proper refinement of the standard filter.
\end{thm}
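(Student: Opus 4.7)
}

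The plan is to show that when $U$ has a bounded number $b \geq 3$ of blocks, some layer $L_s \times L_t \bmto L_{s+t}$ of the initial filter produces a genus-$1$ coloring on $\mathrm{PG}_0(L_s)$ whose color classes are strictly smaller than $L_s$ and not all equal in size. This yields characteristic subsets from which Theorem~\ref{thm:refine} extracts a proper refinement. Throughout I work with $g=1$ and treat $d,p \to \infty$.

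First I would pick a layer: since $U$ has at least three blocks, the commutator bimap on the group is nontrivial after one step, so there exist indices $s,t$ such that the layer commutator $[\,,\,]\colon L_s \times L_t \bmto L_{s+t}$ is neither zero nor full (a degenerate check rules out the case that this map makes everything $0$, which would collapse to the abelian case already handled). Because one block has dimension $\Theta(d)$, the layers $L_s, L_t, L_{s+t}$ each have dimension tending to infinity, while the bimap is given by random matrices drawn from the sparse model.

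Next I would define the initial genus-$1$ vertex colors exactly as in Section~\ref{subsec:ind-color}: for $x \in \mathrm{PG}_0(L_s)$, colour $x$ by the isotopy type (equivalently, the rank $r(x)$) of the linear map $[x,-]\colon L_t \to L_{s+t}$, and similarly for $y \in \mathrm{PG}_0(L_t)$. Writing $x=(x_1:\cdots:x_{\dim L_s})$ as homogeneous coordinates, the condition $\mathrm{rank}\,[x,-] \leq r$ is cut out by the simultaneous vanishing of the $(r{+}1)\times(r{+}1)$ minors of a matrix $M(x)$ whose entries are linear forms in the $x_i$. This defines a determinantal variety $V_r \subseteq \mathrm{PG}_0(L_s)$, which is known to be reduced and irreducible~\cite{Harris92}. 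By the Lang--Weil estimate~\cite{LW54}, for large $p$ both $V_r(\F_p)$ and its complement contain $\Theta(p^{\dim V_r})$ and $\Theta(p^{\dim L_s -1})$ points respectively, provided the generic rank and $r$ are chosen so that $V_r$ is proper. A random sparse sample of the defining matrices makes the generic rank maximal with high probability, so this is the situation.

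The main obstacle is to show that this coloring is genuinely used by WL refinement, i.\,e.\ that it produces a proper characteristic subgroup via Section~\ref{subsec:char}, rather than a color class whose span is all of $L_s$. For this I would analyze the bipartite incidence: a ``white'' point $x \in \mathrm{PG}_0(L_s)$ (full rank) is joined by a non-commuting $3$-hyperedge to every $y \in \mathrm{PG}_0(L_t) \setminus \ker[x,-]$, while a ``black'' point $x \in V_r(\F_p)$ connects to every $y$ outside a \emph{larger} kernel $\ker[x,-]$. Consequently, the number of commuting neighbours in $\mathrm{PG}_0(L_t)$ of a vertex in $\mathrm{PG}_0(L_s)$ depends on $\dim \ker[x,-]$, which takes at least two distinct values by the rank stratification. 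This shows the initial coloring is \emph{not} equitable with respect to the commutation edge relation, so one round of (hyper-)WL strictly refines it. The refined classes are proper subsets of $\mathrm{PG}_0(L_s)$ whose $\F_p$-spans are $\Aut(G)$-invariant subspaces of $L_s$; I would then verify that at least one such span is a proper subspace — this is true because the set of $x$ with $\mathrm{rank}\,[x,-]$ exactly maximal is not closed under addition (the rank function drops on a Zariski-closed set and is subadditive), so the class of full-rank points cannot be a subspace, and a class contained in a proper determinantal variety automatically spans a proper subspace. Either way we obtain a characteristic subgroup strictly between $\partial\phi_{s}$ and $\phi_s$, to which Theorem~\ref{thm:refine} applies to yield the desired proper refinement.
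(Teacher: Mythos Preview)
Your approach closely mirrors the paper's: colour $\mathrm{PG}_0(L_s)$ by the rank of $[x,-]\colon L_t\to L_{s+t}$, identify the rank strata with determinantal varieties, invoke Lang--Weil for large $p$ to ensure both colours occur, and then argue that the commutation incidence between $\mathrm{PG}_0(L_s)$ and $\mathrm{PG}_0(L_t)$ is not equitable, so WL strictly refines. The paper's argument is essentially the same through this point, and is itself avowedly heuristic.

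However, your final step contains a concrete error. You claim that ``a class contained in a proper determinantal variety automatically spans a proper subspace.'' This is false: already in $\Mat(2,\F)$ the rank-$1$ locus contains all four elementary matrices $E_{ij}$ and hence spans the whole space; determinantal varieties routinely span the ambient space. So your argument does not produce a proper characteristic subgroup of $L_s$ to feed into Theorem~\ref{thm:refine}. The paper does not fill this gap either --- it simply asserts that ``proper refinements will be discovered'' by appealing to the randomness of the subspace arrangement $\{\ker[x,-]\}$ inside $L_t$, and explicitly flags the relevant counting problem as open. So you match the paper in spirit, but where the paper leaves a heuristic assertion you have substituted a specific false statement. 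To repair it, drop that claim and instead argue (as the paper gestures toward) that for a random arrangement of kernels some refined colour class on the $L_t$ side has support spanning a proper subspace; this is plausible but not proved in the paper.
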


To remove the assumption that $d, p$ are large here, the following 
interesting question needs to be addressed. 
\begin{quote}
Let $A_1, \dots, A_m$ be random $n\times n$ matrices over $\F_q$. What is the 
typical number of non-full-rank matrices in the linear span of $A_i$'s?
\end{quote}

\subsection{Proof of Theorem~\ref{thm:random-refine}}
Let us suppose $G$ is sampled according to our model.  Let $U$ be the intersection of $G$ 
with $U(d,p)$ and begin with the initial filter of our introduction.  Then if $U$ is abelian
we are in case (i).  Otherwise $U$ has at least $3$ blocks so we can use either
Theorem~\ref{thm:longchain-refine} for the case of large blocks, or Theorem~\ref{thm:shortchain-refine}
in the case of small blocks.  In either case we obtain a proper 
refinement.  Note that 
after refinement of the bounded number of blocks several times we cross over to the large
number of blocks and so the result follows.
\qed

%%%%%
\section{Testing pseudo-isometry of alternating bilinear maps}
\label{sec:LiQiao}
% !TEX root = WL.tex

\subsection{The simplified main algorithm}
\label{subsec:average-case-algo}
In this subsection we formally describe the simplified main algorithm presented in Section~\ref{subsec: 
average-case in intro}, that is Algorithm~\ref{algo:first-average}. We also 
discuss some important adjustments used in the implementation. 
We need the following observation, which follows easily by computing the 
closure of the given generating set.
\begin{obs}\label{obs:group_order}
Let $C_1, \dots, C_t\in \GL(n, q)$, and let $G$ be the group generated by $C_i$'s. 
Let $s\in\N$. Then there exists an algorithm that 
either reports that $|G|>s$, or lists all elements in $G$,  in time $\poly(s, n, 
\log q)$.
\end{obs}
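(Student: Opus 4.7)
The plan is to use a standard breadth-first closure (orbit-style) enumeration of $G$, halting as soon as the running tally of distinct group elements exceeds $s$. Concretely, maintain a set $S$ of elements already seen and a queue $Q$ of elements still to expand. Initialize both to $\{I_n\}$. While $Q$ is non-empty, dequeue an element $g$, and for each $i \in [t]$ compute the product $g \cdot C_i$; if the result is not already in $S$, add it to both $S$ and $Q$. If at any point $|S| > s$, halt and report $|G| > s$; otherwise, when $Q$ empties, return $S$.

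For correctness, observe that at termination $S$ contains $I_n$ and is closed under right multiplication by each $C_i$, hence closed under right multiplication by the semigroup (and thus the group, since $G$ is finite) $\langle C_1,\dots,C_t\rangle = G$, so $S \supseteq G$; conversely every element enqueued is by construction a product of the $C_i$'s, so $S \subseteq G$. Thus if the queue empties we recover $S = G$ exactly, which is only possible when $|G| \leq s$; and if instead the size threshold is triggered, the $s+1$ elements stored in $S$ are distinct elements of $G$, witnessing $|G| > s$.

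For the running time, store elements of $S$ as $n \times n$ matrices over $\F_q$, each of bit-length $O(n^2 \log q)$, in a balanced search tree keyed on this encoding; then membership queries and insertions cost $O(n^2 \log q \cdot \log |S|)$. Each iteration of the outer loop performs $t$ matrix multiplications at cost $O(n^3 \cdot \poly(\log q))$ each, plus $t$ membership operations. Since we halt as soon as $|S|$ exceeds $s$, the number of iterations is at most $s+1$, yielding a total cost of $O\bigl((s+1)\cdot t \cdot (n^3 \poly(\log q) + n^2 \log q \cdot \log s)\bigr)$. The parameter $t$ is bounded by the size of the input, which is itself $\poly(n,\log q, t)$, and if desired we may also assume $t \leq s$ (otherwise we pass the threshold essentially immediately). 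The total cost is therefore $\poly(s, n, \log q)$, as claimed.

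There is no substantive obstacle here: the argument is a routine orbit-enumeration bounded by the cutoff $s$. The only items meriting care are (i) choosing a data structure for $S$ that gives $\poly(n,\log q,\log s)$-time membership testing so that the per-step overhead is polynomial, and (ii) checking the cutoff before, rather than after, the running time has a chance to blow up, so that $|S|$ is kept bounded by $s+1$ throughout the computation.
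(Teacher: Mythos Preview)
Your proposal is correct and is precisely the approach the paper indicates: the paper's entire justification is the single remark that the observation ``follows easily by computing the closure of the given generating set,'' and your breadth-first enumeration with a size-$s$ cutoff is exactly this closure computation written out in detail. Your treatment of the dependence on $t$ (absorbing it into the input size, or noting that more than $s$ distinct generators already certify $|G|>s$) is a reasonable reading of the stated $\poly(s,n,\log q)$ bound.
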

\begin{algorithm}
\begin{description}
\item[{\it Input:}] $\bG=(G_1, \dots, G_m)\in\Lambda(n,q)^m$, $\bH=(H_1, \dots, H_m)\in\Lambda(n,q)^m$, $c, s\in \N$, 
and $q$ is odd.  
%\%~Denote by $\cH=\langle  H_1, \dots, H_m\rangle$ the linear span of matrices in $\bH$.

\item[{\it Output:}] Either (1) $|\Aut(\bA)|>s$, where $\bA=(G_1, \dots, G_c)$, or (2) $\pseudo(\bG, \bH)$.
\item[{\it Algorithm procedure:}]
{\ }
\begin{enumerate}
\item Set $L\gets \{\}$. Set $\bA=(G_1, \dots, G_c)$, the first $c$ matrices from $\bG$. 
\item Use Theorem~\ref{thm:autometry-algorithms} (2) to compute a generating set 
for 
$\Aut(\bA)$. 
\item Use Observation~\ref{obs:group_order} with input $s$ and the generating set 
of $\Aut(\bA)$.

(If $|\Aut(\bA)|>s$, we terminate the algorithm and report that ``$|\Aut(\bA)|>s$.'')

%Otherwise,
\item Put $\cH=\langle  \bH\rangle$, the linear span of $\bH$;
for every $\bB=(B_1, \dots, B_c)\in \cH^c$, do the following. 
\begin{enumerate}
\item[a.] Use Theorem~\ref{thm:isometry-algorithms} (2) to decide whether 
$\bA$ 
and 
$\bB$ are isometric. 
\item[b.] If not, go to the next $\bB$. Otherwise, we get the non-empty coset 
$\Isom(\bA, \bB)$.
\item[c.] For every $T\in \Isom(\bA, \bB)$, do the following.
\begin{enumerate}
\item[\ ] Test whether the linear spans of $T^t\bG T$ and $\bH$ are the same. If not, 
go to the next $T$. If so, add $T$ into $L$.
\end{enumerate}
\end{enumerate}
\item Output $L$. 
\end{enumerate}
\end{description}
\caption{The first average-case algorithm for alternating space isometry.}
\label{algo:first-average}
\end{algorithm}

Let us first examine the running time of Algorithm~\ref{algo:first-average}.
\begin{prop}\label{prop:first-average-time}
Algorithm~\ref{algo:first-average} runs in time 
$\poly(q^{cm}, s, n)$. 
\end{prop}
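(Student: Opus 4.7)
The plan is to traverse Algorithm~\ref{algo:first-average} step by step and bound the cost of each, then combine. Most of the bounds are immediate from results already cited in the excerpt, so the proof is essentially bookkeeping; the one place where a small observation is needed is in bounding the size of the search space $\cH^c$ and the number of isometries $|\Isom(\bA,\bB)|$ by $s$.

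First I would dispense with the setup. Step~1 is free. Step~2 applies Theorem~\ref{thm:autometry-algorithms}(2) to the $c$-tuple $\bA$ of $n\times n$ alternating matrices over $\F_q$, which produces a generating set for $\Aut(\bA)$ in time $\poly(n,c,\log q)$. Step~3 then invokes Observation~\ref{obs:group_order} on this generating set with the bound $s$; this either certifies $|\Aut(\bA)|>s$ (in which case the algorithm terminates within the claimed budget) or produces an explicit list of the at most $s$ elements of $\Aut(\bA)$, in time $\poly(s,n,\log q)$.

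For Step~4, the key observation is that $\cH=\langle \bH\rangle$ is a subspace of $\Lambda(n,q)$ of dimension at most $m$, so $|\cH|\le q^m$ and therefore the outer enumeration over $\bB\in\cH^c$ contributes a multiplicative factor of at most $q^{cm}$. Inside the loop, the isometry test of Step~4(a) costs $\poly(n,c,\log q)$ by Theorem~\ref{thm:isometry-algorithms}(2). Whenever an isometry exists, the coset $\Isom(\bA,\bB)$ is a translate of $\Aut(\bA)$ and hence has cardinality at most $s$; moreover, it has already been written down explicitly as the translate of the list produced in Step~3 by any single isometry, so enumerating its elements costs $|\Isom(\bA,\bB)|\cdot\poly(n,\log q)\le s\cdot\poly(n,\log q)$. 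For each such $T$, the test in Step~4(c) that the linear span of $T^t\bG T$ equals $\langle\bH\rangle$ reduces to standard linear algebra on $m$-tuples of $n\times n$ matrices over $\F_q$ and runs in $\poly(n,m,\log q)$.

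Multiplying these contributions, the total cost of Step~4 is bounded by
\[
q^{cm}\cdot\bigl(\poly(n,c,\log q)+s\cdot\poly(n,m,\log q)\bigr),
\]
and adding the Steps~2 and~3 costs yields a global bound of $\poly(q^{cm},s,n)$, as claimed. The only mildly non-routine point is the cardinality bound $|\Isom(\bA,\bB)|=|\Aut(\bA)|\le s$, guaranteed by the fact that we only reach Step~4(c) when Step~3 did not terminate early, so I expect no serious obstacle in writing this out in full.
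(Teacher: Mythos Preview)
Your proof is correct and follows essentially the same approach as the paper: you split into the early-termination case and the main case, bound the outer loop by $q^{cm}$ via $|\cH|\le q^m$, and bound the inner loop by $s$ via $|\Isom(\bA,\bB)|=|\Aut(\bA)|\le s$. You supply a bit more detail than the paper (e.g., explicitly noting how to enumerate the coset as a translate of the listed $\Aut(\bA)$ and that the span check is linear algebra), but the structure and key observations are the same.
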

\begin{proof}
If Algorithm~\ref{algo:first-average} outputs $|\Aut(\bA)|>s$, then its running 
time is determined by Theorem~\ref{thm:autometry-algorithms} (2) and 
Observation~\ref{obs:group_order}, which together require $\poly(s, n, \log q)$.

If $|\Aut(\bA)|\leq s$, we analyze the two For-loops at Step 4 and Step 4.c, 
respectively. The first loop adds a multiplicative factor of 
$q^{cm}$, since enumerating a single element in $\cH$ costs $q^m$. The second loop 
adds a multiplicative factor of $s$, due to the fact that $|\Isom(\bA, 
\bB)|=|\Aut(\bA)|\leq s$, as $\Isom(\bA, \bB)$ is a coset of $\Aut(\bA)$. 
Other steps can be carried out in time $\poly(n, \log q)$. Therefore the overall running time is upper bounded by $\poly(q^{cm}, s, n)$. 
% Furthermore, the number of $T$'s to be added to $L$ is upper bounded by $q^{cm}\cdot s$.
\end{proof}

We then prove the correctness of 
Algorithm~\ref{algo:first-average}, in the case 
that it does not report $|\Aut(\bA)|>s$.
\begin{prop}\label{prop:first-average-correctness}
If Algorithm~\ref{algo:first-average} does not report $|\Aut(\bA)|>s$, then it lists the 
set of pseudo-isometries (possibly empty) between $\bG$ and 
$\bH$. In particular, $|\pseudo(\bG, \bH)|\leq q^{cm}\cdot s$.
\end{prop}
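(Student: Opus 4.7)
The plan is to establish soundness, completeness, and the counting bound separately, all hinging on the single observation that every pseudo-isometry must restrict to an isometry between $\bA$ and some tuple in $\cH^c$. Explicitly, if $T \in \pseudo(\bG,\bH)$, then there exists $T' \in \GL(m,q)$ with $T^t\bG T = \bH^{T'}$; reading off the first $c$ coordinates gives
\[
T^t \bA T = (T^t G_1 T,\dots,T^t G_c T) = \left(\sum_i r_{1,i} H_i,\dots,\sum_i r_{c,i} H_i\right) \;=:\; \bB \in \cH^c.
\]
So $T \in \Isom(\bA,\bB)$ for this particular $\bB$, and both $\bB$ and (the coset) $\Isom(\bA,\bB)$ are visited by the nested loops in Steps~4 and~4.c.

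For soundness, every element $T$ inserted into $L$ at Step~4.c has been explicitly verified at that step to satisfy that the linear spans of $T^t \bG T$ and $\bH$ agree, which is exactly the defining condition for $T \in \pseudo(\bG,\bH)$. For completeness, suppose $T \in \pseudo(\bG,\bH)$. By the observation, $T \in \Isom(\bA,\bB)$ for some $\bB \in \cH^c$. When the outer loop reaches this $\bB$, Step~4.a uses Theorem~\ref{thm:isometry-algorithms}(2) to detect that $\Isom(\bA,\bB) \neq \emptyset$ and to produce the full coset $\Isom(\bA,\bB) = T_0 \Aut(\bA)$ (via the generators computed in Step~2). Thus $T$ will be enumerated at Step~4.c and, being a genuine pseudo-isometry, will pass the span test and be added to $L$.

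For the counting bound, the observation yields
\[
\pseudo(\bG,\bH) \;\subseteq\; \bigsqcup_{\bB \in \cH^c} \Isom(\bA,\bB),
\]
and each nonempty $\Isom(\bA,\bB)$ is a coset of $\Aut(\bA)$, hence of size exactly $|\Aut(\bA)| \le s$ (this is precisely the hypothesis that the algorithm did not abort at Step~3). Since $|\cH^c| = q^{cm}$, summing gives $|\pseudo(\bG,\bH)| \le q^{cm}\cdot s$.

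The only subtle point—and the one I would double-check carefully—is that Theorem~\ref{thm:isometry-algorithms}(2) returns not just a single isometry but the entire coset $\Isom(\bA,\bB)$, which is what legitimizes iterating ``for every $T \in \Isom(\bA,\bB)$'' in Step~4.c. This is fine because Step~2 has already produced generators for $\Aut(\bA)$, and Observation~\ref{obs:group_order} lists its elements in time $\poly(s,n,\log q)$; composing each with any single witness isometry produces all of $\Isom(\bA,\bB)$. No characteristic-$2$ issue arises because we have assumed $q$ odd throughout.
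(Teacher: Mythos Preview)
Your proof is correct and follows essentially the same approach as the paper: soundness from the span test in Step~4.c, completeness from the observation that any pseudo-isometry $T$ sends $\bA$ to some $\bB\in\cH^c$ and hence lies in $\Isom(\bA,\bB)$. You add helpful detail the paper omits, namely the explicit derivation of the counting bound via $\pseudo(\bG,\bH)\subseteq\bigcup_{\bB}\Isom(\bA,\bB)$ and the justification that Step~4.c can enumerate the full coset because Steps~2--3 have already listed $\Aut(\bA)$; one tiny quibble is that $|\cH^c|\le q^{cm}$ rather than equality (the $H_i$ need not be linearly independent), but this only strengthens the bound.
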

\begin{proof}
By Step 4.c, every $T$ added to $L$ is a pseudo-isometry. We are left to show 
that $L$ contains all the 
pseudo-isometries. For this, take any pseudo-isometry $T$. Since the linear spans 
of 
$T^t\bG T$ and $\bH$ are the same, we know $T^t\bA T$ is equal to some 
$\bB\in\cH^c$. So when enumerating this $\bB$ in Step 4, $T$ will pass all the 
tests in the following, and then be added to $L$. This concludes the proof. 
\end{proof}

It remains to specify the choices of $c$ and $s$ in 
Algorithm~\ref{algo:first-average} in the average-case analysis. This is stated in 
the following, 
whose proof 
will be deferred to Section~\ref{subsec:average-case-analysis}.
\begin{prop}\label{prop:first-average-analysis}
Let $c:=20$. For all but at most 
$1/q^{\Omega(n)}$ fraction of $\bA=(G_1, \dots, 
G_c)\in\Lambda(n, q)^c$, we have $|\Aut(\bA)|\leq s:=q^{n}$.
\end{prop}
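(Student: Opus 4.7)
}
The plan is to bound $|\Aut(\bA)|$ by reducing to a linear-algebraic quantity and then running a union bound. Introduce the \emph{adjoint algebra}
\[
\Adj(\bA) \;=\; \bigl\{(X,Y)\in \Mat(n,q)^2 : X^t G_i = G_i Y\ \text{for all}\ i\in[c]\bigr\},
\]
an $\F_q$-linear subspace. Since $T\in \Aut(\bA)$ iff $(T,T^{-1})\in \Adj(\bA)$, the first-coordinate projection is injective on $\Aut(\bA)$, giving $|\Aut(\bA)| \leq q^{\dim \Adj(\bA)}$. It therefore suffices to show $\Pr[\dim \Adj(\bA) > n] \leq q^{-\Omega(n)}$.

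Suppose some $\F_q$-linear combination $G = \sum_{i} \alpha_i G_i \in \mathcal{S}:=\mathrm{span}(G_1,\dots,G_c)$ is invertible (a point I return to in the next paragraph). Then the equation $X^t G = GY$ recovers $X$ uniquely from $Y$ as $X = G^{-1} Y^t G$, and substituting into $X^t G_i = G_i Y$---using that each $G_i$ is alternating---reduces to $Y M_i = M_i Y$ with $M_i := G^{-1} G_i$. Hence $\dim \Adj(\bA) = \dim \Cent(\{M_i\}_i)$. By Jacobson density / Schur, whenever $\{M_i\}$ acts irreducibly on $\F_q^n$ the common centralizer is a division algebra $D$ over $\F_q$ whose dimension divides $n$, so in particular $\dim \Cent \leq n$, which is exactly the bound we need. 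It therefore suffices to rule out common nontrivial $M_i$-invariant subspaces. For a fixed $V \leq \F_q^n$ of dimension $d\in\{1,\ldots,n-1\}$, the condition ``$M_i V\subseteq V$ for all $i\in[c]\setminus\{\text{witness}\}$'' translates, after fixing the witness, into a codimension-$d(n-d)$ linear condition on each of the $c-1$ remaining $G_i\in\Lambda(n,q)$; since these are i.i.d. uniform, the joint event has probability at most $q^{-(c-1)d(n-d)}$. Union-bounding against the $\gbinom{n}{d}{q}\leq q^{d(n-d)}$ subspaces of each dimension and summing over $d$ yields
\[
\Pr\bigl[\exists\ V\text{ common invariant}\,\bigm|\,\text{witness exists}\bigr]
\;\leq\; n \cdot q^{-(c-2)(n-1)} \;=\; q^{-\Omega(n)}
\]
for $c=20$.

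The crux is thus the existence of an invertible witness in the pencil $\mathcal{S}$ with failure $q^{-\Omega(n)}$. Assume $n$ is even (for $n$ odd use a maximum-rank alternating witness and restrict to the $(n-1)$-dimensional nondegenerate block, preserving the bound $|\Aut(\bA)|\leq q^{n}$). The failure event is exactly the identical vanishing of the Pfaffian polynomial (of degree $n/2$) on $\mathcal{S}$; writing $\mathrm{Pf}(\sum_i \alpha_i G_i)$ as a polynomial in $\alpha$ of degree $n/2$, identical vanishing imposes many polynomial constraints on $\bA$, and a Schwartz--Zippel / Lang--Weil argument applied to a suitable choice of $n/2+1$ algebraically independent coefficients (for instance, the coefficients of the univariate specialization along a generic direction) gives the required bound. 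The main obstacle is precisely this pencil-based probability estimate: the naive ``some individual $G_i$ is invertible'' argument gives only $O(q^{-c})$, which is $n$-independent and hence inadequate for the required rate once $n > c$, so the proof must engage with the global geometry of the pencil rather than with the $G_i$ one at a time.
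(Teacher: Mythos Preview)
Your opening move---bounding $|\Aut(\bA)|$ by $|\Adj(\bA)|$ via the injection $T\mapsto(T^t,T^{-1})$---is exactly how the paper begins (this is literally the one-line proof of Proposition~\ref{prop:first-average-analysis}, deferring the content to the bound on $|\Adj(\bA)|$). From there, however, your route diverges from the paper's and develops two genuine gaps.

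\textbf{The main gap is the invertible witness.} You correctly identify this as the crux, but the Pfaffian/Lang--Weil sketch does not deliver the required $q^{-\Omega(n)}$ bound. Algebraic independence of the $n/2+1$ coefficients of $\mathrm{Pf}(G_1+tG_2)$ only guarantees that the map to $\mathbb{A}^{n/2+1}$ is dominant; it says nothing about the dimension of the \emph{special} fibre over $0$. (Compare $F(x,y)=(x,xy)$: the components are algebraically independent, yet $F^{-1}(0)=\{x=0\}$ has codimension $1$, not $2$.) So the ``codimension $\geq n/2+1$'' step is unjustified, and I do not see how to repair it without essentially re-proving the result by other means. Note too that for odd $n$ every alternating matrix is singular, so the patch you propose (restrict to an $(n-1)$-dimensional nondegenerate block of a maximum-rank witness) runs into the same probability obstacle one level down.

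\textbf{A secondary gap is the codimension count.} The condition $G_iV\subseteq G_1V$ on $G_i\in\Lambda(n,q)$ is \emph{not} of codimension $d(n-d)$ in general. Writing $V=\langle e_1,\dots,e_d\rangle$ and $W=G_1V$, the first $d$ columns of an alternating matrix carry only $\binom{d}{2}+d(n-d)$ free parameters, and when $W\cap V=0$ (the generic situation for small $d$) the codimension drops to $\tfrac{d(2n-3d-1)}{2}$, strictly smaller than $d(n-d)$ for $d\geq 2$. The discrepancy grows with $d$ and the required case analysis (depending on $\dim(W\cap V)$) is not addressed.

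The paper sidesteps both issues by using a different sufficient condition for $|\Adj(\bA)|\leq q^n$: \emph{stability}, meaning $\dim(\bA(U))>\dim(U)$ for every nonzero proper $U\leq\F_q^n$ (Proposition~\ref{prop:stable}, from~\cite{LQ}). This needs no invertible element in the span and is amenable to a direct union bound over subspaces $U$. The alternating structure is handled by an explicit column-combination trick (Lemma~\ref{lemma: random alter to random matrix}) that converts tuples of random alternating matrices into honest random rectangular matrices, after which standard rank estimates apply.
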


Combining 
Propositions~\ref{prop:first-average-time},~\ref{prop:first-average-correctness} 
and~\ref{prop:first-average-analysis}, we have the following theorem.
\begin{thm}
Let $m\geq 20$, and let $\F_q$ be a finite field of 
odd size. 
For all but at most $1/q^{\Omega(n)}$ fraction of $\bG=(G_1, \dots, 
G_m)\in\Lambda(n, q)^m$, Algorithm~\ref{algo:first-average} tests the isometry of 
$\bG$ with an arbitrary $\bH\in\Lambda(n, q)^m$ in time $q^{O(n+m)}$.
\end{thm}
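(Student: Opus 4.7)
The plan is to combine the three propositions already established in this section, with the specific parameter choices from Proposition~\ref{prop:first-average-analysis}. Specifically, I would fix $c := 20$ and $s := q^n$ in Algorithm~\ref{algo:first-average}, and then invoke the three propositions in sequence.

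First, I would apply Proposition~\ref{prop:first-average-analysis} to identify the ``good'' subset of inputs. This proposition directly asserts that, with these choices of $c$ and $s$, for all but a $1/q^{\Omega(n)}$ fraction of $\bG \in \Lambda(n,q)^m$, the prefix $\bA = (G_1, \ldots, G_c)$ satisfies $|\Aut(\bA)| \leq s = q^n$. Note that since $m \geq 20 = c$, the prefix is well-defined, and the marginal distribution of the first $c$ coordinates of a uniformly random $\bG \in \Lambda(n,q)^m$ is uniform on $\Lambda(n,q)^c$, so the ``bad'' fraction in $\Lambda(n,q)^m$ is bounded by the bad fraction in $\Lambda(n,q)^c$, namely $1/q^{\Omega(n)}$.

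Next, for any $\bG$ in the good subset, the termination check in Step~3 of Algorithm~\ref{algo:first-average} will not trigger (since $|\Aut(\bA)| \leq s$), so by Proposition~\ref{prop:first-average-correctness} the algorithm correctly outputs $\pseudo(\bG, \bH)$ for arbitrary $\bH \in \Lambda(n,q)^m$. Correctness for arbitrary $\bH$ is crucial: the average-case condition is only on $\bG$, while $\bH$ ranges over all alternating matrix tuples.

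Finally, I would bound the running time using Proposition~\ref{prop:first-average-time}, which gives $\poly(q^{cm}, s, n)$. Substituting $c = 20$ (a constant) and $s = q^n$ yields $\poly(q^{20 m}, q^n, n) = q^{O(n + m)}$, as desired. The proof is essentially a direct assembly of the three preceding propositions; no additional obstacle is anticipated here, since the real technical work---the average-case autometry bound---is deferred to Proposition~\ref{prop:first-average-analysis} and proved in Section~\ref{subsec:average-case-analysis}.
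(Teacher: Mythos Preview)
Your proposal is correct and matches the paper's own proof essentially verbatim: the paper simply states that the theorem follows by combining Propositions~\ref{prop:first-average-time}, \ref{prop:first-average-correctness}, and \ref{prop:first-average-analysis}, which is exactly what you do with the explicit choices $c=20$ and $s=q^n$.
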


\noindent{\bf Implementation details.}~
We now explain some issues in the implementation of Algorithm~\ref{algo:first-average}.

To make this algorithm suitable for practical purposes, recall that the 
algorithm's running time is dominated by the two For-loops which give 
multiplicative factors of $q^{cm}$ and $s$, respectively. For the average-case 
analysis we used $c=20$, but having this standing on the exponent is too 
expensive. In practice, actually using $c=3$ already imposes a severe restriction 
on $s$, the order of $\Aut(\bA)$. So we use $c=3$ in the implementation which 
gives a reasonable performance. 

But having $q^{3m}$ in the For-loop is still too demanding. Indeed, in practice 
the 
tolerable enumeration is around $5^{10}$, namely $q=5$ and $10$ on the exponent. 
So 
with $c=3$, the range of $m$ is still severely limited. (Interestingly, the 
algorithm seems to have a better dependence on $n$.) It is most desirable if we 
could let $c=1$, namely simply $q^m$. 

To achieve that we use the following heuristic. Note that if $G_1, \dots, G_c$ are 
low-rank matrices, then we will only need to match them with the low-rank matrices 
from $\cH$. Our experiement shows that, for a random $\cG$ over $\F_q$ 
when $q$ is a small constant, the number of low-rank (i.e. 
non-full-rank) matrices in $\cG$ is 
expected to be small (i.e. much 
smaller than $q^m$) and 
non-zero (i.e. no less than $3$) at the same time. 
So we can 
use 
$q^m\cdot \poly(n, \log q)$ to choose $3$ low-rank matrices from $\cG$. Then use 
$q^m\cdot \poly(n, \log q)$ to compute the set of low-rank matrices from $\cH$, 
denoted as $\cH_c$. We can then replace enumerating $\cH^c$ with $\cH_l^c$, which 
in general is much smaller.
\medskip

\subsection{The main algorithm}\label{subsec:average-case-algo2}
To state our algorithm, we need the concept of adjoint algebra. 
For two tuples of alternating matrices $\bG,\bH\in\Lambda(n,\F)^m$, the adjoint algebra of $\bG$ is defined as
\[
\Adj(\bG)=\{(A, D)\in \Mat(n, \F)\oplus \Mat(n, \F) : A\bG=\bG D\}, 
\]
%Given $\bG, \bH\in\Lambda(n, \F)^m$, 
and the adjoint space from $\bG$ to $\bH$ is 
\[
\Adj(\bG, \bH) 
= \{(A, D)\in \Mat(n, \F)\oplus \Mat(n, \F) : A\bG=\bH D\}. 
\]
Clearly, if 
$T\in\Aut(\bG)$, then $(T^t, T^{-1})\in \Adj(\bG)$. Furthermore, if $\bG$ and $\bH$ 
are isometric, then $|\Adj(\bG, \bH)|=|\Adj(\bG)|$.

We now introduce the algorithm (see Algorithm~\ref{algo:second-average}) that 
supports Theorem~\ref{thm:average}. We point out that Algorithm~\ref{algo:second-average} differs from the
algorithm presented in Section~\ref{subsec: average-case in intro} in two places. 
\begin{enumerate}
\item The first and major difference is to replace the uses of $\Aut(\bG)$ and 
$\Isom(\bG, \bH)$ with $\Adj(\bG)$ and $\Adj(\bG, \bH)$, thereby avoiding using 
Theorem~\ref{thm:isometry-algorithms} (2) and 
Theorem~\ref{thm:autometry-algorithms} (2). Since $\Adj(\bG)$ and $\Adj(\bG, \bH)$ 
are easy to compute over any field, this resolves the characteristic-$2$ field 
issue. Furthermore, $\Adj(\bG)$ and $\Adj(\bG, \bH)$ are also easier to analyze. 
But $\Adj(\bG)$ and $\Adj(\bG, \bH)$ could be larger than $\Aut(\bG)$ and 
$\Isom(\bG, \bH)$, so they are less useful from the practical viewpoint. 
\item The second place is step 2 in 
Algorithm~\ref{algo:second-average}: instead of just using the first $c$ matrices 
as in the algorithm presented in Section~\ref{subsec: average-case in intro}, Algorithm~\ref{algo:second-average} 
slices the $m$ matrices of $\bG$ into $\lfloor m/c\rfloor$ segments of $c$-tuples 
of 
matrices, and tries each segment until it finds one segment with a small 
adjoint algebra. This step helps in improving the average-case analysis, and can 
be applied to the algorithm presented in Section~\ref{subsec: average-case in intro} as well. 
\end{enumerate}

\begin{algorithm}
\begin{description}
\item[{\it Input:}] $\bG=(G_1, \dots, G_m)\in\Lambda(n, q)^m$, $\bH=(H_1, \dots, 
H_m)\in\Lambda(n, q)^m$ and $c, s\in \N$. 
%\\ \%~Denote $\cH=\langle H_1, \dots, H_m\rangle$ as the linear span of matrices in $\bH$.
\item[{\it Output:}] Either (1) $|\Aut(\bA)|>s$; or (2) $\pseudo(\bG, \bH)$ as a 
set, which may be empty.
\item[{\it Algorithm procedure:}]

\begin{enumerate}
\item Set $L\gets \{\}$. Set $F\gets\mathtt{false}$.
\item For $i=1, \dots, \lfloor m/c\rfloor$, do the following.
\begin{enumerate}
\item[a.] Set $\bA=(G_{c(i-1)+1}, \dots, G_{ci})$. 
\item[b.] Compute a linear basis of $\Adj(\bA)\subseteq \Mat(n, q)\oplus \Mat(n, q)$.
\item[c.] If $|\Adj(\bA)|\leq s$, set $F$ to be $\mathtt{true}$, and break 
the For-loop. 
\end{enumerate}
\item If $F=\mathtt{false}$, return ``$\bG$ does not satisfy the generic 
condition.'' and terminate.

Otherwise,
\item Put $\cH=\langle  \bH\rangle$, the linear span of $\bH$;
for every $\bB=(B_1, \dots, B_c)\in \cH^c$, do the following. 
%For every $\bB=(B_1, \dots, B_c)\in \cH^c$, do the following. 
\begin{enumerate}
\item[a.] Compute a linear basis for $\Adj(\bA, \bB)\subseteq \Mat(n, q)\oplus \Mat(n, 
q)$. 
\item[b.] If $|\Adj(\bA, \bB)|>s$, go to the next $\bB$. 
\item[c.] For every $(T, S)\in \Adj(\bA, \bB)$, do the following.
\begin{enumerate}
\item[\ ] If $S$ and $T$ are invertible and $S=T^{-t}$, test whether the linear spans 
of $T\bG T^t$ and $\bH$ are the 
same. If not, go to the next $(T, S)$. If so, add $T^t$ into $L$.
\end{enumerate}
\end{enumerate}
\item Output $L$. 
\end{enumerate}
\end{description}
\caption{The second average-case algorithm for alternating space isometry.}
\label{algo:second-average}
\end{algorithm}

Let us first examine the running time of 
Algorithm~\ref{algo:second-average}.
\begin{prop}\label{prop:second-average-time}
Algorithm~\ref{algo:second-average} runs in time $\poly(q^{cm}, s, n)$.
\end{prop}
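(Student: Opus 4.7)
The plan is to bound the cost of each phase of Algorithm~\ref{algo:second-average} separately and then combine them, as this is essentially a bookkeeping argument. The only nontrivial cost drivers are the two nested loops; everything else is standard linear algebra over $\F_q$.

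First, I would address Step 2. Each iteration computes a linear basis for $\Adj(\bA)\subseteq \Mat(n,q)\oplus \Mat(n,q)$, which amounts to solving the linear system $A\bA = \bA D$ in the unknowns $A,D$; the number of unknowns is $2n^2$ and the number of equations is $cn^2$, so Gaussian elimination finishes in $\poly(n,c,\log q)$ time. From the basis we read off $\dim \Adj(\bA)$ and hence $|\Adj(\bA)| = q^{\dim \Adj(\bA)}$, so the test $|\Adj(\bA)|\le s$ costs $\poly(n,\log q,\log s)$. Since the loop runs at most $\lfloor m/c\rfloor \leq m$ times, Step~2 as a whole runs in $\poly(n,m,\log q,\log s)$.

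Next I would analyze Step 4, the dominant phase. The outer loop enumerates $\bB\in \cH^c$; since $\cH \leq \Lambda(n,q)$ has dimension at most $m$, we have $|\cH^c|\le q^{cm}$, and a single $\bB$ can be enumerated in $\poly(n,m,\log q)$ time per element by iterating over coefficient tuples in $\F_q^m$. For each $\bB$, the basis of $\Adj(\bA,\bB)$ is again obtained by solving a linear system in $\poly(n,c,\log q)$ time. The guard $|\Adj(\bA,\bB)|>s$ ensures we only enter the inner loop when the adjoint space has at most $s$ elements, so the enumeration in Step~4.c touches at most $s$ pairs $(T,S)$. For each such pair, the tests (invertibility of $S,T$, the identity $S=T^{-t}$, and equality of the linear spans of $T\bG T^t$ and $\bH$) each reduce to $\poly(n,m,\log q)$ matrix operations and a single span comparison. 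Hence the total cost of Step~4 is bounded by
\[
q^{cm}\cdot \bigl(\poly(n,m,\log q) + s\cdot \poly(n,m,\log q)\bigr)
  \;=\; \poly(q^{cm},\,s,\,n,\,m,\,\log q).
\]

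Combining Steps~1--3 (cost $\poly(n,m,\log q,\log s)$) with Step~4 and the trivial output step yields the claimed bound $\poly(q^{cm},s,n)$, where the factors of $m$ and $\log q$ are absorbed into the polynomial since $m\le \log_q |\cH|\le cm$ and $\log q \le q$. The only subtlety worth flagging is the enumeration of $\Adj(\bA,\bB)$ when it is large: here we rely precisely on the guard in Step~4.b, which ensures we never pay more than $s$ for the inner loop; without that guard the running time would blow up to $q^{O(n^2)}$. This is not really a barrier but it is the place where the argument genuinely uses the algorithm's structure rather than being pure linear-algebra accounting, and so it is the main (such as it is) point to verify carefully.
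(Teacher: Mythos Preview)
Your proposal is correct and follows essentially the same approach as the paper: bound Step~2 by $\poly(m,n,\log q)$, then bound Step~4 by the product of the outer loop size $q^{cm}$ and the inner loop size $s$ (guaranteed by the guard in Step~4.b), with each per-iteration task being linear algebra in $\poly(n,m,\log q)$. The only cosmetic point is that your inequality ``$m\le \log_q |\cH|$'' is written the wrong way around (one has $\log_q|\cH|\le m$); the absorption of $m$ into $\poly(q^{cm})$ is still immediate since $m\le q^m\le q^{cm}$.
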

\begin{proof}
If Algorithm~\ref{algo:second-average} outputs ``$\bG$ does not satisfy the 
generic 
condition,'' then it just executes the For-loop in Step 3, which together runs in 
time $\poly(m, n, \log q)$. 

Otherwise, there are two For-loops at Step 4 and Step 4.c, which add 
multiplicative factors $q^{cm}$ and $s$, respectively. Other steps can be 
carried out in time $\poly(n, \log q)$. Therefore the whole algorithm runs in 
time $\poly(q^{cm}, s, n)$. 
% Furthermore, the number of $T$'s to be added to $L$ is upper bounded by $q^{cm}\cdot s$. 
\end{proof}

We then prove the correctness of Algorithm~\ref{algo:second-average} in the case 
that 
that it does not report ``$\bG$ does not satisfy the generic 
condition.''

\begin{prop}\label{prop:second-average-correctness}
Suppose that Algorithm~\ref{algo:second-average} does not report ``$\bG$ does not 
satisfy the generic 
condition.''
Then the 
algorithm lists the set of pseudo-isometries (possibly empty). In particular, 
$|\pseudo(\bG, \bH)|\leq q^{cm}\cdot s$.
\end{prop}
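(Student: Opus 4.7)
The plan is to establish three things: (soundness) every $T^t$ that gets added to $L$ is a pseudo-isometry; (completeness) conversely, every pseudo-isometry from $\bG$ to $\bH$ is added to $L$; and (the cardinality bound) $|\pseudo(\bG,\bH)| \leq q^{cm}\cdot s$. Soundness is a one-line unwinding of definitions: the check at Step~4.c only inserts $T^t$ after verifying that the spans of $T\bG T^t$ and $\bH$ coincide, which (since $\bH$ has $m$ linearly independent components and $T$ is invertible) exhibits an invertible change of basis $T' \in \GL(m,q)$ with $T^t \bG T = \bH^{T'}$.

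For completeness, I would fix an arbitrary $U = T^t \in \pseudo(\bG,\bH)$. The defining equation $T\bG T^t \in \cH^m$ already forces $\bB := T\bA T^t$ to lie in $\cH^c$, so the outer loop at Step~4 does reach this $\bB$. A direct substitution shows $(T, T^{-t}) \in \Adj(\bA, \bB)$, and once we have shown that this $\bB$ is not discarded at Step~4.b, the inner loop at Step~4.c will enumerate $(T, T^{-t})$, pass the invertibility and $S = T^{-t}$ tests by construction, and pass the span test by the hypothesis on $U$; hence $U = T^t$ gets added to $L$.

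The hard part, therefore, is showing $|\Adj(\bA,\bB)| \leq |\Adj(\bA)| \leq s$, so that $\bB$ survives Step~4.b. My plan is to produce an explicit bijection $\Phi \colon \Adj(\bA) \to \Adj(\bA,\bB)$ given by $\Phi(A,D) = (TA, T^{-t}D)$. The computation $TA\cdot \bA = T\bA\cdot D = \bB T^{-t}\cdot D$ (using $T\bA = \bB T^{-t}$) confirms the image lies in $\Adj(\bA,\bB)$; injectivity is immediate because $T$ is invertible; and surjectivity follows by checking that the candidate preimage $(T^{-1}T', T^t S')$ of any $(T',S') \in \Adj(\bA,\bB)$ satisfies the defining equation of $\Adj(\bA)$. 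This bijection is really the linchpin of the proof: in spirit it says ``once $\bA$ and $\bB$ are bona fide pseudo-isometric, their mutual adjoint set is a single coset of $\Adj(\bA)$ under the obvious left action.''

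The cardinality bound then falls out mechanically. The map $U \mapsto (\bB, (U^t, U^{-1}))$ from $L$ into $\{(\bB, (T,S)) : \bB \in \cH^c,\ (T,S) \in \Adj(\bA,\bB)\}$ is clearly injective, because $\bB$ and the pair are determined by $U$. The outer loop contributes at most $|\cH|^c \leq q^{cm}$ choices of $\bB$; the filter at Step~4.b guarantees that for each visited $\bB$ at most $s$ pairs are enumerated. Combining with completeness yields $|\pseudo(\bG,\bH)| \leq |L| \leq q^{cm}\cdot s$, completing the proposition.
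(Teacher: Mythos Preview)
Your proof is correct and follows essentially the same line as the paper's. The only difference is one of emphasis: the paper records beforehand the observation that if $\bA$ and $\bB$ are isometric then $|\Adj(\bA,\bB)|=|\Adj(\bA)|$, and in the proof simply asserts that the relevant $\bB$ ``passes all the tests,'' whereas you make this step explicit by constructing the bijection $\Phi(A,D)=(TA,T^{-t}D)$ between $\Adj(\bA)$ and $\Adj(\bA,\bB)$. Your added care in the soundness direction (noting that equality of spans yields an invertible $T'$ provided the $H_i$ are linearly independent) is also a point the paper leaves tacit.
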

\begin{proof}
By Step 5.c, every $T$ added to $L$ is a pseudo-isometry. So we are left to show 
that $L$ contains all the pseudo-isometries. For this, take an arbitrary 
pseudo-isometry $T$. Then $T$ sends $\bA$ to some $\bB\in\cH^c$, i.e., $T^t\bA 
T=\bB$. In particular, $(T^t, T^{-1})\in \Adj(\bA, \bB)$. So when enumerating this 
$\bB\in \cH^c$, $(T^t, T^{-1})$ will pass all the tests in the following, and then 
be added to $L$. This concludes the proof.
\end{proof}

Therefore, to prove Theorem~\ref{thm:average}, the key is to analyze when a random 
$\bG$ satisfies the generic condition as in Algorithm~\ref{algo:second-average}. 
\begin{prop}\label{prop:second-average-analysis}
Let $m\geq c=20$, and let $\ell=\lfloor m/20\rfloor\in\N$. For all but at most 
$1/q^{\Omega(n\cdot \ell)}=1/q^{\Omega(nm)}$ fraction of $\bG=(G_1, \dots, 
G_m)\in\Lambda(n, q)^m$, there exists some $i\in[\ell]$, such that, letting 
$\bA=(G_{c(i-1)+1, \dots, c(i-1)})$, we have $|\Adj(\bA)|\leq q^{n}$.
\end{prop}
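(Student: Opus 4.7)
The plan is to reduce to a single-slice probability bound and then use independence of the slices. Since $\bA_1, \ldots, \bA_\ell$ are formed from disjoint coordinates of the i.i.d.\ sample $\bG$, they are mutually independent uniform draws from $\Lambda(n, q)^c$. Writing
\[
p := \Pr_{\bA \sim \Lambda(n, q)^c}\bigl[|\Adj(\bA)| > q^n\bigr],
\]
independence gives $\Pr_{\bG}[\forall i \in [\ell],\, |\Adj(\bA_i)| > q^n] = p^\ell$. It thus suffices to establish the single-slice estimate $p \leq q^{-\Omega(n)}$, which then yields $p^\ell \leq q^{-\Omega(n\ell)} = q^{-\Omega(nm)}$, matching the claim.

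To bound $p$, I would analyze $\Adj(\bA)$ as the kernel of the linear map $L_{\bA}\colon \Mat(n, q)^{\oplus 2} \to \Mat(n, q)^c$ sending $(X, Y) \mapsto (X A_i - A_i Y)_{i=1}^c$. For each fixed pair $(X, Y)$, let $d(X, Y) := \dim_{\F_q}\{A \in \Lambda(n, q) : XA = AY\}$; by independence of the $c$ matrices of $\bA$,
\[
\Pr_{\bA}\bigl[(X, Y) \in \Adj(\bA)\bigr] = q^{c\,(d(X, Y) - \binom{n}{2})}.
\]
The trivial pairs $(aI, aI)$ for $a \in \F_q$ always satisfy $d = \binom{n}{2}$ and contribute a guaranteed factor of $q$ to $|\Adj(\bA)|$, so the goal becomes to show that the non-trivial part of $\Adj(\bA)$ has dimension at most $n - 1$ with probability $1 - q^{-\Omega(n)}$. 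The generic picture is clean: if $A_1$ is non-degenerate (possible when $n$ is even) then $\Adj(A_1) = \{(X, A_1^{-1} X A_1) : X \in \Mat(n, q)\}$ has dimension $n^2$, and adding the $A_2$ constraint forces $X$ to lie in the centralizer of $A_2 A_1^{-1}$, which has dimension exactly $n$ whenever $A_2 A_1^{-1}$ is a regular element. A Schwartz--Zippel / Lang--Weil style estimate on the discriminant of the characteristic polynomial, applied to the $\binom{n}{2}$-dimensional affine section $\{M \in \Mat(n, q) : M A_1 \in \Lambda(n, q)\}$, bounds the failure probability by $q^{-\Omega(n)}$. The constant $c = 20 \gg 2$ provides slack so that subsequent $A_3, \ldots, A_c$ can only further shrink $\Adj$, and the aggregate sum $\sum_{(X,Y) \text{ non-trivial}} q^{c(d(X,Y) - \binom{n}{2})}$ is dominated by the $(n-1)$-dimensional non-trivial stratum.

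The main obstacle is converting the generic picture above into a rigorous tail bound at the correct strength, and two subtleties must be handled. First, when $n$ is odd every alternating matrix has a nontrivial radical; one must peel off the one-dimensional radical and run the regularity analysis on the non-degenerate part of dimension $n-1$, verifying that this peeling succeeds with probability $1 - q^{-\Omega(n)}$ and that no probability mass is lost. Second, because $\bA$ is drawn from $\Lambda(n, q)^c$ rather than $\Mat(n, q)^c$, the ``generic'' events of interest are incidence conditions on linear sections of determinantal varieties carved out by the alternating constraint; obtaining the correct $q^{-\Omega(n)}$ tail therefore requires counting $\F_q$-points on these sections via Lang--Weil applied to the alternating slice, not on the ambient matrix space. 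Once these two complications are resolved, the expectation/union bound over non-trivial $(X,Y)$ yields the single-slice bound $p \leq q^{-\Omega(n)}$, and the independence amplification from the first paragraph closes the argument.
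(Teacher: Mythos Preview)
Your top-level reduction is correct and is exactly what the paper does: the $\ell$ segments $\bA_1,\dots,\bA_\ell$ are independent uniform samples from $\Lambda(n,q)^c$, so the failure probability multiplies, and everything rests on the single-slice estimate
\[
p \;=\; \Pr_{\bA\sim\Lambda(n,q)^c}\bigl[\,|\Adj(\bA)|>q^n\,\bigr]\;\le\;q^{-\Omega(n)},
\]
which the paper isolates as Proposition~\ref{prop:average-analysis-key}.

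Where you diverge from the paper, and where the gap is, is in the single-slice bound itself. The paper does \emph{not} argue via the centralizer of $A_2A_1^{-1}$. Instead it introduces a combinatorial notion: $\bA$ is \emph{stable} if $\dim(\bA(U))>\dim U$ for every nonzero proper $U\le\F_q^n$. Stability forces $|\Adj(\bA)|\le q^n$ (Proposition~\ref{prop:stable}, quoted from \cite{LQ}), and the probability that a random $\bA\in\Lambda(n,q)^{20}$ fails to be stable is bounded by a union bound over subspaces $U$, stratified by $d=\dim U$. The role of $c=20$ is that from $20$ random alternating $n\times d$ blocks one can extract (Lemma~\ref{lemma: random alter to random matrix}) five genuinely uniform $n\times d$ blocks, which is enough random columns to push each stratum's contribution below $q^{-\Omega(n)}$.

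Your regularity route cannot reach $q^{-\Omega(n)}$ as stated. The event ``$A_2A_1^{-1}$ is non-regular'' has probability $\Theta(1/q^{O(1)})$, not $q^{-\Omega(n)}$: the non-regular locus has bounded codimension in $\Mat(n,q)$, and Schwartz--Zippel or Lang--Weil applied to the discriminant (a single polynomial of degree $O(n^2)$) on any affine slice gives at best $O(n^2/q)$. The remaining matrices $A_3,\dots,A_c$ do shrink $\Adj$ further, but your fallback ``aggregate sum'' is a first-moment bound on $\mathbb{E}[|\Adj(\bA)|]$; to convert it to a $q^{-\Omega(n)}$ tail via Markov you would need sharp upper bounds on the number of pairs $(X,Y)$ in each stratum $d(X,Y)=\binom{n}{2}-k$, and you have not supplied these. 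That stratified count is precisely the hard part, and the paper's stability argument is one clean way to carry it out --- on the subspace side rather than the matrix-pair side.
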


Clearly, Theorem~\ref{thm:average} follows from 
Propositions~\ref{prop:second-average-time},~\ref{prop:second-average-correctness},
 and~\ref{prop:second-average-analysis}.

\subsection{The average-case analysis}
\label{subsec:average-case-analysis}
We now formulate the key proposition that supports the proof of Proposition~\ref{prop:second-average-analysis}.
\begin{prop}\label{prop:average-analysis-key}
Let $c=20$. For all but at most 
$1/q^{\Omega(n)}$ fraction of $\bA=(G_1, \dots, 
G_c)\in\Lambda(n, q)^c$, we have $|\Adj(\bA)|\leq q^{n}$.
\end{prop}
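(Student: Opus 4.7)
The aim is to prove $\dim_{\F_q}\Adj(\bA) \leq n$ (equivalently $|\Adj(\bA)| \leq q^n$) for all but a $q^{-\Omega(n)}$ fraction of $\bA = (G_1,\ldots,G_c) \in \Lambda(n,q)^c$ with $c = 20$. First I would observe that $\Adj(\bA)$ is a unital subalgebra of $\Mat(n,q) \oplus \Mat(n,q)$ (closed under componentwise multiplication), and that it always contains the diagonal scalars $\F_q \cdot (I,I)$. Thus $\dim \Adj(\bA) \geq 1$ unconditionally, and the task is to bound how far it can exceed this baseline.

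The key structural reduction exploits nonsingularity of a single matrix in the span. With probability at least $1 - q^{-\Omega(n)}$ over random $\bA$, some $\F_q$-linear combination $G_\lambda := \sum_i \lambda_i G_i$ is nonsingular (the complementary event would require the entire linear span $\langle G_1,\ldots,G_c\rangle$ to consist of singular alternating forms, which for $n$ even is a codimension-growing condition; odd $n$ admits a minor adjustment via restriction to a codimension-$1$ subspace). Choosing such a $G_\lambda$ and using it in place of $G_1$ after an invertible change of generators, any $(A,D) \in \Adj(\bA)$ satisfies $D = G_\lambda^{-1} A G_\lambda$, and the remaining equations $A G_i = G_i D$ become $A M_i = M_i A$ for $M_i := G_i G_\lambda^{-1}$. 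Hence $\Adj(\bA)$ is isomorphic, as a unital algebra, to $\bigcap_{i} C_{\Mat(n,q)}(M_i)$. For a single cyclic $M_i$ the centralizer already has dimension exactly $n$, so the intersection over $c = 20$ independent such matrices generically collapses well below $n$; the purpose of taking $c$ generous is to drive the failure probability down from $O(q^{-1})$ to $q^{-\Omega(n)}$.

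The central technical step is a first-moment / union-bound argument. The bad event $\dim \Adj(\bA) > n$ is witnessed by an $(n+1)$-dimensional unital subalgebra $\mathcal{B} \subseteq \Mat(n,q) \oplus \Mat(n,q)$ contained in $\Adj(\bA)$. For a fixed $\mathcal{B}$, the containment $\mathcal{B} \subseteq \Adj(\bA)$ imposes a linear constraint on each $G_i$ in $\Lambda(n,q)$; because $\mathcal{B}$ strictly contains the scalars, choosing a non-scalar $(A,D) \in \mathcal{B}$ and analyzing the intertwiner space $\{G \in \Lambda(n,q) : AG = GD\}$ shows this constraint has codimension at least $\Omega(n)$. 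Independence of the $G_i$ then gives probability at most $q^{-c \cdot \Omega(n)}$ that all $c = 20$ matrices simultaneously satisfy it.

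The main obstacle I anticipate is controlling the witness count: a naive union bound over $(n+1)$-dimensional subspaces of $\Mat(n,q) \oplus \Mat(n,q)$ yields $q^{\Theta(n^2)}$ candidates, far too many to be beaten by a per-witness bound of only $q^{-\Omega(n)}$. To overcome this, I would exploit the fact that generically the projection $(A,D) \mapsto A$ is injective on $\Adj(\bA)$, reducing the enumeration to unital subalgebras $\mathcal{A} \subseteq \Mat(n,q)$ of dimension $n+1$; these are substantially more rigid (by Artin--Wedderburn theory over $\F_q$) and can be parameterized by a much smaller family. An alternative, closer in spirit to the iterative individualize-and-refine analysis of Li--Qiao~\cite{LQ}, is to condition sequentially: given $M_2,\ldots,M_k$, show that a fresh random $M_{k+1}$ strictly shrinks $\bigcap_{i \leq k} C(M_i)$ with probability $1 - O(q^{-1})$ whenever that intersection still has dimension greater than $n$, and telescope over the $c - 1 = 19$ rounds. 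Finally, the $\lfloor m/c \rfloor$-slicing in Proposition~\ref{prop:second-average-analysis} would amplify this single-chunk bound further to the overall $q^{-\Omega(nm)}$ stated there.
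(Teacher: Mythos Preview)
Your approach is genuinely different from the paper's, but it has a real gap, and the paper's route avoids precisely the obstacle you identify.

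\textbf{What the paper actually does.} The paper never goes through your centralizer reduction or any union bound over subalgebras. Instead it invokes the notion of a \emph{stable} tuple from \cite{LQ}: $\bA$ is stable if every nonzero proper subspace $U \leq \F_q^n$ satisfies $\dim(\bA(U)) > \dim(U)$. Proposition~\ref{prop:stable} (from \cite{LQ}) states that stable implies $|\Adj(\bA)| \leq q^n$. The paper then runs a union bound over \emph{subspaces} $U$ of $\F_q^n$---of which there are only $\sum_d \gbinom{n}{d}{q}$, far fewer than subalgebras of $\Mat(n,q)$---and bounds, for each fixed $U_d$ of dimension $d$, the probability that $\dim(\bA(U_d)) \leq d$. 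The alternating structure is handled by a combinatorial lemma (Lemma~\ref{lemma: random alter to random matrix}) that manufactures a genuinely uniform $d \times (d-1)$ matrix out of two random $d \times d$ alternating matrices; this reduces the per-subspace estimate to a rank-deficiency probability for a random $n \times 5d$ matrix, which is small enough (roughly $q^{-\Omega(nd)}$) to beat the $\gbinom{n}{d}{q}$ count. The constant $c = 20$ enters because $20/4 = 5$ alternating matrices per ``block'' are consumed in building these random matrices.

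\textbf{The gap in your proposal.} Your sequential shrinking argument cannot reach $q^{-\Omega(n)}$ with $c = 20$ fixed. Once you reduce to $\bigcap_i C(M_i)$, observe that $\dim \bigcap_i C(M_i) > n$ forces \emph{every} $M_i$ to be non-cyclic (since a single cyclic $M_j$ already gives $\dim C(M_j) = n$). The probability that a random matrix over $\F_q$ is non-cyclic is $\Theta(q^{-\mathrm{const}})$ with the constant independent of $n$; so even granting independence and uniform distribution of the $M_i$ (which you do not have, since $M_i = G_i G_\lambda^{-1}$ with a shared $G_\lambda^{-1}$ and alternating $G_i$), the failure probability is only $q^{-O(c)} = q^{-O(1)}$. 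Your alternative fix, parametrizing $(n{+}1)$-dimensional unital subalgebras of $\Mat(n,q)$, is not executed, and it is far from clear that their count is $q^{O(n)}$ rather than $q^{\Omega(n^2)}$. Finally, the assertion that the intertwiner constraint $\{G \in \Lambda(n,q) : AG = GD\}$ has codimension $\Omega(n)$ for any non-scalar $(A,D)$ is plausible but is asserted without argument; the paper's stability route sidesteps any such estimate.
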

Given Proposition~\ref{prop:average-analysis-key}, we easily obtain the following.
\begin{proof}[Proof of Proposition~\ref{prop:first-average-analysis}]
This is because, if $T\in\Aut(\bA)$, then $(T^t, T^{-1})\in\Adj(\bA)$. So 
$|\Aut(\bA)|\leq|\Adj(\bA)|$.
\end{proof}
\begin{proof}[Proof of Proposition~\ref{prop:second-average-analysis}]
We slice $\bG$ into 
$\ell=\lfloor m/c\rfloor$ segments, where each segment 
consists of $c$ random alternating matrices. 
Each segment is some $\bA\in\Lambda(n, q)^c$, with $\Pr[|\Adj(\bA)|> q^{ 
n}]\leq 1/q^{\Omega(n)}$. Since each $G_i$ is chosen independently and uniformly at random, the 
probability of every $(G_{c(i-1)+1}, \dots, G_{ci})$, $i\in[\ell]$, with 
$|\Adj((G_{c(i-1)+1}, \dots, G_{c(i-1)+c}))|>q^{ n}$, is upper bounded by 
$(1/q^{\Omega(n)})^\ell=1/q^{\Omega(nm)}$.
\end{proof}

The rest of this subsection is devoted to the proof of 
Proposition~\ref{prop:average-analysis-key}. For this we need the following from 
\cite{LQ}.
Given a tuple 
%of $r$ $n\times n$ matrices 
$\bA=(A_1, \dots, A_r)\in\Mat(n, q)^r$, define the image of $U\leq \F_q^n$ under 
$\bA$ as $\bA(U):=\langle\cup_{i=1}^rA_i(U)\rangle$.
\begin{defn}
We say $\bA=(A_1, \dots, A_r)\in\Mat(n, q)^r$ is \emph{stable}, if for any 
nonzero, proper $U\leq \F_q^n$, we have $\dim(\bA(U))>\dim(U)$.
\end{defn}
\begin{prop}[{\cite[Proposition 10 in arXiv version]{LQ}}]\label{prop:stable}
If $\bA\in \Mat(n, q)^r$ is stable, then $|\Adj(\bA)|\leq q^n$.
\end{prop}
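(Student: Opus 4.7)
The plan is to show that $\Adj(\bA)$ is a finite \emph{division} $\F_q$-algebra; Wedderburn's little theorem combined with the module structure on $\F_q^n$ will then immediately yield the bound.

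First, $\Adj(\bA)$ is an $\F_q$-subalgebra of $\Mat(n,\F_q)\oplus\Mat(n,\F_q)$ (with componentwise operations) containing $(I,I)$, since the defining relations $AA_i=A_iD$ are preserved by addition and by componentwise multiplication. Stability yields
\[
\bigcap_{i=1}^r \ker A_i = 0
\qquad\text{and}\qquad
\sum_{i=1}^r \mathrm{image}(A_i) = \F_q^n,
\]
the first because $U:=\bigcap_i \ker A_i$ satisfies $\bA(U)=0$, which is incompatible with $\dim\bA(U)>\dim U$ unless $U=0$; the second by applying stability to any codimension-one subspace. Consequently, both projections $(A,D)\mapsto A$ and $(A,D)\mapsto D$ from $\Adj(\bA)$ to $\Mat(n,\F_q)$ are injective: the kernel of the $D$-projection consists of pairs $(A,0)$ with $AA_i=0$ for all $i$, which forces $A\cdot\F_q^n=0$, and symmetrically on the other side. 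In particular $\F_q^n$ is a faithful $\Adj(\bA)$-module via either projection.

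The heart of the argument is the dichotomy: for every $(A,D)\in \Adj(\bA)$, either $(A,D)=(0,0)$ or $A$ and $D$ are both invertible. The intertwining relation yields
\[
\bA(\ker D)\subseteq \ker A
\qquad\text{and}\qquad
\bA(\mathrm{image}\,D)\subseteq \mathrm{image}\,A,
\]
the first because $v\in\ker D$ forces $A(A_iv)=A_i(Dv)=0$, and the second because $A_i(Dv)=A(A_iv)\in\mathrm{image}\,A$. Now suppose $D$ is nonzero and not invertible: then both $\ker D$ and $\mathrm{image}\,D$ are proper nonzero subspaces, so stability supplies the strict inequalities
\[
\dim\ker A>\dim\ker D
\qquad\text{and}\qquad
\dim\mathrm{image}\,A>\dim\mathrm{image}\,D.
\]
Adding these and applying rank--nullity on both sides forces $n>n$, a contradiction. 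Hence $D$ is either zero or invertible. The case $D=0$ collapses via $AA_i=0$ for all $i$ together with $\sum_i\mathrm{image}(A_i)=\F_q^n$, forcing $A=0$. When $D$ is invertible, the identity $A_i=AA_iD^{-1}$ shows $\mathrm{image}\,A\supseteq \sum_i\mathrm{image}(A_i)=\F_q^n$, so $A$ is also invertible.

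Therefore every nonzero element of $\Adj(\bA)$ is invertible, making $\Adj(\bA)$ a finite division $\F_q$-algebra; by Wedderburn's little theorem it is a field $\F_{q^d}$ for some $d\ge 1$. The $D$-projection embeds $\F_{q^d}$ as a subalgebra of $\Mat(n,\F_q)$, endowing $\F_q^n$ with the structure of an $\F_{q^d}$-vector space, whence $d\mid n$ and $|\Adj(\bA)|=q^d\le q^n$ as required. The step I expect to require the most care is the dichotomy: verifying that ``$D$ nonzero and singular'' really does force \emph{both} $\ker D$ and $\mathrm{image}\,D$ to be proper nonzero so that stability applies on both sides simultaneously and the two strict inequalities telescope against rank--nullity.
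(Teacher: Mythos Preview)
The paper does not supply its own proof of this proposition; it is quoted verbatim from \cite{LQ} and used as a black box in the average-case analysis. So there is nothing in the present paper to compare against.

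Your argument is correct. The route---show that stability forces every nonzero $(A,D)\in\Adj(\bA)$ to have both components invertible, conclude $\Adj(\bA)$ is a finite division $\F_q$-algebra, then apply Wedderburn and the induced $\F_{q^d}$-module structure on $\F_q^n$ to get $d\mid n$---is exactly the natural one, and each step checks out. In particular, the rank--nullity collapse from the two strict inequalities is clean, and your handling of the boundary cases ($D=0$ and $D$ invertible) is fine. One cosmetic point: when you deduce $\sum_i\mathrm{image}(A_i)=\F_q^n$ and $\bigcap_i\ker A_i=0$ from stability, you are implicitly using $n\ge 2$ (so that a codimension-one subspace is nonzero and proper); this is harmless in the paper's context but worth a word if you write it up in isolation.
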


A key technical result in \cite{LQ} is that, a random $\bA\in \Mat(n, q)^4$ is 
stable with probability $1-\frac{1}{q^{\Omega(n)}}$ \cite[Proposition 20 in 
arXiv version]{LQ}. However, we cannot directly apply that result to prove 
Proposition~\ref{prop:average-analysis-key}, because here we have alternating 
matrices instead of general matrices. So we have to run the arguments for the 
proof of \cite[Proposition 20 in 
arXiv version]{LQ} again, and carefully adjust some of the details there to 
accommodate the structure of alternating matrices. 

To start with, we need the following easy linear algebraic result, which suggests 
the 
connection between random alternating matrices and random general matrices.
\begin{lemma}\label{lemma: random alter to random matrix}
Let $d\in \Z^+$ and $d\geq 2$. Given two random alternating matrix $X, Y\in 
\Lambda(d, q)$, we can construct a 
matrix $P\in\Mat(d\times (d-1), q)$, whose columns are linear combinations of the 
columns of $X$ and $Y$, such that $P$ is a random matrix from $\Mat(d\times (d-1), 
q)$. 
\end{lemma}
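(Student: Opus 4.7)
My plan is to give an explicit construction: define the $k$-th column of $P$ by
\[
P_k := X_k + Y_{k+1} \qquad (k = 1, \dots, d-1),
\]
where $X_k$ denotes the $k$-th column of $X$ and $Y_{k+1}$ the $(k+1)$-st column of $Y$. By definition each column of $P$ is a linear combination of columns of $X$ and $Y$, so the only substantive task is to check that $P$ is uniformly distributed in $\Mat(d\times(d-1),q)$. Since $P$ is a linear function of the $2\binom{d}{2} = d(d-1)$ independent uniform scalars parametrizing $X$ and $Y$ (namely $x_{ij} := X_{ij}$ and $y_{ij} := Y_{ij}$ for $i<j$), and since $P$ has exactly $d(d-1)$ entries, it suffices to check that this linear map from $\F_q^{d(d-1)}$ to $\F_q^{d(d-1)}$ is invertible.

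To verify invertibility, I will expand $P_{i,k} = X_{i,k} + Y_{i,k+1}$ using the parameterization $X_{ij} = x_{ij}$ for $i<j$, $X_{ij} = -x_{ji}$ for $i>j$, $X_{ii}=0$ (and likewise for $Y$). This gives four cases:
\[
P_{i,k} \;=\; \begin{cases} x_{ik} + y_{i,k+1} & i < k,\\ y_{k,k+1} & i = k,\\ -x_{k,k+1} & i = k+1,\\ -x_{ki} - y_{k+1,i} & i > k+1. \end{cases}
\]
Setting $P \equiv 0$, the $i=k$ and $i=k+1$ entries immediately force $y_{k,k+1} = x_{k,k+1} = 0$ for all $k$, handling the ``super-diagonal'' variables. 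I will then prove by induction on $q - p \geq 1$ that $x_{p,q} = y_{p,q} = 0$: the base case $q-p=1$ is what we just established, and for $q-p \geq 2$ the entry $P_{p,q-1} = 0$ gives $x_{p,q-1} = -y_{p,q}$ (forcing $y_{p,q}=0$ by the inductive hypothesis on $x_{p,q-1}$), and $P_{q,p} = 0$ gives $x_{p,q} = -y_{p+1,q}$ (forcing $x_{p,q}=0$ similarly).

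The main obstacle is really just finding the right construction rather than proving it works. The naive attempt $P_k = X_k + Y_k$ fails because both columns are identically zero at position $k$, so position $k$ of $P_k$ would be a deterministic $0$ rather than a uniform scalar. The shift by one in $P_k = X_k + Y_{k+1}$ is the minimal correction: it replaces the would-be zero diagonal entry with a fresh independent variable $y_{k,k+1}$, and a symmetric role is played by $-x_{k,k+1}$ appearing at the subdiagonal. Once the construction is written down, the rest is essentially bookkeeping, and since the case $d=2$ requires only a $2\times 1$ matrix (where $P_1 = X_1 + Y_2$ already yields two independent entries $y_{12}$ and $-x_{12}$), no separate treatment of small $d$ is needed.
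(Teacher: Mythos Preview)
Your construction $P_k = X_k + Y_{k+1}$ is exactly the construction the paper uses (the paper builds a $d\times d$ matrix $M$ whose $k$-th column is $X_k + Y_{k+1}$ for $k\le d-1$ and then takes $P$ to be the first $d-1$ columns of $M$), and your injectivity argument by induction on $q-p$ is correct. If anything, your verification is cleaner than the paper's: the paper observes the anti-diagonal relations $\sum_j z_{j,\,i+1-j}+\sum_j z_{j,\,d+i+1-j}=0$ in $M$ and then asserts that once the last column is dropped the remaining entries are independent, leaving the details as ``straightforward,'' whereas your dimension-count plus kernel-computation makes this explicit.
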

\begin{proof}
Let $X$ and $Y$ be given as
$$X=\begin{bmatrix}
0 & x_{1,2} & x_{1,3} & \dots & x_{1,d}\\
-x_{1,2} & 0 & x_{2,3} & \dots & x_{2,d}\\
-x_{1,3} & -x_{2,3} & 0 & \dots & x_{3,d}\\
\vdots & \vdots & \vdots & \ddots & \vdots\\
-x_{1,d} & -x_{2,d} & -x_{3,d} & \dots & 0 \\
\end{bmatrix},~Y=\begin{bmatrix}
0 & y_{1,2} & y_{1,3} & \dots & y_{1,d}\\
-y_{1,2} & 0 & y_{2,3} & \dots & y_{2,d}\\
-y_{1,3} & -y_{2,3} & 0 & \dots & y_{3,d}\\
\vdots & \vdots & \vdots & \ddots & \vdots\\
-y_{1,d} & -y_{2,d} & -y_{3,d} & \dots & 0 \\
\end{bmatrix},$$
where each $x_{i,j}$ and $y_{i,j}$ are independent random variables from 
$\F_q$. Define
\begin{equation*}\label{eq: construct M from random alt matrix}
\begin{split}
M&=\begin{bmatrix}
y_{1,2} & x_{1,2}+y_{1,3} & x_{1,3}+y_{1,4} & \dots & 
x_{1,d-1}+y_{1,d} & x_{1,d} \\
-x_{1,2} & y_{2,3} & x_{2,3}+y_{2,4} & \dots & x_{2,d-1}+y_{2,d} 
&x_{2,d}-y_{1,2} \\
-x_{1,3}-y_{2,3} & -x_{2,3} & y_{3,4} & \dots & x_{3,d-1}+y_{3,d} 
&x_{3,d}-y_{1,3} \\
\vdots & \vdots & \vdots & \ddots & \vdots &\vdots\\
-x_{1,d}-y_{2,d} & -x_{2,d}-y_{3,d} & -x_{3,d}-y_{4,d} & \dots & 
-x_{d,d-1} &-y_{1,d} 
\end{bmatrix}\\
&:=\begin{bmatrix}
z_{1,1} & z_{1,2} & z_{1,3} & \dots & z_{1,d-1} & z_{1,d}\\
z_{2,1} & z_{2,2} & z_{2,3} & \dots & z_{2,d-1} & z_{2,d} \\
z_{3,1} & z_{3,2} & z_{3,3} & \dots & z_{3,d-1} & z_{3,d} \\
\vdots & \vdots & \vdots & \ddots & \vdots & \vdots \\
z_{d,1}&z_{d,2} & z_{d,3} & \dots & z_{d, d-1} & z_{d,d} \\
\end{bmatrix},
\end{split}
\end{equation*}
be the matrix obtained by adding the $(i+1)$th column of $Y$ to 
the $i$th column of $X$ for $i\in[d-1]$, and add the first column of $Y$ to the 
$d$th column of $X$. Let $P$ be the $d\times (d-1)$ matrix consisting of the first 
$(d-1)$ columns of 
$M$. We need to show that $P$ is uniformly sampled from $\Mat(n\times (d-1), q)$ 
as $X$ and $Y$ are uniformaly sampled from $\Lambda(d, q)$. 

To see this, first note that for any two random variable $x$ and $y$, which are 
chosen independently and uniformly at random from $\F_q^d$, $x\pm y$ are also new 
random variables which are chosen uniformly at random from $\F_q^d$, and is 
independent with either $x$ or $y$. Thus each $z_{i,j}$ is again a random 
variable which is chosen uniformly at random from $\F_q^d$ for $i,j\in[d]$. 

We then exploit the linear relations among the $z_{i,j}$'s. In fact, we only need to 
focus on the anti-diagonal 
directions, as 
$$z_{1,i}+z_{2,i-1}+\cdots+z_{i,1}+z_{i+1,d}+z_{i+2,d-1}+\dots+z_{d,i+1}=0$$
for any $i\in[d]$. Thus, we can view 
$z_{1,i},z_{2,i-1},\cdots,z_{i,1},z_{i+2,d-1},\dots, z_{d,i+1}$ (note the missing 
$z_{i+1,d}$) to be mutually independent for each $i\in[d]$, then every entry in 
$P$ can be viewed as 
chosen independently and uniformly at random. This can be verified in a 
straightforward way, and we can conclude the proof.
\end{proof}

\begin{remark}\label{remark: construct square random matrix}
Following the similar argument, if we would like to get an $d\times d$ random 
matrix over $\F_q$, we can in turn do the following:
take two $d\times d$ random alternating matrices $X$ and $Y$ and construct $M$ as 
in Lemma~\ref{eq: construct M from random alt matrix}. 
We then take another two random alternating matrices $Z$ and $W$. We add up the 
first column of $Z$ and $W$, of which each coordinates can be viewed as chosen 
independently and uniformly at random. We replace the last column of $M$ by the 
new random vector, which gives an $d\times d$ random matrix.
\end{remark}

We are now ready to prove Proposition~\ref{prop:average-analysis-key}.

\begin{proof}[Proof of Proposition~\ref{prop:average-analysis-key}]
Given Proposition~\ref{prop:stable}, we need upper bound the probability of a 
random $\bA\in\Lambda(n, q)^c$, such that 
$\bA$ is \emph{not} stable, by $1/q^{\Omega(n)}$. 

By the union bound, we know that
\begin{equation}\label{eq: upper bound}
\begin{split}
\Pr[\bA\in\Lambda(n, q)^c\text{ is not stable}]\leq 
\sum_{\substack{U\leq\F_q^n,\\1\leq\dim(U)\leq 
n-1}}\Pr[\bA\in\Lambda(n, q)^c, \dim(\bA(U))\leq \dim(U)].
\end{split}
\end{equation}

We first simplify the right-hand-side. For a non-zero, proper $U\leq \F_q^n$, 
let $A_U:=\{\bA \in\Lambda(n,q)^r:\dim(\bA (U))\leq \dim(U)\}$. Clearly,
$$\Pr[\bA \in \Lambda(n, q)^c,\dim(\bA (U))\leq 
\dim(U)]=\frac{|A_U|}{|\Lambda(n,q)^c|}.$$ 

We show that for any two dimension-$d$ subspaces $U$ and $V$, $|A_U|=|A_V|$. 
To see this, let $T\in\GL(n,q)$ be any invertible matrix that sends $V$ to $U$. 
Note that $T$ further induces a linear map from $\Lambda(n,q)^r$ to 
itself by sending $\bA$ to $T^t \bA  T$. Since $T$ is invertible, this map is a
bijection. Moreover, 
for any $\bA \in A_U$, we claim that $T^t\bA  T\in A_V$. This is because
\begin{equation*}
\dim((T^t\bA  T) (V))=\dim((T^t\bA)  (U))=\dim(\bA  (U))\leq 
\dim(U)=\dim(V),
\end{equation*}
where the second equality holds since left and right multiplying invertible 
matrices does not change the rank of a matrix. To summarize, if $\dim(U)=\dim(V)$, 
then
$$\Pr[\bA \in\Lambda(n, q)^c,\dim(\bA (U))\leq \dim(U)]=\Pr[\bA \in\Lambda(n, 
q)^c,\dim(\bA (V))\leq \dim(V)].$$

The right-hand-side of~\ref{eq: upper bound} can be then 
simplified as
\begin{equation}\label{eq: simplified version}
\Pr[\bA \in\Lambda(n, q)^c~is~not~stable]\leq 
\sum_{d=1}^{n-1}\gbinom{n}{d}{q}\cdot \Pr[\bA \in\Lambda(n, q)^c,\dim(\bA 
(U_d))\leq d].
\end{equation}
where $U_d$ is the $d$-dimensional subspace of $\F_q^n$ spanned by the first $d$ 
standard basis $e_1,\dots,e_d$. 

The next goal is to upper bound $\gbinom{n}{d}{q}\Pr[\bA 
\in \Lambda(n, q)^c,\dim(\bA (U_d))\leq d]$ for $d=1, \dots, n-1$.

Let $A_i^d$ be the $n\times d$ matrices consists of the first $d$ columns of $A_i$ for $i\in 
[c]$. (Note that the superscript here does not denote exponentiation.) Let 
$A^d=[A_1^d,\cdots,A_c^d]\in \Mat(n\times cd,q)$. 
Then
$\dim(\bA (U_d))$ is just the rank of $A^d$. Note 
that for $i\in[c]$, the first $d$ row of $A_i^d$ can be viewed as a random 
alternating matrix from $\Lambda(d,q)$, and the last $n-d$ rows of $A_i^d$ can be 
viewed as a
$(n-d)\times d$ random matrix. Moreover, these two matrices can be viewed as being 
chosen independently. 

By Lemma~\ref{lemma: random alter to random matrix} together 
with Remark~\ref{remark: construct square random matrix}, there exist a series of 
column operations represented by an invertible matrix $R\in \GL(cd\times cd,q)$, 
such that the following holds. Let $V^d\in \Mat(n\times 5d,q)$ be the matrix 
consists of the first $5d$ 
columns of $A^dR$. Then $V^d$ can be viewed as chosen independently and uniformly 
at random from $M(n\times 5d,q)$, as $\bA$ is chosen uniformly at random from $\Lambda(n,q)^c$. 
Note that when $d=1$, the first row of $A_i^d$ is $0$ for all $i\in[c]$. This degenerate case
suggest us to consider $V^1$ as randomly choosing from $M((n-1)\times 5,q)$.
Note that
$$\Pr[\bA 
\in \Lambda(n, q)^c,\dim(\bA (U_1))\leq 1]\leq \Pr[V^1 \in M((n-1)\times 
5,q),~rk(V^1)\leq 1]$$
and
$$\Pr[\bA 
\in \Lambda(n, q)^c,\dim(\bA (U_d))\leq d]\leq \Pr[V^d \in M(n\times 
5d,q),~rk(V^d)\leq d]$$
for $2\leq d\leq n-1$.

We consider how to construct an $(n-1)\times 5$ matrix such that its rank is not larger than $1$.
One way to do so is to pick one column fix its coordinates; then let the rest $4$ columns be scalar of the picked ones. This procedure gives the bound 
$$
\Pr[V^1 \in M((n-1)\times 
5,q),~rk(V^1)\leq 1]\leq \frac{\binom{5}{1}\cdot q^{n-1}\cdot 
q^{5-1}}{q^{5(n-1)}}=\frac{5}{q^{4n-8}}.
$$
So we have 
\begin{equation}\label{eq:d=1}
\gbinom{n}{1}{q}\cdot\Pr[\bA 
\in \Lambda(n, q)^c,\dim(\bA (U_1))\leq 1]\leq \frac{5}{q^{3n-8}}.
%\in \frac{1}{q^{\Omega(n)}}.
\end{equation}

Using the same idea, we deal with $2\leq d\leq n-1$. All possible $V^d$ such that $rk(V^d)\leq d$ can be constructed by 
first 
choosing $d$ columns in $V^d$ and fixing their entries, and then choosing the other 
columns from their linear span. This gives the bound 
$$\Pr[V^d\in M(n\times 5d,q),rk(V^d)\leq d]\leq\frac{\binom{5d}{d}\times 
q^{nd}\times q^{4d^2}}{q^{5nd}}\leq \frac{1}{q^{4nd-4d^2-5d}},$$
where the last inequality uses $\binom{5d}{d}\leq 2^{5d}\leq q^{5d}$. For $d\leq \frac{n}{2}$, we upper bound $\gbinom{n}{d}{q}$ by $q^{nd}$. This gives that
\begin{equation}\label{eq: 1 - n/2}
\gbinom{n}{d}{q}\Pr[\bA \in \Lambda(n, q)^c,\dim(\bA (U_d))\leq d]\leq 
\frac{1}{q^{3nd-4d^2-5d}}\leq \frac{1}{q^{6n-26}}.
%\in\frac{1}{q^{\Omega(n)}}.
\end{equation}% when $d=1$, $\frac{1}{q^{3nd-4d^2-5d}}$ takes the maximum.
For $\frac{n}{2}< d\leq n-2$, we upper bound $\gbinom{n}{d}{q}$ by $q^{n(n-d)}$. This gives that
\begin{equation}\label{eq:n/2 - n-2}
\gbinom{n}{d}{q}\Pr[\bA \in \Lambda(n, q)^c,\dim(\bA (U_d))\leq d]\leq 
\frac{1}{q^{5nd-n^2-4d^2-5d}}\leq \frac{1}{q^{n-6}}.
%\in\frac{1}{q^{\Omega(n)}}.
\end{equation}%when $d=n-2$, $\frac{1}{q^{5nd-n^2-4d^2-5d}}$ takes the maximum.
For $d=n-1$, we note that $\Pr[V^d\in M(n\times 5(n-1),q),rk(V^d)\leq n-1]$ is 
the 
probability that $V^d$ is not of rank $n$ when $n\geq 2$~\cite[Fact 4 in arXiv 
version]{LQ}. This gives the bound
\begin{equation}\label{eq: n-1}
\gbinom{n}{n-1}{q}\Pr[\bA \in \Lambda(n, q)^c,\dim(\bA (U_d))\leq d]\leq 
\frac{n\times n}{q^{5(n-1)-n+1}}=\frac{n^2}{q^{4(n-1)}}. 
%\in \frac{1}{q^{\Omega(n)}}.
\end{equation}

Combining equations from~\ref{eq: upper bound} to~\ref{eq: n-1}, 
we have
\begin{equation*}
\begin{split}
\Pr[\bA\in\Lambda(n, q)^c\text{ is not stable}] \leq& \sum_{\substack{U\leq\F_q^n,\\1\leq\dim(U)\leq n-1}}\Pr[\bA \in\Lambda(n, q)^c,\dim(\bA (U))\leq d]\\
\leq& \sum_{d=1}^{n-1}\gbinom{n}{d}{q}\Pr[\bA \in \Lambda(n, q)^c,\dim(\bA 
(U_d))\leq d]\leq \frac{1}{q^{\Omega(n)}},\\
\end{split}
\end{equation*}
which concludes the proof.
\end{proof}

\begin{remark}[Upgrading to the linear algebraic Erd\H{o}s-R\'enyi 
model]\label{rem:LinER}
In 
\cite{LQ}, the linear algebraic Erd\H{o}s-R\'enyi model, $\LinER(n, m, q)$, was 
introduced as the uniform distribution over all $m$-dimensional subspaces of 
$\Lambda(n, q)$. Randomly sampling $m$-tuples of $n\times n$ alternating matrices 
was termed as the naive model in \cite{LQ}. It was also shown in \cite{LQ} that 
the analysis in the naive model can be upgraded, with a mild loss in the 
parameters, to an analysis in $\LinER(n, m, q)$. Such an upgrade can also be done 
similarly for the analysis here, though with a little bit more work than in 
\cite{LQ}. We omit the details.
\end{remark}

%%%%%
\section{On testing isomorphism of groups with genus 2 radicals}
\label{sec:genus2-rad}
% !TEX root = WL.tex
In this section we show how to combine the methods of \cite{GQ} for groups with abelian radicals and the methods of \cite{BMW} 
to study subclasses of groups whose solvable radicals are $p$-groups of class 2. 
Recall that $p$-groups of class 2 are considered as difficult as the general case for group isomorphism, so we did not
expect to beat the $n^{\log n}$ bound for this entire class. However,
as a corollary of the results in this section, we give an $n^{O(\log \log n)}$-time isomorphism test for a class of groups
whose radicals have genus 2. We shall work throughout with the following class of groups:
\begin{quotation}
\noindent Let $\mathcal{G}$ be the class of groups $G$ whose solvable radical, $\Rad(G)$, is a $p$-group of exponent $p\neq 2$ and
class 2 upon which $G$ acts as inner automorphisms of $\Rad(G)$.
\end{quotation}

In \cite{GQ} the classical strategy of using actions and cohomology was 
formally analyzed, showing that \GpI ``splits'' into two problems: Action 
Compatibility (\ActComp), and Cohomology Class Isomorphism (\CohIso); we state 
their definitions in the relevant sections below. When $G$ has a normal subgroup 
$N$ 
we may consider $G$ as an extension of $N$ by 
$Q=G/N$; both \ActComp and \CohIso have as their witnesses certainly elements of 
$\Aut(N) \times \Aut(Q) \times (Q \to N)$, and two groups are isomorphic if, and only if, there 
is a witness that works simultaneously for \ActComp and \CohIso (see \cite{GQ} for 
a leisurely exposition). Furthermore, \ActComp and \CohIso each reduce to \GpI.

The two key cases to handle first are the extreme situations with regards to this 
natural splitting: semi-direct products, where the isomorphism problems reduce to 
just \ActComp; and ``central'' products (or rather, where $G/\Rad(G)$ acts 
trivially on the radical $\Rad(G)$), where the problem reduces to (nonabelian) 
\CohIso. The class $\mathcal{G}$ that we consider here is of the second type
of extreme situation. We expect the first yield to techniques in~\cite[Section~3]{GQ},
perhaps using methods to solve isometry~\cite{IQ}, but we are not yet able to
see a clear path to this case.

\subsection{Preliminaries on genus 2 groups} 
\label{sec:prelim:genus2}
We briefly recall definitions and results on the automorphism group of groups of genus 2; see \cite{BMW} for details. For any group $G$, let $Z=Z(G)$ and $G'=[G,G]$; then we define the commutator map of $G$ as $\circ_G \colon G/Z \times G/Z \to G'$. Two groups $G,H$ are \emph{isoclinic} if there are isomorphisms $\varphi\colon G / Z(G) \to H / Z(H)$ and $\hat{\varphi}\colon G' \to H'$ such that $g_1^\varphi \circ_H g_2^\varphi = (g_1 \circ_G g_2)^{\hat{\varphi}}$. When $G,H$ are nilpotent of class 2, their commutator maps are in fact $\Z$-bilinear (note that in this case $G/Z(G)$ is abelian), and the groups are isoclinic iff $\circ_G$ and $\circ_H$ are pseudo-isometric, by definition (recall \S\ref{sec:background}). Given a bilinear map $\circ\colon U \times V \to W$ ($U,V,W$ abelian groups), its \emph{centroid} is 
\begin{align*}
C(\circ) &:= \{ (\varphi, \psi, \rho) \in \End(U) \times \End(V) \times \End(W) : (\forall u \in U, v \in V)[u^{\varphi} \circ v = (u \circ v)^\rho = u \circ (v^\psi)]  \}; 
\end{align*}
the centroid is the largest ring of scalars over which $\circ$ is bilinear. A nilpotent group $G$ of class 2 is isoclinic to a direct product $H_1 \times \dotsb \times H_s$ of directly indecomposable groups; the \emph{genus} of $G$ is the maximum rank of $[H_i, H_i]$ as a $C(\circ_{H_i})$-module. Although the concept of genus is fully general, we focus on $p$-groups of exponent $p$ and class 2; in this case isoclinism and isomorphism coincide, and centrally indecomposable $p$-groups of class 2 and exponent $p$ have their centroids a finite field of characteristic $p$. For a biadditive map $\circ\colon U \times U \to V$, let $\pseudo(\circ)$ denote its group of pseudo-isometries; if $\circ$ is bilinear over a field $\F$, let $\pseudo_\F(\circ) = \pseudo(\circ) \cap (\GL_\F(U) \times \GL_\F(V))$. Given a finite field $\F$ of characteristic $p$, its \emph{Galois group} denoted $\Gal(\F)$, consists of those field automorphisms of $\F$ that act trivially on the prime subfield $\Z_p \leq \F$; $\Gal(F)$ is cyclic of order $[\F : \Z_p] = \log_p |\F|$, generated by the Frobenius automorphism $a \mapsto a^p$. 

\begin{prop}[{See, e.\,g., \cite[Prop.~2.4]{BMW}}] \label{prop:genus2-aut}
Let $P$ be a $p$-group of class 2 and exponent $p$ satisfying $Z(P) = [P,P]$. Then $\Aut(P) = \pseudo(\circ_P) \ltimes \Hom(P/Z(P), Z(P))$. If $\circ_P$ is $\F$-bilinear, then $\pseudo_\F(\circ_P) \unlhd \pseudo(\circ_P)$, with quotient $\pseudo(\circ_P) / \pseudo_\F(\circ_P) \cong \Gal(F)$.

Note that elements of $\pseudo_\F(\circ_P) \ltimes \Hom(P/Z(P), Z(P))$ are faithfully represented by matrices $\smallmat{\alpha_V & d\alpha \\ 0 & \alpha_Z}$, where $\alpha_V \in \Aut(P/Z(P))$, $\alpha_Z \in \Aut(Z(P))$, and $d\alpha\colon P/Z(P) \to Z(P)$ is linear.
\end{prop}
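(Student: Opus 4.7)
The plan is to analyze $\Aut(P)$ via its natural action on the two characteristic subquotients $P/Z(P)$ and $Z(P)=[P,P]$, both of which are $\F_p$-vector spaces because $P$ has exponent $p$. Note that exponent $p$ together with nilpotence class~$2$ forces $p$ odd, since a $2$-group of exponent $2$ is abelian. Every $\alpha\in\Aut(P)$ stabilizes $Z(P)$ (automatically) and hence induces a pair $(\alpha_V,\alpha_Z)\in\Aut(P/Z(P))\times\Aut(Z(P))$; this pair is a pseudo-isometry of $\circ_P$ because $\alpha$ preserves commutation, giving a homomorphism $\pi\colon\Aut(P)\to\pseudo(\circ_P)$.

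First I would identify $\ker\pi$: an automorphism trivial on both $P/Z(P)$ and $Z(P)$ has the form $\alpha(g)=g\cdot d(gZ(P))$ for some set map $d\colon P/Z(P)\to Z(P)$, and a direct calculation (using centrality of $Z(P)$ and multiplicativity of $\alpha$) shows $d$ must be a group homomorphism, so $\ker\pi\cong\Hom(P/Z(P),Z(P))$. For surjectivity of $\pi$ I would invoke Baer's correspondence: since $p$ is odd, $P$ can be reconstructed from the alternating form $\circ_P$ via the product $g\ast h:=g+h+\tfrac{1}{2}\circ_P(g,h)$ on $P/Z(P)\oplus Z(P)$ (after fixing a transversal). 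Any pseudo-isometry $(\alpha_V,\alpha_Z)$ manifestly preserves this formula, hence extends coordinate-wise to an automorphism of $P$, furnishing a set-theoretic section. This splits the sequence and yields $\Aut(P)=\pseudo(\circ_P)\ltimes\Hom(P/Z(P),Z(P))$.

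For the Galois statement, assume $\circ_P$ is $\F$-bilinear, so $\F$ embeds in the centroid $C(\circ_P)$. Any pseudo-isometry $(\alpha_V,\alpha_Z)$ conjugates the scalar $\F$-action on $P/Z(P)$ and $Z(P)$ into itself, and by Schur-type rigidity this induces a single ring automorphism $\sigma\in\Aut(\F)=\Gal(\F)$; the assignment $(\alpha_V,\alpha_Z)\mapsto\sigma$ is a homomorphism $\pseudo(\circ_P)\to\Gal(\F)$ whose kernel is precisely $\pseudo_\F(\circ_P)$ by definition. Surjectivity follows because any $\sigma\in\Gal(\F)$ lifts: after fixing $\F$-bases of $P/Z(P)$ and $Z(P)$, applying $\sigma$ coordinatewise gives $\F$-semilinear bijections of both spaces which preserve $\circ_P$ because $\circ_P$ is $\F$-bilinear and $\sigma$ is a ring map. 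This gives $1\to\pseudo_\F(\circ_P)\to\pseudo(\circ_P)\to\Gal(\F)\to1$.

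Finally, the matrix representation falls out: order a basis of $P/Z(P)$ before a basis of $Z(P)$; then elements of $\pseudo_\F(\circ_P)\ltimes\Hom(P/Z(P),Z(P))$ act on $P/Z(P)\oplus Z(P)$ as $\smallmat{\alpha_V & d\alpha \\ 0 & \alpha_Z}$, with $d\alpha$ encoding the Hom-component in the upper-right block. Faithfulness of this representation follows from $\ker\pi=\Hom(P/Z(P),Z(P))$ together with the fact that the action on each block is determined by $(\alpha_V,\alpha_Z,d\alpha)$. The main obstacle I anticipate is the splitting of $\pi$: while Baer's correspondence is the clean conceptual route, an alternative cocycle-theoretic argument must verify that the symmetric part of the defect $\alpha_Z\circ\omega-\omega\circ(\alpha_V\times\alpha_V)$ on a defining $2$-cocycle $\omega$ is a coboundary, which is exactly where the hypothesis that $2$ is invertible (i.e.\ $p$ odd) enters.
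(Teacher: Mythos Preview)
The paper does not prove this proposition; it is quoted from \cite{BMW} without argument, so there is no in-paper proof to compare against.  Your outline for the semidirect-product decomposition is the standard one and is correct: Baer's correspondence (available because class~$2$ and exponent~$p$ force $p$ odd) supplies the section of $\pi$, your identification of $\ker\pi$ with $\Hom(P/Z(P),Z(P))$ is right, and the block-triangular matrix picture follows immediately.

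There is, however, a genuine gap in the Galois half.  First, for the homomorphism $\pseudo(\circ_P)\to\Gal(\F)$ to be well-defined you need every $\Z_p$-pseudo-isometry to normalise the copy of $\F$ inside $\End(V)\times\End(W)$; this is automatic when $\F$ equals the \emph{centroid} of $\circ_P$ (the centroid being intrinsic to $\circ_P$), but not for an arbitrary subfield over which $\circ_P$ happens to be bilinear, so that hypothesis should be made explicit.  Second, and more seriously, your surjectivity step is incorrect as written.  Applying $\sigma$ coordinatewise in fixed $\F$-bases gives $\sigma$-semilinear bijections $\tau_V,\tau_W$, but the pseudo-isometry condition $\tau_W\bigl(\circ_P(u,v)\bigr)=\circ_P(\tau_V u,\tau_V v)$ unwinds to $\sigma(c_{ij}^k)=c_{ij}^k$ for the structure constants $c_{ij}^k\in\F$ of $\circ_P$, which fails whenever these lie outside the prime subfield; ``$\circ_P$ is $\F$-bilinear and $\sigma$ is a ring map'' is not enough.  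What your normaliser argument actually yields is an \emph{embedding} $\pseudo(\circ_P)/\pseudo_\F(\circ_P)\hookrightarrow\Gal(\F)$.  Hitting every $\sigma\in\Gal(\F)$ is equivalent to showing that $\circ_P$ is $\F$-pseudo-isometric to each of its Galois twists $\circ_P^{\sigma}$, and that requires a separate argument which you have not supplied.  For the purposes of this paper the point is moot: Observation~\ref{obs:semilinear} and the enumeration over $\GammaL_g(\F)$ in the proof of Theorem~\ref{thm:central} use only the embedding, not surjectivity.
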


Recall that a map $\alpha\colon V \to W$ of $\F$-vector spaces is $\F$-semilinear if it is additive ($\alpha(v + v') = \alpha(v) + \alpha(v')$) and it is ``twisted'' linear, that is, $\alpha(\lambda v) = \lambda^\gamma \alpha(v)$, where $\gamma \in \Gal(\F)$. From the preceding, it follows immediately that:

\begin{obs} \label{obs:semilinear}
Let $P$ be a $p$-group of class 2 and exponent $p$ such that $\circ_P$ is $\F$-bilinear. For any $\alpha \in \Aut(P)$, the induced automorphisms on $[P,P]$ and $P/[P,P]$ are both $\F$-semilinear.
\end{obs}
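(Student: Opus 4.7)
My plan is to reduce the statement to Proposition~\ref{prop:genus2-aut} by decomposing an arbitrary automorphism into its pseudo-isometry and derivation parts, showing the derivation part is invisible on the two quotients of interest, and then unpacking what the quotient $\pseudo(\circ_P)/\pseudo_\F(\circ_P) \cong \Gal(\F)$ actually says about the induced actions. I take the standing hypothesis $Z(P) = [P,P]$ from Proposition~\ref{prop:genus2-aut} to be in force (otherwise $\F$-semilinearity on $P/[P,P]$ is not even defined from $\circ_P$ alone).

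First I would write $\alpha = \beta \delta$ under $\Aut(P) = \pseudo(\circ_P) \ltimes \Hom(P/Z(P), Z(P))$, with $\beta \in \pseudo(\circ_P)$ and $\delta$ a derivation acting as $p \mapsto p \cdot \delta(pZ(P))$. Because $\delta$ takes values in $Z(P) = [P,P]$ and is trivial on cosets of $Z(P)$, it acts as the identity on both $[P,P]$ and $P/[P,P]$. Hence the induced maps of $\alpha$ on these two quotients coincide with those of $\beta$, and the problem reduces to the case where $\alpha$ is a pseudo-isometry. This step is essentially bookkeeping.

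The main content is then to translate the quotient isomorphism $\pseudo(\circ_P)/\pseudo_\F(\circ_P) \cong \Gal(\F)$ of Proposition~\ref{prop:genus2-aut} into an explicit semilinearity statement. Given $\beta \in \pseudo(\circ_P)$, let $\bar\beta \in \Aut(P/Z(P))$ and $\hat\beta \in \Aut([P,P])$ be the induced maps, so $\bar\beta(u) \circ \bar\beta(v) = \hat\beta(u \circ v)$. For any $\lambda \in \F$, conjugation by $(\bar\beta, \hat\beta)$ sends the centroid element (scalar) $\lambda$ to another element of the centroid that must again lie in $\F$, and the resulting map $\lambda \mapsto \lambda^\gamma$ is precisely the image of $\beta$ in $\Gal(\F)$ under the isomorphism. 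Unraveling what this conjugation means yields $\bar\beta(\lambda u) = \lambda^\gamma \bar\beta(u)$ and $\hat\beta(\lambda w) = \lambda^\gamma \hat\beta(w)$, which is the desired $\F$-semilinearity with a common twist $\gamma$.

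The only delicate step is the last one, which is a Skolem--Noether-type argument: pseudo-isometries act by conjugation on the centroid, and any $\Z_p$-algebra automorphism of a subfield is an element of its Galois group. Since this is precisely the content already packaged into Proposition~\ref{prop:genus2-aut}, there is no real obstacle — the observation is essentially a dictionary translation of that proposition into the language of semilinear maps on the two canonical quotients, plus the triviality of derivations on those quotients.
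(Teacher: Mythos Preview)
Your proposal is correct and is essentially the same approach as the paper, which gives no proof at all beyond the sentence ``From the preceding, it follows immediately that:'' referring to Proposition~\ref{prop:genus2-aut}. You have simply (and correctly) spelled out why it follows: kill the $\Hom(P/Z(P),Z(P))$ factor on the two quotients, then read the Galois-quotient structure of $\pseudo(\circ_P)$ as the statement that the induced maps twist $\F$-scalars by a field automorphism. Your explicit flag that $Z(P)=[P,P]$ is needed to invoke Proposition~\ref{prop:genus2-aut} is a fair point; in the paper this hypothesis is implicitly in force by the time the observation is used (via Observation~\ref{obs:center-vs-derived}).
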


\begin{obs} \label{obs:center-vs-derived}
If $P$ is a $p$-group of class 2 and exponent $p$ such that $Z(P) \neq [P,P]$, then $P \cong Q \times A$, where $Q$ is characteristic subgroup of $P$ and satisfies $Z(Q) = [Q,Q]$, and $A$ is an elementary abelian $p$-group. Moreover, $Q$ and $A$ and the isomorphism $P \cong Q \times A$ can be constructed in polynomial time in the number of generators, even when the groups are given as a black box.
\end{obs}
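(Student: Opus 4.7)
The plan is to produce the decomposition by choosing two complementary subspaces and then verifying the algebraic properties, with the algorithmic statement following immediately from the fact that everything reduces to $\F_p$-linear algebra.

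Since $P$ has exponent $p$ and class two, the subgroups $D := [P,P]$ and $Z := Z(P)$ are elementary abelian $p$-groups, hence $\F_p$-vector spaces, with $D \leq Z$; the hypothesis gives $D < Z$. I first pick a complement $A_0 \leq Z$ with $Z = D \oplus A_0$, and then pick a complement $V' \leq P/D$ to $Z/D$. Letting $Q$ be the preimage of $V'$ under $P \twoheadrightarrow P/D$, one has $P = Q \cdot A_0$ and $Q \cap A_0 = 1$, and since $A_0$ is central, this yields $P \cong Q \times A_0$. The remaining properties follow easily: because $A_0$ is central, every commutator $[q_1 a_1, q_2 a_2]$ with $q_i \in Q, a_i \in A_0$ collapses to $[q_1, q_2]$, giving $[Q,Q] = [P,P] = D$; and under $P \cong Q \times A_0$, one has $Z(P) = Z(Q) \times A_0$, which combined with $Z(P) = D \oplus A_0$ forces $Z(Q) = D = [Q,Q]$.

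For the algorithmic content, in a class-two $p$-group $D$ is generated by pairwise commutators of the generators and $Z$ is the intersection of their centralizers, both of which reduce to $\F_p$-linear algebra of polynomial size in $\log|P|$; this works even in the black-box model once one can effectively commute and test membership, which is standard for $p$-groups of bounded exponent. Choosing $A_0$, choosing $V'$, and lifting $V'$ to generators of $Q$ are then elementary linear algebra. Thus the construction of $Q$, $A_0$, and the isomorphism $P \cong Q \times A_0$ runs in polynomial time in the number of generators and $\log|P|$.

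The only delicate point is the word \emph{characteristic}: a specific subgroup $Q \leq P$ obtained by the construction above is not literally invariant under $\Aut(P)$. For instance, in $P = H \times C_p$ with $H$ the Heisenberg group, the automorphism sending $x \mapsto xa$ (and fixing the other generators) moves $H$ to a different Heisenberg complement. What is canonical is the isomorphism type of $Q$: any valid complement $A_0 \leq Z$ to $D$ has order $|Z/D|$, and the quotient $P/A_0$ (isomorphic to $Q$) depends only on $P$, since for any two valid choices $A_0, A_0'$ there is an automorphism of $P$ sending one to the other (equivalently, by a Krull--Schmidt-type uniqueness for the ``reduced'' direct factor of a finite nilpotent group). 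This invariance of the isomorphism type is all that is needed in the subsequent isomorphism-testing applications.
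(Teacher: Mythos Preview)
Your construction is correct and is essentially the same linear-algebra idea as the paper's ``standard proof sketch'': both take a complement $A$ to $[P,P]$ inside $Z(P)$ and a complementary piece $Q$ with $Q \cap Z(P) = [P,P]$. The paper builds $Q$ by selecting the non-central elements of a given generating set, whereas you build it by choosing a complement $V'$ to $Z/D$ in $P/D$ and taking its preimage; your version is cleaner and avoids the implicit assumption in the paper's sketch that the generating set is chosen so that the non-central generators already span a complement to $Z(P)/[P,P]$ (which can fail for an arbitrary generating set).

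You are also right to flag the word \emph{characteristic}: neither your $Q$ nor the paper's is invariant under $\Aut(P)$ in general, as your Heisenberg $\times\, C_p$ example shows. The paper's sketch does not address this point either. What both constructions genuinely deliver is a canonical \emph{isomorphism type} for $Q$ (namely $P/A$ for any complement $A$ to $[P,P]$ in $Z(P)$, all such $A$ being $\Aut(P)$-conjugate), and that is exactly what the downstream isomorphism test in Section~\ref{sec:genus2-rad} needs. So your reading of the statement and your fix are on target.
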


\begin{proof}[Standard proof sketch]
$Z(P) \geq [P,P]$ since $P$ is of class 2. Since $P$ is of exponent $p$, $Z(P)$ is elementary abelian, and thus is a vector space $\Z_p^e$. Let $\{g_1, \dots, g_s\}$ be a generating set of $P$. Let $Q = \langle g_i : g_i \notin Z(P) \rangle$. Then $Q \cap Z(P) = [P,P]$. Let $A$ be a $\Z_p$-linear complement to $[P,P]$ in $Z(P)$. 
\end{proof}

\begin{thm}[{\cite{BMW, IQ}}] \label{thm:extends}
Let $P$ be a $p$-group of class 2, exponent $p \neq 2$, and genus $g$. Given $\alpha \in \Aut(Z(P))$, one can test whether $\alpha$ extends to an automorphism $\hat{\alpha} \in \Aut(P)$ in poly-logarithmic time when $g \leq 2$, and in polynomial time otherwise.
\end{thm}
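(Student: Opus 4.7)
The plan is to rephrase ``$\alpha$ extends'' as an isometry question on the commutator bimap $\circ_P$, then apply Theorem~\ref{thm:isometry-algorithms}(2) for the general case, and substitute the explicit structure theory of genus-$\leq 2$ pseudo-isometry groups from \cite{BMW} for the poly-log bound.

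First I would reduce to the case $Z(P) = [P,P]$. Observation~\ref{obs:center-vs-derived} produces a decomposition $P \cong Q \times A$ with $Z(Q)=[Q,Q]$ and $A$ elementary abelian, so $Z(P) = [P,P] \oplus A$ as $\Z_p$-spaces. Writing $\alpha = \smallmat{a & b \\ c & d}$ with respect to this decomposition, any extension of $\alpha$ must preserve the characteristic subgroup $[P,P]$, so I reject unless $c=0$. Given $c=0$, any $d \in \Aut(A)$ extends trivially via $1_Q \times d$, and any off-diagonal $b \in \Hom(A,[P,P])$ is realized by a central automorphism of the form $(q,x)\mapsto (q\cdot b(x),x)$. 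So extendability of $\alpha$ is equivalent to extendability of the restriction $a$ to $\Aut(Q)$, and we may assume $Z(P) = [P,P]$.

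Under this assumption, Proposition~\ref{prop:genus2-aut} writes every $\hat\alpha \in \Aut(P)$ as $\smallmat{\alpha_V & d\alpha \\ 0 & \alpha_Z}$ with $(\alpha_V,\alpha_Z) \in \pseudo(\circ_P)$, and the off-diagonal Hom-summand fixes $Z(P)$ pointwise. Hence $\alpha$ extends if and only if it lies in the image of the projection $\pi_Z\colon \pseudo(\circ_P) \to \Aut(Z(P))$, i.e.\ iff there exists $\beta \in \Aut(P/Z(P))$ with $\alpha(u\circ_P v) = \beta(u)\circ_P \beta(v)$ for all $u,v$. Setting $\gamma(u,v) := \alpha(u\circ_P v)$, this is precisely the condition that $(\beta,\beta,1_{Z(P)})$ is an isometry from $\gamma$ to $\circ_P$. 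For arbitrary genus I invoke Theorem~\ref{thm:isometry-algorithms}(2), which (using $p\neq 2$) decides isometry of $\gamma$ and $\circ_P$ in $\poly(|P|)$ time and yields the polynomial-time part of the theorem.

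For the poly-log bound at $g\leq 2$ I would replace the black-box isometry test with the explicit description of $\pseudo(\circ_P)$ from \cite{BMW}: each centrally indecomposable factor has centroid a finite field $\F$ with $\dim_\F [P,P] \leq 2$, and its pseudo-isometry group is an identified classical group over $\F$ extended by $\Gal(\F)$. Consequently the image $K := \pi_Z(\pseudo(\circ_P)) \leq \GammaL_\F([P,P])$ is a known classical subgroup preserving a specific form computable from $\circ_P$. Testing $\alpha \in K$ then reduces to checking $\F$-semilinearity (Observation~\ref{obs:semilinear}) together with preservation of a $\leq 2\times 2$ form, a constant amount of arithmetic in $\F$ and hence $\poly(\log |P|)$ total. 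The main obstacle I expect is the situation when $P$ has several isomorphic centrally indecomposable factors, where $\alpha$ can permute them and $K$ acquires a wreath-product shape, turning membership into a small bipartite matching problem compatible with the $\Gal(\F)$-twist. I would handle this by first running the characteristic central decomposition from \cite{BMW} to group factors into isotypic classes and then performing the matching class-by-class, which stays poly-log because each individual compatibility check is constant-size and there are at most $\log|P|$ factors.
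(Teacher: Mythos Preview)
Your proposal is correct and follows the same approach as the paper: reduce the extension question to an isometry problem on $\circ_P$ and invoke \cite{IQ} for the polynomial-time bound, then appeal to the structure theory of \cite{BMW} for the poly-logarithmic bound when $g\le 2$. The paper's own proof is a two-line citation (``this is an isometry problem, solvable in polynomial time \cite{IQ}; when $g=2$ the result is immediate from \cite[Thm.~3.22]{BMW}''), whereas you have filled in the reduction to $Z(P)=[P,P]$ via Observation~\ref{obs:center-vs-derived} and spelled out why extendability is exactly membership in $\pi_Z(\pseudo(\circ_P))$; these additions are sound and make explicit what the paper leaves implicit.
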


\begin{proof}
When $g=2$, the result is immediate from \cite[Thm.~3.22]{BMW}, and their comments about its constructive nature (see \cite[\S6.2]{BMW}). In general, this is an isometry problem, which is solvable in polynomial time \cite{IQ}.
\end{proof}

\begin{thm} \label{thm:genus-iso}
Isomorphism of $p$-groups of class 2, exponent $p \neq 2$, and genus $\leq 2$ can be decided in poly-logarithmic time \cite[Thm.~1.1]{BMW}, and of genus $\leq \sqrt{\log |G|}$ can be decided in polynomial time \cite[Thm.~3]{IQ}.
\end{thm}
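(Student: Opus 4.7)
My plan is to reduce isomorphism to pseudo-isometry of the commutator bimap via Baer's correspondence, and then split into two regimes by genus. By Proposition~\ref{prop:genus2-aut}, for a $p$-group $P$ of class $2$ and exponent $p \neq 2$, the group $\Aut(P)$ is determined (modulo the easy Hom-piece) by $\pseudo(\circ_P)$ acting on $\circ_P \colon V \times V \to W$, where $V = P/Z(P)$ and $W = [P,P]$. Applying Observation~\ref{obs:center-vs-derived} first, I may assume $Z(P) = [P,P]$, in which case the genus coincides with $\dim_{C(\circ_P)} W$. Thus in both regimes, the entire problem becomes one of deciding pseudo-isometry of two alternating bimaps whose codomain has dimension bounded by the genus parameter.

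For the second part ($g \leq \sqrt{\log|G|}$, polynomial time), I would invoke Theorem~\ref{cor:isometry-algorithms}(2): since $p \neq 2$, pseudo-isometry of alternating bimaps $V \times V \to W$ over $\F$ can be tested in time $\poly(\dim V, |W|^{\dim W})$. The input $\circ_P$ is specified by $\dim_{\Z_p} W \leq g \cdot [C(\circ_P):\Z_p]$ alternating matrices of size $\dim_{\Z_p} V$. With $g \leq \sqrt{\log_p |G|}$ one has $|W|^{\dim W} \leq p^{g^2} \leq p^{\log_p |G|} = |G|$, so the whole test runs in $\poly(|G|)$. The only preprocessing needed is to strip off the abelian direct factor via Observation~\ref{obs:center-vs-derived} and to compute the centroid $C(\circ_P)$, both of which are standard polynomial-time linear-algebra tasks.

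For the first part ($g \leq 2$, poly-logarithmic time) the input size itself is $\poly(\log |G|)$ once the bimap is presented by a basis of $W$ worth of alternating matrices over $\F = C(\circ_P)$, so the goal is to avoid any $|G|^{\Omega(1)}$ brute-force step and instead use structural classification. When $g = 1$, $\circ_P$ is a single alternating form (or zero), and pseudo-isometry classes are classified by rank and $\dim V$; the pseudo-isometry group is the corresponding symplectic-plus-stabilizer group, whose canonical form is immediate. When $g = 2$, $\circ_P$ is equivalent to a pencil $\{\lambda F_1 + \mu F_2 : \lambda,\mu\in \F\}$ of alternating forms on $V$. The plan is to build a Kronecker-style canonical decomposition for such pencils: split off the singular summands (elementary Kronecker blocks arising from the kernel of the pencil map $V \to W^{*} \otimes V^{*}$), and then diagonalize the regular part using the rational/Jordan form of one generic pencil element over $\F$. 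The pseudo-isometry group then splits as a product of classical groups on each block, and the whole procedure runs in $\poly(\log p, \dim V) = \poly(\log|G|)$.

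The hard part is unquestionably the $g = 2$ poly-logarithmic analysis: one must show that each step of the pencil decomposition (in particular finding Kronecker indices and the generic Jordan structure over potentially large $\F$) can be done in time polynomial in $\log|G|$ rather than $\dim V$, and one must carefully describe the finite data that classifies each elementary block so that two such pencils can be compared combinatorially. The genus $\sqrt{\log|G|}$ case, by contrast, is essentially a one-line consequence of the existing pseudo-isometry algorithm once the input is put in the right form.
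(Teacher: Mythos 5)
The paper does not actually prove this statement: it is a citation theorem, with the genus-$\leq 2$ case delegated wholesale to \cite[Thm.~1.1]{BMW} and the genus-$\leq\sqrt{\log|G|}$ case to \cite[Thm.~3]{IQ}. Your handling of the second case is essentially the intended route --- after the reduction via Baer and Observation~\ref{obs:center-vs-derived} it is just Theorem~\ref{cor:isometry-algorithms}(2) applied to $\circ_P$ --- but note one imprecision: your bound $|W|^{\dim W}\leq p^{g^2}$ silently assumes $C(\circ_P)=\Z_p$. If the centroid is $\F_{p^k}$ with $k>1$, then the genus is $\dim_{\F}W=g$ while the enumeration of $\GL(W)$ (or of $\GammaL_g(\F)$) costs roughly $p^{kg^2}=|W|^{g}$, which is $|G|^{\Omega(g)}$ and is not polynomial when $g\sim\sqrt{\log|G|}$. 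The hypothesis that actually powers \cite[Thm.~3]{IQ} is a bound on $\dim_{\Z_p}[P,P]$ (equivalently on $|W|$ over the prime field), so you should either phrase the claim that way or restrict to trivial centroid; as written, your arithmetic does not cover the general genus-$g$ situation the theorem nominally allows.

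The genuine gap is the genus-$2$ poly-logarithmic claim. What you give there is a program --- a Kronecker-style canonical decomposition of the pencil $\{\lambda F_1+\mu F_2\}$ of alternating forms, splitting off singular blocks and diagonalizing the regular part, then comparing block data --- and you yourself defer ``the hard part,'' namely doing all of this in time $\poly(\log|G|)$ and extracting comparable finite invariants. That hard part is not a routine verification: it is the main content of \cite{BMW}, which rests on the Kronecker/Scharlau classification of pairs of alternating forms over finite fields into flat and sloped indecomposables, a description of the full pseudo-isometry group (needed, e.g., for Theorem~\ref{thm:extends} and the automorphism-group applications in Section~\ref{sec:genus2-rad}, not just an isomorphism decision), careful treatment of the centroid and its Galois twists, and an implementation that works from a $\poly(\log|G|)$-size presentation rather than the Cayley table. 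None of this is supplied, so as a self-contained argument the proposal is incomplete at exactly the point where the theorem has content; the move that matches the paper is simply to cite \cite[Thm.~1.1]{BMW} for this case.
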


\subsection{Testing isomorphism in the class $\mathcal{G}$}

Our goal in this final section is to prove Theorem~\ref{thm:central}, which for convenience we now recall:
\begin{thm-central}
Let $\mathcal{G}$ be the class of groups $G$ defined at the start of Section~\ref{sec:genus2-rad}. 
%$\Rad(G)$ a $p$-group of class 2, exponent $p \neq 2$, and such that the induced outer action of $G/\Rad(G)$ on $\Rad(G)$ is trivial. 
Given groups $G_1, G_2$ of order $n$, it can be decided in $\poly(n)$ time if they lie in $\mathcal{G}$. 
If so, isomorphism can be decided, and a generating set for $\Aut(G_i)$ found, in time $n^{O(g + \log \log n)}$, where $g$ is the genus of $\Rad(G)$.
\end{thm-central}

We will need the following two results from Grochow--Qiao \cite{GQ}, which first require a few concepts we haven't yet discussed. 
Recall that a pair of subgroups $H_1,H_2 \leq G$ is a \emph{central decomposition} of $G$ if $\langle H_1,H_2 \rangle = G$ and $[H_1,H_2]=1$. 
Given two groups $M_1, M_2$ and an isomorphism 
$\varphi\colon Y_1 \to Y_2$ between two subgroups $Y_i \leq Z(M_i)$, the quotient of $M_1 \times M_2$ by $\{(y^{-1}, \varphi(y)) : y \in Y_1\}$ is the \emph{central product} of $M_1$ and $M_2$ along $\varphi$, denoted $M_1 \times_{\varphi} M_2$, and $\varphi$ is called the \emph{amalgamating map}. In this case, $\{M_1,M_2\}$ is a central decomposition of $M_1 \times_\varphi M_2$; conversely, if $\{H_1,H_2\}$ is a central decomposition of a group 
$G$, then there exist $Y_i \leq Z(H_i)$ and an isomorphism $\varphi\colon Y_1 \to Y_2$ such that $G \cong H_1 \times_\varphi H_2$. 

\begin{lem}[{\cite[Lem.~3.10]{GQ}}] \label{lem:central}
Let $N \unlhd G$, and suppose $G$ acts on $N$ as inner automorphisms of $N$.
 Then there is a subgroup $H \leq G$, constructible in time  $\poly(|G|)$, such that $H \cap N = Z(N)$, $H / N = Q$, 
 and $\{N, H\}$ is a central decomposition of $G$. We denote this subgroup $H$ by $G|_{Z(N)}$.
\end{lem}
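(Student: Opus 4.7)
The plan is to take $H := C_G(N)$, the centralizer of $N$ in $G$, and to show that the hypothesis on inner action forces this subgroup to satisfy every requirement of the lemma. Note that $H \cap N = C_G(N) \cap N = Z(N)$ and $[H,N] = 1$ are immediate from the definition of the centralizer, so the only non-trivial structural claim is that $HN = G$, and the only non-trivial algorithmic claim is that $H$ can be built in $\poly(|G|)$ time. (The claim $H/N = Q$ should be read, via the second isomorphism theorem, as the statement $H/(H\cap N) = H/Z(N) \cong HN/N = G/N = Q$ once $HN = G$ is established.)

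First I would unpack the inner-action hypothesis. For every $g \in G$, the conjugation map $c_g \colon n \mapsto g n g^{-1}$ is an inner automorphism of $N$, so there exists $n_g \in N$ with $g n g^{-1} = n_g n n_g^{-1}$ for all $n \in N$. Rewriting, the element $h_g := n_g^{-1} g$ satisfies $h_g n = n h_g$ for all $n \in N$, i.e.\ $h_g \in C_G(N) = H$. Hence $g = n_g h_g \in NH$, which gives $G = NH$ and therefore that $\{N, H\}$ is a central decomposition of $G$. Combined with $H \cap N = Z(N)$ and $[N,H]=1$, this finishes the structural half of the lemma.

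For the algorithmic half, computing $C_G(N)$ in the dense (Cayley table) model is straightforward: pick a generating set $n_1, \ldots, n_k$ of $N$ of size $k = O(\log |N|)$ (available in $\poly(|G|)$ time, e.g., by a greedy construction), then enumerate $g \in G$ and test whether $g n_i = n_i g$ for each $i$; the set of elements passing all tests is exactly $H$. This runs in $O(|G| \cdot k)$ time, which is $\poly(|G|)$. From this set one also reads off generators of $H$ and verifies $H \cap N = Z(N)$ by intersection; no further cleverness is needed.

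The only conceptual subtlety worth flagging is that the hypothesis ``$G$ acts on $N$ as inner automorphisms of $N$'' must be read as asserting that each individual conjugation $c_g$ is inner in $N$, not merely that the induced outer action $G/N \to \Out(N)$ factors through $\Inn(N)$---the latter would only guarantee $G = N \cdot C_G(N) \cdot Z(G)$ or similar, whereas we need $G = N \cdot C_G(N)$ exactly. Under the stated (and naturally intended) reading, there is essentially no obstacle; the entire argument is a one-line application of the centralizer construction, and the ``hard'' part is really only the bookkeeping to see that $C_G(N)$ is the right subgroup, which the inner-action hypothesis delivers for free.
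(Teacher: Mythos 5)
Your proof is correct and is essentially the argument behind the cited result: the paper itself does not reprove this lemma but imports it from Grochow--Qiao, where $H$ is exactly the centralizer $C_G(N)$ and the identity $g = n_g\,(n_g^{-1}g)$ with $n_g^{-1}g \in C_G(N)$ is the whole content. One small remark: the ``subtlety'' you flag at the end is vacuous, since the induced map $G/N \to \Out(N)$ being trivial is literally equivalent to every conjugation $c_g$ being inner on $N$ (the kernel of $G \to \Out(N)$ is $\{g : c_g|_N \in \Inn(N)\}$), so the two readings you distinguish coincide.
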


\begin{prop}[{Special case of \cite[Prop.~3.13]{GQ}}] \label{prop:central_amalgam}
Let $G_i$ ($i=1,2$) be a group such that $\Rad(G_i)=P$ is a $p$-group of class 2, exponent $p$, and genus 2, and such that $Q=G_i / \Rad(G_i)$ acts on $\Rad(G_i)$ by inner automorphisms of $\Rad(G_i)$. Suppose that $G_1|_{Z(P)} \cong G_2|_{Z(P)}$ (as in Lem.~\ref{lem:central}), which we denote by $\hat{Q}$, and let $\varphi_i \colon Z(P) \to Z(\hat{Q})$ be the corresponding amalgamating maps. Then $G_1 \cong G_2$ iff there exist $(\alpha,\beta) \in \Aut(P) \times \Aut(\hat{Q})$ such that $\varphi_1 = \beta^{-1}|_{Z(\hat{Q})} \circ \varphi_2 \circ \alpha|_{Z(P)}$. 
\end{prop}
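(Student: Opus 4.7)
The plan is to realize each $G_i$ as an internal central product $P \cdot \hat{Q}_i$ with $P \cap \hat{Q}_i = Z(P)$, and then track what any abstract isomorphism must do to the distinguished subgroups $P$ and $\hat{Q}_i$. Since $G_i$ acts on $P = \Rad(G_i)$ by inner automorphisms of $P$, for every $g \in G_i$ there is some $n_g \in P$ realizing the same conjugation on $P$, so $g n_g^{-1} \in C_{G_i}(P)$; hence $G_i = P \cdot C_{G_i}(P)$ with $C_{G_i}(P) \cap P = Z(P)$. One therefore may take $\hat{Q}_i := G_i|_{Z(P)} = C_{G_i}(P)$, which is \emph{characteristic} in $G_i$ as the centralizer of the characteristic subgroup $\Rad(G_i) = P$. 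The amalgamating map becomes $\varphi_i \colon Z(P) \hookrightarrow \hat{Q}_i \xrightarrow{\iota_i} \hat{Q}$, where $\iota_i \colon \hat{Q}_i \to \hat{Q}$ is the fixed identification, and it lands in $Z(\hat{Q})$ since $\hat{Q}_i$ centralizes $P \supseteq Z(P)$.

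For the backward direction, given $(\alpha, \beta) \in \Aut(P) \times \Aut(\hat{Q})$ satisfying $\varphi_1 = \beta^{-1}|_{Z(\hat{Q})} \circ \varphi_2 \circ \alpha|_{Z(P)}$, define
\[
\Psi(p \cdot q) := \alpha(p) \cdot \iota_2^{-1}\bigl(\beta(\iota_1(q))\bigr) \qquad (p \in P,\; q \in \hat{Q}_1).
\]
The amalgamation equation is precisely the statement that $\alpha$ and $\iota_2^{-1}\beta\iota_1$ agree on the overlap $P \cap \hat{Q}_1 = Z(P)$, so $\Psi$ is well-defined on products (any two factorizations of a given element differ by an element of $Z(P)$); because $[P, \hat{Q}_i] = 1$ in $G_i$, one then checks that $\Psi$ is a homomorphism, and bijectivity is immediate from $\alpha$ and $\beta$ being bijections.

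Conversely, let $\Psi \colon G_1 \to G_2$ be any isomorphism. Since both $P = \Rad(G_i)$ and $\hat{Q}_i = C_{G_i}(P)$ are characteristic in $G_i$, $\Psi$ restricts to $\alpha := \Psi|_P \in \Aut(P)$ and to an isomorphism $\Psi|_{\hat{Q}_1} \colon \hat{Q}_1 \to \hat{Q}_2$, which via the fixed identifications yields $\beta := \iota_2 \circ \Psi|_{\hat{Q}_1} \circ \iota_1^{-1} \in \Aut(\hat{Q})$. To verify the amalgamation equation, evaluate $\Psi$ on any $y \in Z(P) = P \cap \hat{Q}_1$ in two ways: viewing $y$ inside $P$ gives $\Psi(y) = \alpha(y)$, while viewing $y$ inside $\hat{Q}_1$ gives $\iota_2(\Psi(y)) = \beta(\iota_1(y)) = \beta(\varphi_1(y))$. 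Combining the two yields $\varphi_2(\alpha(y)) = \beta(\varphi_1(y))$, which rearranges to $\varphi_1(y) = \beta^{-1}(\varphi_2(\alpha(y)))$, as claimed.

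The only nontrivial point is the identification of $\hat{Q}_i$ with the canonical characteristic subgroup $C_{G_i}(P)$, which is what the inner-action hypothesis buys us; everything else is a diagram chase between the central-product presentation of $G_i$ and the $(\alpha, \beta)$-data on the two factors. I do not anticipate any genuine obstacle—both implications are essentially forced by the characteristicness of $P$ and $\hat{Q}_i$ together with the definition of central product along an amalgamating map.
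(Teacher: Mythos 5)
Your proof is correct. Note that the paper does not actually prove this proposition---it is imported wholesale as a special case of \cite[Prop.~3.13]{GQ}---so there is no in-text argument to compare against; what you have written is a correct, self-contained derivation of the cited fact in the situation at hand. The one step that deserves the scrutiny you gave it is the identification $G_i|_{Z(P)} = C_{G_i}(P)$: Lemma~\ref{lem:central} only asserts the existence of \emph{some} $H$ with $H\cap P = Z(P)$, $HP=G_i$, and $[P,H]=1$, but any such $H$ satisfies $H \leq C_{G_i}(P)$, and then Dedekind's modular law gives $C_{G_i}(P) = C_{G_i}(P)\cap HP = H\,(C_{G_i}(P)\cap P) = H\,Z(P) = H$, so the identification is forced and $\hat{Q}_i$ is characteristic as you claim. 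With that in hand, the forward direction is exactly the observation that an isomorphism must respect the two characteristic factors and hence their intersection, and the backward direction is the standard universal property of the central product: your computation that the amalgamation identity $\varphi_1 = \beta^{-1}|_{Z(\hat{Q})}\circ\varphi_2\circ\alpha|_{Z(P)}$ unwinds to $\alpha(y) = \iota_2^{-1}(\beta(\iota_1(y)))$ on $Z(P)$ is precisely the well-definedness condition for $\Psi$, and the homomorphism property follows from $[P,\hat{Q}_i]=1$ on both sides. The genus-$2$ and exponent-$p$ hypotheses play no role here, correctly so---they are used elsewhere in Section~\ref{sec:genus2-rad}, not in this proposition.
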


\begin{prop}[{See \cite[\S6.1.2, p.~1186]{GQ}}] \label{prop:coho-iso}
Let $G$ be a group with $\Rad(G) = Z(G)$, and let $Q = G/Z(G)$ an elementary abelian group. Given $\beta \in \Aut(Q)$, one can compute in $\poly \dim Z(G)$ time a single $\alpha \in \Aut(Z(G))$ and a basis of a linear subspace $L \subseteq \End(Z(G))$ such that $(\beta,\gamma) \in \Aut(G)$ iff $\gamma \in \alpha + L$.
\end{prop}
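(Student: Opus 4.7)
The plan is to encode $G$ as a central extension of $Q$ by $Z := Z(G)$ and turn the lifting condition for $(\beta,\gamma)$ into an efficient linear system. Fix a set-theoretic section $s\colon Q \to G$ with $s(1)=1$; each $g \in G$ is uniquely $s(q)z$ with $q \in Q$ and $z \in Z$, and the multiplication is governed by the 2-cocycle $\omega\colon Q \times Q \to Z$ defined by $s(q_1)s(q_2)=s(q_1q_2)\,\omega(q_1,q_2)$. Any homomorphism $\phi\colon G \to G$ lifting a given $(\beta,\gamma) \in \Aut(Q)\times \End(Z)$ must take the form $\phi(s(q)z) = s(\beta(q))\,f(q)\,\gamma(z)$ for some $f\colon Q \to Z$; writing out $\phi(ab)=\phi(a)\phi(b)$ yields the cohomological equation
\[
\gamma\bigl(\omega(q_1,q_2)\bigr) - \omega(\beta(q_1),\beta(q_2)) = f(q_1 q_2) - f(q_1) - f(q_2).
\]
Hence $(\beta,\gamma)$ lifts iff $\gamma_*\omega - \beta^*\omega$ is a 2-coboundary, where $\gamma_*$ and $\beta^*$ denote the induced push-forward on the values and pull-back on the arguments.

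To avoid an exponential-in-$\dim Q$ blow-up when treating $\omega$ as a full cochain, I would use that $Q$ is elementary abelian of odd prime exponent $p$: with trivial action, $H^2(Q, Z) \cong \Hom(\Lambda^2 Q, Z) \oplus \Hom(Q, Z)$ canonically, and the class $[\omega]$ is captured by the commutator bilinear map $\kappa\colon Q \wedge Q \to Z$ induced by commutation in $G$, together with the $p$-th power map $\pi(q)=s(q)^p \in Z$. Both $\kappa$ and $\pi$ are readable directly from $G$ in $\poly(\dim Z, \dim Q)$ time. The compatibility condition $\beta^*[\omega] = \gamma_*[\omega]$ then reduces to the $\mathbb{F}_p$-linear systems
\[
\gamma\circ\kappa = \kappa\circ(\beta\wedge\beta), \qquad \gamma\circ\pi = \pi\circ\beta,
\]
amounting to $O((\dim Q)^2\,\dim Z)$ equations on the $(\dim Z)^2$ entries of $\gamma\in\End(Z)$. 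Gaussian elimination returns either infeasibility (in which case $\beta$ does not lift) or an affine solution set $\alpha_0 + L$ in $\End(Z)$, in $\poly(\dim Z)$ total time.

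To guarantee $\alpha \in \Aut(Z)$, I would identify $L$ structurally as the image of the restriction map $\Aut_{\mathrm{id}_Q}(G) \to \End(Z)$: automorphisms of $G$ inducing the identity on $Q$ are precisely the maps $g \mapsto g\cdot c(g)$ for homomorphisms $c\colon G/G' \to Z$, which restrict on $Z$ to $\mathrm{id}_Z + c|_Z$. Composing on the left by any invertible particular solution shows that $\alpha_0 + L$ is (identifiable with) a coset of the subgroup $\Aut_{\mathrm{id}_Q}(G)|_Z$ of $\Aut(Z)$, hence lies entirely in $\Aut(Z)$ as soon as it meets $\Aut(Z)$ at a single point. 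A handful of determinant tests on translates $\alpha_0 + \ell$ with $\ell \in L$ (or, when the calling algorithm already supplies a lift $\phi_0 \in \Aut_\beta(G)$, simply $\alpha := \phi_0|_Z$) produces a suitable $\alpha$ whenever one exists.

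The main obstacle will be the clean verification that the affine set produced by pure linear algebra in $\End(Z)$ coincides exactly with $\{\gamma : (\beta,\gamma)\in\Aut(G)\}$. The forward direction (restrictions of automorphisms satisfy the cocycle equation) is immediate from the derivation; for the reverse one must, given any cocycle-compatible $\gamma$ and trivializing cochain $f$, reconstruct a genuine automorphism $\phi$ of $G$ and check bijectivity, which in this finite setting is equivalent to $\gamma \in \Aut(Z)$. Carefully managing the interplay between the additive structure on $\End(Z)$ inherited from the linear algebra and the multiplicative structure on $\Aut(Z)$ is standard central-extension bookkeeping but accounts for most of the technical labor.
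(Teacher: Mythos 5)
Your cohomological skeleton matches the cited argument of \cite{GQ}: write $G$ via a section and $2$-cocycle $\omega$, observe that $(\beta,\gamma)$ lifts iff $\gamma_*\omega-\beta^*\omega$ is a coboundary, note that this condition is linear in $\gamma$, and solve by Gaussian elimination to obtain an affine set $\alpha+L\subseteq\End(Z(G))$. But the device you use to make this polynomial---the decomposition $H^2(Q,Z)\cong\Hom(\Lambda^2Q,Z)\oplus\Hom(Q,Z)$ with the class read off from the commutator and $p$-th power maps---rests on taking ``$Q=G/Z(G)$ elementary abelian'' literally, and that hypothesis is vacuous: if $G/Z(G)$ is abelian then $G$ is nilpotent, hence solvable, hence $\Rad(G)=G$, so $\Rad(G)=Z(G)$ forces $G$ abelian and $Q=1$. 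In the one place the proposition is invoked (the proof of Theorem~\ref{thm:central}), $G=\hat Q$ has $Z(\hat Q)=\Rad(\hat Q)$ elementary abelian and $Q=\hat Q/Z(\hat Q)$ \emph{semisimple}; the phrase ``elementary abelian'' is describing $Z(G)$, not $Q$. For semisimple $Q$ your K\"unneth-type description of $H^2$ is unavailable; one instead stores $\omega$ as an explicit table of size $|Q|^2\dim Z(G)$ and does linear algebra on the full cochain groups, which costs $\poly(|G|)$ (the stated bound $\poly(\dim Z(G))$ is an artifact of the compressed quotation; $\poly(n)$ is what the application actually uses).

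The second, independent gap is your invertibility argument. The affine set cut out by the linear system is the set of $\gamma$ for which $(\beta,\gamma)$ lifts to an \emph{endomorphism}, and it need not lie inside $\GL(Z(G))$ even when it meets it: for the split extension $G=Q\times\Z_p$ the cocycle condition is empty, so $\alpha+L=\End(\Z_p)$, yet $\gamma=0$ lifts only to a non-bijective endomorphism. Your identification of $L$ with $\{c|_{Z}:c\in\Hom(G/G',Z)\}$ is correct, but the image of $\Aut_{\mathrm{id}_Q}(G)$ in $\End(Z(G))$ is only $(\mathrm{id}+L)\cap\GL(Z(G))$---a coset of a group under \emph{composition}, not an additive affine subspace---so multiplying a particular invertible solution $\alpha_0$ by this group yields $(\alpha_0+L)\cap\GL(Z(G))$, not all of $\alpha_0+L$. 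Hence the step ``lies entirely in $\Aut(Z)$ as soon as it meets $\Aut(Z)$ at a single point'' is false. The defensible conclusion of your computation (and the form in which the proposition is actually usable) is that $(\beta,\gamma)\in\Aut(G)$ iff $\gamma\in(\alpha+L)\cap\Aut(Z(G))$; locating an invertible point of $\alpha+L$ is a genuinely separate task that the $g\mapsto g\,c(g)$ argument does not discharge.
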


\begin{proof}[Proof of Thm.~\ref{thm:central}]
Let $G_1, G_2$ be groups satisfying the hypotheses. In $\poly(|G|)$ time, find $\Rad(G_i)$ and denote this by $P_i'$. By Lem.~\ref{lem:central}, construct $\hat{Q}_i = G_i|_{Z(P_i')}$ and the amalgamating maps $\varphi_i'\colon Z(P_i') \to Z(\hat{Q}_i)$. Using Thm.~\ref{thm:genus-iso} \cite{BMW, IQ}, decide whether $P_1' \cong P_2'$; if not, then $G_1 \not\cong G_2$ and we can stop, and if so, then let $\rho'\colon P_1' \to P_2'$ be such an isomorphism. 

Note (Observation~\ref{obs:center-vs-derived}) that it may be the case that $P_i' \cong P_i \times A_i$ for some abelian groups $A_i$; if this is the case, we can find $P_i$ and $A_i$ such that $Z(P_i) = [P_i, P_i]$ in polynomial time. Replace $P_i'$ by $P_i$ and replace $\rho'$ by $\rho := \rho'|_{P_1}$; this will not hurt us later because $P_i$ is characteristic in $P_i'$, and therefore also in $G_i$. Intuitively, the only place that $A_i$ interacts with $P_i'$ is as a direct product, and the only way $A_i$ interacts with $\hat{Q_i}$ is as a subgroup of its center, where $A_i$ still appears. 

Next, since $\hat{Q}_i$ is a group with $\Rad(\hat{Q}_i) \leq Z(\hat{Q}_i)$, by \cite{GQ} we can decide whether $\hat{Q}_1 \cong \hat{Q}_2$ in time $n^{O(\log \log n)}$; if not, then $G_1 \not\cong G_2$ and we can stop, and if so, let $\tau\colon \hat{Q}_1 \to \hat{Q}_2$ be such an isomorphism. Let $\varphi_1 = \varphi_1'$ and $\varphi_2 = \tau^{-1} \circ \varphi_2' \circ \rho^{-1}$. These are both isomorphisms $Z(P_1) \to Z(\hat{Q}_1)$, so from now on we let $P = P_1$ and $\hat{Q} = \hat{Q}_1$, and we have $G_i \cong P \times_{\varphi_i'} \hat{Q}$ for $i=1,2$.

Now, by Proposition~\ref{prop:central_amalgam}, $G_1 \cong G_2$ iff there exists $(\alpha,\beta) \in \Aut(P) \times \Aut(\hat{Q})$ such that 
\begin{equation} \label{eqn:central-iso}
\varphi_1' = \beta^{-1}|_{Z(\hat{Q})} \circ \varphi_2' \circ \alpha|_{Z(P)}.
\end{equation}
By Observation~\ref{obs:semilinear}, $\alpha|_{Z(P)}$ is $\F$-semilinear, and since $P$ has genus $g$, $Z(P) \cong \F^g$. Enumerate $\GammaL(\F^g)$; for each $\alpha \in \GammaL(\F^g)$, check whether $\alpha$ extends to an automorphism of $P$ (Theorem~\ref{thm:extends} \cite{BMW, IQ}). Let $Q = \hat{Q} / Z(\hat{Q}) = \hat{Q} / \Rad(\hat{Q})$. Enumerate $\gamma \in \Aut(Q)$. For each $\alpha \in \GammaL(\F^g)$ that extends to an automorphism of $P$, and each $\gamma \in \Aut(Q)$, we seek $\beta \in \Aut(Z(\hat{Q}))$ such that $(\gamma,\beta)$ induces an automorphism of $\hat{Q}$ and $(\alpha,\beta)$ satisfies (\ref{eqn:central-iso}). By Proposition~\ref{prop:coho-iso}, the set of such $\gamma$ such that $(\gamma,\beta)$ is an automorphism of $\hat{Q}$ is an affine linear space $\beta_0 + B$, where $B$ is a linear subspace of $\End(Z(\hat{Q}))$, and we can compute $\gamma_0$ and a basis for $B$ in polynomial time. Once $\alpha$ is fixed, (\ref{eqn:central-iso}) is linear in $\beta$. Intersecting the linear space which solves (\ref{eqn:central-iso}) with the affine space $\beta_0 + B$ is standard linear algebra, and can thus be computed in polynomial time. 

To summarize, for each $\alpha \in \Aut_\F(Z(P)) \cong \GammaL_g(\F)$ and each $\gamma \in \Aut(Q)$, we can compute a single element and generating set for those $\beta$ such that $\alpha$ extends to an automorphism $P$, $(\beta,\gamma) \in \Aut(\hat{Q})$, and $(\alpha,\beta)$ satisfy (\ref{eqn:central-iso}). Taking the union over all choices in $\GammaL_g(\F)$ and $\Aut(Q)$ gives us the coset of isomorphisms $G_1 \to G_2$.

\textit{Analysis of running time.} When $g \leq O(\log \log |G|)$, we have $|\GammaL_g(\F)| \sim |\Gal(\F)| \cdot \F^{g^2} = k(p^k)^{g^2} = k (p^{kg})^g = k |Z(P)|^g \leq |G|^{g + o(1)}$ where $|G| \geq |\F| = p^k$, so their number is not too large, and $\GammaL_g(\F)$ is easily enumerated in $n^{O(g)}$ time. By \cite{BCGQ}, $\Aut(Q)$ can be listed in time $n^{O( \log \log n)}$. Since we are enumerating over both of these, we take their product $n^{O(g + \log\log n)}$, which ends up dominating the runtime. By \cite{GQ}, isomorphism of $\hat{Q}_1$ and $\hat{Q}_2$ can be tested in $n^{O( \log \log n)}$ time. The rest is polynomial time or poly-logarithmic time by previous results, or linear algebra (poly-logarithmic time in $|G|$). 
\end{proof}

\begin{remark}
There is some hope
when $g \leq 2$ in Theorem~\ref{thm:central}---due to the poly-logarithmic isomorphism test of \cite{BMW}---to improve this poly-logarithmic time.
However, a prerequisite is first solving isomorphism of groups with no abelian normal subgroups in poly-logarithmic time, rather than just polynomial \cite{BCQ}.
\end{remark}

\section*{Acknowledgments}
The authors would like to acknowledge V. Arvind and M. Grohe for useful comments 
on hypergraph $k$-WL, Avinoam Mann for discussions on random 
generation of $p$-groups, and L\'aszl\'o Babai and Xiaorui Sun for discussions on 
average-case algorithms for testing isomorphism of $p$-groups of class $2$ and 
exponent $p$.
P.~A.~B. was partially supported by NSF grant DMS-1620362. 
J.~A.~G. was partially supported by NSF grant DMS-1750319.
Y.~L. was partially supported by ERC Consolidator Grant 615307-QPROGRESS.
Y.~Q. was partially supported by the Australian Research Council DECRA
DE150100720.
J.~B.~W. was partially supported by NSF grant DMS-1620454.
P.~A.~B. and J.~B.~W. also acknowledge the Hausdorff Institute for Mathematics, and the University of Auckland where some of this research was conducted. 
P.~A.~B., J.~A.~G., J.~B.~W., and Y.~Q. also acknowledge the Santa Fe Institute, where some of this research was conducted.

\bibliographystyle{alpha}
\bibliography{reference}

\end{document}